
  \documentclass[a4paper,10pt]{article}

\usepackage{floatflt}

\usepackage{amsmath}

\usepackage{a4wide,url}
\usepackage{pdfsync} 
\usepackage{microtype}
\usepackage{mathrsfs}
\usepackage{amsmath}
\usepackage{amssymb}
\usepackage{amsthm}
\usepackage{stmaryrd}
\usepackage{xspace}
\usepackage{color}
\usepackage{proof}


\newcommand{\midd}{\; \; \mbox{\Large{$\mid$}}\;\;}


\newcommand{\BB}{\mathbb{B}}
\newcommand{\NN}{\mathbb{N}}

\renewcommand{\SS}{\mathbb{S}}

\newcommand{\natone}{n}
\newcommand{\nattwo}{m}
\newcommand{\natthree}{p}
\newcommand{\natfour}{r}

\newcommand{\bitone}{a}
\newcommand{\bittwo}{b}
\newcommand{\bitthree}{c}

\newcommand{\funone}{f}

\newcommand{\setone}{X}

\newcommand{\funsetone}{\mathcal{X}}

\newcommand{\charone}{a}


\newcommand{\stsone}{Q}
\newcommand{\stone}{q}

\newcommand{\mdtone}{\mathcal{M}}


\newcommand{\sysone}{\mathsf{S}}
\newcommand{\expone}{A}
\newcommand{\exptwo}{B}
\newcommand{\jsone}{\mathcal{S}}
\newcommand{\funcone}{F}
\newcommand{\functwo}{G}
\newcommand{\functhree}{H}
\newcommand{\indrul}[1]{I_{#1}}
\newcommand{\coindrul}[1]{C_{#1}}
\newcommand{\powset}[1]{\mathcal{P}(#1)}
\newcommand{\cd}[1]{\mathbb{C}_{#1}}

\newcommand{\LLwotfin}{\ell\Lambda}
\newcommand{\LLwot}{\ell\Lambda_{\infty}}
\newcommand{\LLLwot}{\ell\Lambda_{\infty}^{\mathsf{4S}}}
\newcommand{\Lwot}{\Lambda_{\infty}}
\newcommand{\Lwotp}[3]{\Lambda_{#1#2#3}}

\newcommand{\Haskell}{\textsf{Haskell}}
\newcommand{\ML}{\textsf{ML}}

\newcommand{\SLL}{\textsf{SLL}}
\newcommand{\FLL}{\textsf{4LL}}

\newcommand{\functione}{\mathcal{F}}

\newcommand{\judgone}{J}
\newcommand{\judgtwo}{H}


\newcommand{\termone}{M}
\newcommand{\termtwo}{N}
\newcommand{\termthree}{L}
\newcommand{\termfour}{P}
\newcommand{\termfive}{Q}
\newcommand{\termsix}{R}
\newcommand{\termseven}{S}
\newcommand{\termeight}{X}
\newcommand{\termnine}{Y}
\newcommand{\termten}{Z}
\newcommand{\termeleven}{W}
\newcommand{\varone}{x}
\newcommand{\vartwo}{y}
\newcommand{\varthree}{z}

\newcommand{\varsetone}{\mathcal{V}}
\newcommand{\FV}[1]{\mathit{FV}(#1)}
\newcommand{\sbst}[3]{#1\{#2/#3\}}

\newcommand{\conone}{\Gamma}
\newcommand{\contwo}{\Delta}
\newcommand{\conthree}{\Sigma}
\newcommand{\lconone}{\Theta}
\newcommand{\lcontwo}{\Xi}
\newcommand{\lconthree}{\Psi}
\newcommand{\lconfour}{\Phi}
\newcommand{\liconone}{\Upsilon}
\newcommand{\licontwo}{\Pi}
\newcommand{\emcon}{\emptyset}
\newcommand{\sjudg}[1]{\vdash #1}
\newcommand{\tjudg}[2]{#1\vdash #2}
\newcommand{\ptjudg}[3]{#1\vdash_{#2} #3}
\newcommand{\dptjudg}[4]{#1\vdash_{#2}^{#3} #4}
\newcommand{\envleq}{\prec}

\newcommand{\LLvarl}{(\mathsf{vl})}
\newcommand{\LLvari}{(\mathsf{vi})}
\newcommand{\LLvarc}{(\mathsf{vc})}
\newcommand{\LLvard}{(\mathsf{vd})}
\newcommand{\LLvara}{(\mathsf{va})}
\newcommand{\LLapp}{(\mathsf{a})}
\newcommand{\LLlaml}{(\mathsf{ll})}
\newcommand{\LLlami}{(\mathsf{li})}
\newcommand{\LLlamc}{(\mathsf{lc})}
\newcommand{\LLlmi}{(\mathsf{mi})}
\newcommand{\LLlmc}{(\mathsf{mc})}

\newcommand{\Lvar}{(\mathsf{v})}
\newcommand{\Lapp}{(\mathsf{a})}
\newcommand{\Llam}{(\mathsf{l})}
\newcommand{\Lfuni}{(\mathsf{fi})}
\newcommand{\Lfunc}{(\mathsf{fc})}
\newcommand{\Largi}{(\mathsf{ai})}
\newcommand{\Largc}{(\mathsf{ac})}
\newcommand{\Llami}{(\mathsf{li})}
\newcommand{\Llamc}{(\mathsf{lc})}

\newcommand{\patone}{p}

\newcommand{\funsym}{\mathbf{F}}
\newcommand{\argsym}{\mathbf{A}}
\newcommand{\lamsym}{\mathbf{L}}

\newcommand{\isym}{\downarrow}
\newcommand{\csym}{\uparrow}
\newcommand{\dsym}{\#}
\newcommand{\bsym}{!}
\newcommand{\asym}{\updownarrow}
\newcommand{\psym}[1]{\updownarrow_{#1}}

\newcommand{\ap}[2]{#1#2}
\newcommand{\ab}[2]{\lambda #1.#2}
\newcommand{\la}[2]{\lambda #1.#2}
\newcommand{\ia}[2]{\lambda\!\isym\!#1.#2}
\newcommand{\ca}[2]{\lambda\!\csym\!#1.#2}
\newcommand{\pa}[3]{\lambda\!\psym{#1}\!#2.#3}
\newcommand{\na}[2]{\lambda\bsym #1.#2}
\newcommand{\im}[1]{\isym\!#1}
\newcommand{\cm}[1]{\csym\!#1}
\newcommand{\dm}[1]{\dsym\!#1}
\newcommand{\nm}[1]{\bsym #1}
\newcommand{\am}[1]{\asym\!#1}
\newcommand{\pam}[2]{\psym{#1}\!#2}

\newcommand{\derone}{\pi}
\newcommand{\dertwo}{\rho}

\newcommand{\redLLinf}{\Rightarrow}
\newcommand{\redLLinflbl}{\Rightarrow_{\mathit{lbl}}}
\newcommand{\redLLnext}{\leadsto}
\newcommand{\redLLnextlbl}{\leadsto_{\mathit{lbl}}}
\newcommand{\redLL}[1]{\rightarrow_{#1}}
\newcommand{\redLLbas}{\mapsto}
\newcommand{\redLLwod}{\rightarrow}
\newcommand{\redLLwodlbl}{\rightarrow_{\mathit{lbl}}}

\newcommand{\level}[1]{\mathit{level}(#1)}

\newcommand{\redlambda}[1]{\rightarrow_{#1}}

\newcommand{\ptms}{\mathbb{T}}
\newcommand{\tms}[1]{\mathbb{T}_{#1}}

\newcommand{\ctxone}{C}

\newcommand{\emctx}{[\cdot]}

\newcommand{\lctxone}{G}

\newcommand{\pctxone}[1]{C_{#1}}
\newcommand{\pctxtwo}[1]{D_{#1}}

\newcommand{\syneq}{\equiv}

\newcommand{\actx}[2]{#1[#2]}

\newcommand{\peremb}[2]{\langle #1\rangle_{#2}}
\newcommand{\imperemb}[3]{\langle #1\rangle_{#2#3}}

\newcommand{\psize}[2]{||#1||_{#2}}

\newcommand{\fpc}{Y}

\newcommand{\wei}[3]{W^{#1}_{#2}(#3)}
\newcommand{\twei}[2]{W_{#1}(#2)}
\newcommand{\df}[2]{D_{#1}(#2)}

\newcommand{\nfo}[2]{\mathit{NFO}(#1,#2)}


\newcommand{\alpone}{\Sigma}

\newcommand{\fins}[1]{#1^*}
\newcommand{\infs}[1]{#1^\omega}
\newcommand{\fininfs}[1]{#1^\infty}
\newcommand{\pfininfs}[2]{#1^{#2}}

\newcommand{\starone}{\mathfrak{a}}
\newcommand{\startwo}{\mathfrak{b}}

\newcommand{\strone}{s}
\newcommand{\strtwo}{t}

\newcommand{\fsemb}[1]{\langle #1\rangle_*}
\newcommand{\isemb}[1]{\langle #1\rangle_\omega}
\newcommand{\psemb}[2]{\langle #1\rangle_{#2}}


\newcommand{\sigone}{\Phi}

\newcommand{\funsymone}{\mathtt{f}}
\newcommand{\funsymtwo}{\mathtt{g}}

\newcommand{\emstr}{\varepsilon}

\newcommand{\ftermone}{t}
\newcommand{\ftermtwo}{s}

\newcommand{\fa}[1]{\mathbb{FA}(#1)}
\newcommand{\fc}[1]{\mathbb{FC}(#1)}

\newcommand{\faemb}[2]{\langle #2\rangle_{#1}^\mathbb{A}}
\newcommand{\fcemb}[2]{\langle #2\rangle_{#1}^\mathbb{C}}



\newenvironment{varitemize}
{
\begin{list}{\labelitemi}
{\setlength{\itemsep}{0pt}
 \setlength{\topsep}{0pt}
 \setlength{\parsep}{0pt}
 \setlength{\partopsep}{0pt}
 \setlength{\leftmargin}{15pt}
 \setlength{\rightmargin}{0pt}
 \setlength{\itemindent}{0pt}
 \setlength{\labelsep}{5pt}
 \setlength{\labelwidth}{10pt}
}}
{
 \end{list} 
}

\newcounter{numberone}

\newenvironment{varenumerate}
{
\begin{list}{\arabic{numberone}.}
{
  \usecounter{numberone}
  \setlength{\itemsep}{0pt}
  \setlength{\topsep}{0pt}
  \setlength{\parsep}{0pt}
  \setlength{\partopsep}{0pt}
  \setlength{\leftmargin}{15pt}
  \setlength{\rightmargin}{0pt}
  \setlength{\itemindent}{0pt}
  \setlength{\labelsep}{5pt}
  \setlength{\labelwidth}{15pt}
}}
{
\end{list} 
}

\newcounter{numbertwo}


\newtheorem{theorem}{Theorem}
\newtheorem{lemma}{Lemma}
\newtheorem{proposition}{Proposition}

\newtheorem{corollary}{Corollary}



\usepackage{graphicx}

\hyphenation{}

\title{Infinitary Lambda Calculi from a Linear Perspective\\(Long Version)}
\date{}
\author{Ugo Dal Lago\footnote{Universit\`a di Bologna \& INRIA Sophia Antipolis}}

\begin{document}
\maketitle
\begin{abstract}
We introduce a linear infinitary $\lambda$-calculus, called $\LLwot$,
in which two exponential modalities are available, the first one being
the usual, finitary one, the other being the only construct
interpreted coinductively. The obtained calculus embeds the infinitary
applicative $\lambda$-calculus and is universal for computations over
infinite strings. What is particularly interesting about $\LLwot$, is
that the refinement induced by linear logic allows to restrict both
modalities so as to get calculi which are \emph{terminating} inductively and
\emph{productive} coinductively. We exemplify this idea by analysing a
fragment of $\ell\Lambda$ built around the principles of \SLL\ and
\FLL. Interestingly, it enjoys confluence, contrarily to what happens
in ordinary infinitary $\lambda$-calculi.
\end{abstract}
\section{Introduction}
The $\lambda$-calculus is a widely accepted model of higher-order
functional programs---it faithfully captures functions and their
evaluation.  Through the many extensions introduced in the last forty
years, the $\lambda$-calculus has also been shown to be a model of
imperative features, control~\cite{Parigot92LPAR}, etc.  Usually, this
requires extending the class of terms with new operators, then
adapting type systems in such a way that ``good'' properties (e.g.,
confluence, termination, etc.), and possibly new ones, hold.

This also happened when potentially infinite structures came into
play. Streams and, more generally, coinductive data
have found their place in the $\lambda$-calculus, following the advent
of lazy data structures in functional programming languages like
\Haskell\ and \ML. By adopting lazy data structures, the programmer
has a way to represent infinity by finite means, relying on the fact
that data are not completely evaluated, but accessed ``on-demand''. In
presence of infinite structures, the usual termination property takes
the form of \emph{productivity}~\cite{Dijkstra80}: even if evaluating
a stream expression globally takes infinite time, looking for the next
symbol in it takes finite time.

All this has indeed been modelled in a variety of ways by enriching
$\lambda$-calculi and related type theories. One can cite, among the
many different works in this area, the ones by
Parigot~\cite{Parigot92TCS}, Raffalli~\cite{Raffalli93CSL}, Hughes at
al.~\cite{Hughes96POPL}, or Abel~\cite{Abel07APLAS}. Terms are
\emph{finite} objects, and the infinite nature of streams is modelled
through a staged inspection of so-called thunks. The key ingredient in
obtaining productivity and termination is usually represented by
sophisticated type systems.

There is also another way of modelling infinite structures in
$\lambda$-calculi, namely \emph{infinitary rewriting}, where both
terms and reduction sequences are not necessarily finite, and as a
consequence infinity is somehow internalised into the calculus.
Infinitary rewriting has been studied in the context of various
concrete rewrite systems, including first-order term
rewriting~\cite{Kennaway91RTA}, but also systems of higher-order
rewriting~\cite{Ketema11IC}.  In the case of
$\lambda$-calculus~\cite{Kennaway97TCS}, the obtained model is very
powerful, but does not satisfy many of the nice properties its finite
siblings enjoy, including confluence and finite developments, let
alone termination and productivity (which are anyway unexpected in the
absence of types).

In this paper, we take a fresh look at infinitary rewriting, through
the lenses of linear logic~\cite{Girard87TCS}. More specifically, we
define an untyped, linear, infinitary $\lambda$-calculus called
$\LLwot$, and study its basic properties, how it relates to
$\Lwot$~\cite{Kennaway97TCS}, and its expressive power. As expected,
incepting linearity does not allow \emph{by itself} to solve any of
the issues described above: $\LLwot$ embeds $\Lwotp{0}{0}{1}$, a
subsystem of $\Lwot$, and as such does not enjoy confluence. On the
other hand, linearity provides the right tools to \emph{control}
infinity: we delineate a simple fragment of $\LLwot$ which has all the
good properties one can expect, including productivity and confluence.
Remarkably, this is not achieved through types, but rather by purely
structural constraints on the way copying is managed, which is
responsible for its bad behaviour. Expressivity, on the other hand,
is not sacrificed.
\subsection{Linearity vs. Infinity}
The crucial r\^ole linearity has in this paper can be understood
without any explicit reference to linear logic as a proof system, but
through a series of observations about the usual, finitary,
$\lambda$-calculus. In any $\lambda$-term $\termone$, the variable
$\varone$ can occur free more than once, or not at all. If the
variable $\varone$ occurs free exactly once in $\termone$
\emph{whenever} we form an abstraction $\la{\varone}{\termone}$, the
$\lambda$-calculus becomes strongly normalising: at any rewrite step
the size of the term strictly decreases.  On the other hand, the
obtained calculus has a poor expressive power and, \emph{in this
  form}, is not the model of any reasonable programming language. Let
duplication reappear, then, but in controlled form: besides a
\emph{linear} abstraction $\la{\varone}{\termone}$, (which is well
formed only if $\varone$ occurs free exactly once in $\termone$) there
is also a \emph{nonlinear} abstraction $\na{\varone}{\termone}$, which
poses no constraints on the number of times $\varone$ occurs in
$\termone$. Moreover, and this is the crucial step, interaction with a
nonlinear abstraction is restricted: the argument to a nonlinear
application must itself be marked as \emph{duplicable} (and
\emph{erasable}) for $\beta$ to fire. This is again implemented by
enriching the category of terms with a new construct: given a term
$\termone$, the \emph{box} $\nm{\termone}$ is a duplicable version of
$\termone$. Summing up, the language at hand, call it $\LLwotfin$, is
built around applications, linear abstractions, nonlinear
abstractions, and boxes. Moreover, $\beta$-reduction comes in two
flavours:
$$
\ap{(\la{\varone}{\termone})}{\termtwo}\rightarrow\sbst{\termone}{\varone}{\termtwo};\qquad\qquad
\ap{(\na{\varone}{\termone})}{\nm{\termtwo}}\rightarrow\sbst{\termone}{\varone}{\termtwo}.
$$ What did we gain by rendering duplicating and erasing operations
explicit? Not much, apparently, since pure $\lambda$-calculus can be
embedded into $\LLwotfin$ in a very simple way following the so-called
Girard's translation~\cite{Girard87TCS}: abstractions become nonlinear
abstractions, and any application $\ap{\termone}{\termtwo}$ is
translated into $\ap{\termone}{\nm{\termtwo}}$. In other words, all
arguments can be erased and copied, and we are back to the world of
wild, universal, computation.  Not everything is lost, however: in any
nonlinear abstraction $\na{\varone}{\termone}$, $\varone$ can occur
any number of times in $\termone$, but at different \emph{depths}:
there could be linear occurrences of $\varone$, which do not lie in
the scope of any box, i.e., at depth $0$, but also occurrences of
$\varone$ at greater depths. Restricting the depths at which the bound
variable can occur in nonlinear abstractions gives rise to strongly
normalising fragments of $\LLwotfin$. Some of them can be seen,
through the Curry-Howard correspondence, as subsystems of linear logic
characterising complexity classes. This includes light linear
logic~\cite{Girard98IC}, elementary linear logic~\cite{Danos03IC} or
soft linear logic~\cite{Lafont04TCS}. As an example, the exponential
discipline of elementary linear logic can be formulated as follows in
$\LLwotfin$: for every nonlinear abstraction $\na{\varone}{\termone}$
all occurrences of $\varone$ in $\termone$ are at depth
$1$. Noticeably, this gives rise to a characterisation of elementary
functions. Similarly, soft linear linear can be seen as the fragment
of $\LLwot$ in which all occurrences of $\varone$ (if more than one)
occur at depth $0$ in any nonlinear abstraction
$\na{\varone}{\termone}$.

But why could all this be useful when rewriting becomes infinitary?
Infinitary $\lambda$-terms~\cite{Kennaway97TCS} are nothing more than
infinite terms defined according to the grammar of the
$\lambda$-calculus. In other words, one can have a term $\termone$
such that $\termone$ is syntactically equal to
$\la{\varone}{\ap{\varone}{\termone}}$. Evaluation can still be
modelled through $\beta$-reduction.  Now, however, reduction sequences
of infinite length (sometime) make sense: take
$\ap{\fpc}{(\la{\varone}{\la{\vartwo}{\ap{\vartwo}{\varone}}})}$,
where $\fpc$ is a fixed point combinator. It rewrites in infinitely
many steps to the infinite term $\termtwo$ such that
$\termtwo=\la{\vartwo}{\ap{\vartwo}{\termtwo}}$. Are all infinite
reduction sequences acceptable? Not really: an infinite reduction
sequence is said to \emph{converge} to the term $\termone$ only if,
informally, any finite approximation to $\termone$ can be reached
along a finite prefix of the sequence. In other words, reduction
should be applied deeper and deeper, i.e., the \emph{depth} of the
fired redex should tend to infinity. But how could one define the
depth of a subterm's occurrence in a given term? There are many
alternatives here, since, informally, one can choose to let the depth
increase (or not) when entering the body of an abstraction or any of
the two arguments of an application. Indeed, \emph{eight} different
calculi can be formed, each with different properties. For example,
$\Lwotp{0}{0}{1}$ is the calculus in which the depth only increases
while entering the argument position in applications, while in
$\Lwotp{1}{0}{0}$ the same happens when crossing abstractions. The
choice of where the depth increases is crucial not only when defining
infinite reduction sequences, but also when defining \emph{terms},
whose category is obtained by completing the set of finite terms with
respect to a certain metric. So, not all infinite terms are
well-formed. In $\Lwotp{0}{0}{1}$, as an example, the term
$\termone=\ap{\varone}{\termone}$ is well-formed, while
$\termone=\la{\varone}{\termone}$ is not. In $\Lwotp{1}{0}{0}$, the
opposite holds. In all the obtained calculi, however, many of the
properties one expects are not true: the Complete Developments Theorem
(i.e. the infinitary analogue of the Finite Complete
Theorem~\cite{Barendregt}) does not hold and, moreover, confluence
fails except if formulated in terms of so-called B\"ohm
reduction~\cite{Kennaway97TCS}. The reason is that $\Lwot$ is even
wilder than $\Lambda$: various forms of infinite computations can
happen, but only some of them are benign.  Is it that linearity could
help in taming all this complexity, similarly to what happens in the
finitary case? This paper gives a first positive answer to this
question.

\subsection{Contributions}
The system we will study in the rest of this paper, called $\LLwot$,
is obtained by incepting ideas from infinitary calculi into
$\LLwotfin$. Not one but \emph{two} kinds of boxes are available in
$\LLwot$, and the depth increases only while crossing boxes of the
second kind (called \emph{coinductive} boxes), while boxes of the
first kind (dubbed \emph{inductive}) leave the depth unchanged. As a
consequence, boxes are as usual the only terms which can be duplicated
and erased, but they are also responsible for the infinitary nature of
the calculus: any term not containing (coinductive) boxes is
necessarily finite. Somehow, the depths in the sense of $\Lwot$ and of
$\LLwotfin$ coincide.

Besides introducing $\LLwot$ and proving its basic properties, this
paper explores the expressive power of the obtained computational
model, showing that it suffers from the same problems affecting
$\Lwot$, but also that it has precisely the same expressive power as
that of Type-2 Turing machines, the reference computational model of
so-called computable analysis~\cite{Weihrauch00}.

The most interesting result among those we give in this paper consists
in showing that, indeed, a simple fragment of $\LLwot$, called
$\LLLwot$, is on the one hand flexible enough to encode streams and
guarded recursion on them, and on the other guarantees
productivity. Remarkably, confluence holds, contrary
to what happens for $\LLwot$. Actually, $\LLLwot$ is defined around
the same principles which lead to the definition of light logics. Each
kind of box, however, follows a distinct discipline: inductive boxes
are handled as in Lafont's \SLL~\cite{Lafont04TCS}, while coinductive
boxes follow the principles of \FLL~\cite{Danos03IC}, hence the name
of the calculus. So far, the \FLL's exponential
discipline has not been shown to have any computational meaning: now
we know that beyond it there is a form of guarded corecursion.

\section{$\LLwot$ and its Basic Properties}\label{sect:basicproperties}
In this section, we introduce a linear infinitary $\lambda$-calculus
called $\LLwot$, which is the main object of study of this paper. Some
of the dynamical properties of $\LLwot$ will be investigated. Before
defining $\LLwot$, some preliminaries about formal systems with
\emph{both} inductive and coinductive rules will be given.

\subsection{Mixing Induction and Coinduction in Formal Systems}
A formal system over a set of judgments $\jsone$ is given
  by a finite set of rules, all of them having one conclusions and an
  arbitrary finite number of premises. The rules of a \emph{mixed}
formal system $\sysone$ are of two kinds: those which are to be
interpreted inductively and those which are to be interpreted
\emph{co}inductively.  To distinguish between the two, inductive rules will
be denoted as usual, with a single line, while coinductive rules will
be indicated as follows: 
$
\infer= 
{\sjudg{\exptwo}} 
{\sjudg{\expone_1}& \ldots & \sjudg{\expone_n}}
$.  
Intuitively, any correct derivation in such a system is an
\emph{infinite} tree built following inductive and coinductive rules
where, however, any \emph{infinite} branch crosses coinductive rule
instances \emph{infinitely} often.  In other words, there cannot be
any infinite branch where, from a certain point on, only inductive
rules occur.

Formally, the set $\cd{\sysone}$ of derivable assertions of a mixed
formal system $\sysone$ over the set of judgments $\jsone$ can be seen
as the greatest fixpoint of the function
$\indrul{\sysone}\circ\coindrul{\sysone}$ where
$\coindrul{\sysone}:\powset{\jsone}\rightarrow\powset{\jsone}$ is the
monotone function induced by the the application of a \emph{single}
coinductive rule, while $\indrul{\sysone}$ is the function induced by
the application of inductive rules an arbitrary \emph{but finite}
number of times.  $\indrul{\sysone}$ can itself be obtained as the
least fixpoint of a monotone functional on the space of monotone
functions on $\powset{\jsone}$. The existence of $\cd{\sysone}$ can be
formally justified by Knaster-Tarski theorem, since all the involved
spaces are complete lattices, and the involved functionals are
monotone. This is, by the way, very close to the approach from~\cite{EndrullisPolonsky}. 
Let $\powset{\jsone}\leadsto\powset{\jsone}$ be the set of
monotone functions on the powerset $\powset{\jsone}$. A relation
$\sqsubseteq$ on $\powset{\jsone}\leadsto\powset{\jsone}$ can be
easily defined by stipulating that $\funcone\sqsubseteq\functwo$ iff
for every $\setone\subseteq\jsone$ it holds that
$\funcone(\setone)\subseteq\functwo(\setone)$.  The structure
$(\powset{\jsone}\leadsto\powset{\jsone},\sqsubseteq)$ is actually a
complete lattice, because:
\begin{varitemize}
\item
  The relation $\sqsubseteq$ is a partial order. In particular, antisymmetry is a consequence of
  function extensionality: if for every $\setone$, both $\funcone(\setone)\subseteq\functwo(\setone)$
  and $\functwo(\setone)\subseteq\funcone(\setone)$, then $\funcone$ and $\functwo$ are the same
  function.
\item
  Given a set of monotone functions $\funsetone$, its lub and sup exist and are the functions
  $\funcone,\functwo:\powset{\jsone}\leadsto\powset{\jsone}$ such that for every
  $\setone\subseteq\jsone$,
  $$
    \funcone(\setone)=\bigcap_{\functhree\in\funsetone}\functhree(\setone);\qquad\qquad
    \functwo(\setone)=\bigcup_{\functhree\in\funsetone}\functhree(\setone).
  $$
  It is easy to verify that $\funcone$ is monotone, that it minorises
  $\funsetone$, and that it majorises any minoriser of $\funsetone$. Similarly
  for $\functwo$.
\end{varitemize}
The function $\indrul{\sysone}$ is the least fixpoint of the monotone functional
$\functione$ on $\powset{\jsone}\leadsto\powset{\jsone}$ defined as follows: to
every function $\funcone$, $\functione$ associates the function $\functwo$ obtained
by feeding $\funcone$ with the argument set $\setone$, and then applying one additional
instance of inductive rules from $\sysone$. Since $(\powset{\jsone}\leadsto\powset{\jsone},\sqsubseteq)$ 
is a complete lattice, $\indrul{\sysone}$ is guaranteed to exist.

Formal systems in which \emph{all} rules are either coinductively or
inductively interpreted have been studied extensively (see,
e.g. \cite{Leroy09IC}). Our constructions, although relatively
simple, do not seem to have appeared before at least in this form. The conceptually
closest work is the one by Endrullis and coauthors~\cite{EndrullisPolonsky}.

How could we prove anything about $\cd{\sysone}$? How should we proceed, as an example, when
tyring to prove that a given subset $\setone$ of $\jsone$ is included in $\cd{\sysone}$? Fixed-point
theory tells us that the correct way to proceed consists in showing that $\setone$ is
$(\indrul{\sysone}\circ\coindrul{\sysone})$-consistent, namely that $\setone\subseteq
\indrul{\sysone}(\coindrul{\sysone}(\setone))$. We will frequently apply this proof strategy in
the following.
\subsection{An Infinitary Linear Lambda Calculus}
\newcommand{\pof}{\triangleright}
\emph{Preterms} are potentially infinite terms built from the
following grammar:
$$
\termone,\termtwo::=\;\varone\midd\ap{\termone}{\termone}\midd\la{\varone}{\termone}\midd\ia{\varone}{\termone}\midd
   \ca{\varone}{\termone}\midd\im{\termone}\midd\cm{\termone},
$$
where $\varone$ ranges over a denumerable set $\varsetone$ of
variables. $\ptms$ is the set of preterms.  
The notion of capture-avoiding substitution of a preterm $\termone$ for a
variable $\varone$ in anoter preterm $\termtwo$, denoted
$\sbst{\termtwo}{\varone}{\termone}$, can be defined, this time by
\emph{coinduction}, on the structure of $\termtwo$:
\begin{align*}
\sbst{(\varone)}{\varone}{\termone}&=\termone;\\
\sbst{(\vartwo)}{\varone}{\termone}&=\vartwo;\\
\sbst{(\la{\varone}{\termthree})}{\vartwo}{\termone}&=\la{\varone}{\sbst{\termthree}{\vartwo}{\termone}};
   \qquad\qquad\mbox{if $\varone\not\in\FV{\termone}$}\\
\sbst{(\ia{\varone}{\termthree})}{\vartwo}{\termone}&=\ia{\varone}{\sbst{\termthree}{\vartwo}{\termone}};
   \qquad\qquad\mbox{if $\varone\not\in\FV{\termone}$}\\
\sbst{(\ca{\varone}{\termthree})}{\vartwo}{\termone}&=\ca{\varone}{\sbst{\termthree}{\vartwo}{\termone}};
   \qquad\qquad\mbox{if $\varone\not\in\FV{\termone}$}\\
\sbst{(\ap{\termthree}{\termfour})}{\vartwo}{\termone}&=
   \ap{(\sbst{\termthree}{\vartwo}{\termone})}{(\sbst{\termfour}{\vartwo}{\termone})};\\
\sbst{(\im{\termthree})}{\vartwo}{\termone}&=\im{(\sbst{\termthree}{\vartwo}{\termone})};\\
\sbst{(\cm{\termthree})}{\vartwo}{\termone}&=\cm{(\sbst{\termthree}{\vartwo}{\termone})}.
\end{align*}
Observe that all the equations above are guarded, so this is a well-posed definition. 
An \emph{inductive} (respectively, \emph{coinductive}) \emph{box}
is any preterm in the form $\im{\termone}$ (respectively, $\cm{\termone}$).

Please notice that any (guarded) equation has a unique solution over
preterms. As an example, $\termone=\la{\varone}{\termone}$,
$\termtwo=\ap{\termtwo}{(\la{\varone}{\varone})}$, and
$\termone=\ap{\vartwo}{\cm{\termone}}$ all have unique solutions.
In other words, infinity is everywhere. Only certain preterms,
however, will be the objects of this study.  To define the class of
``good'' preterms, simply called \emph{terms}, we now introduce a
mixed formal system. An \emph{environment} $\conone$ is simply a set
of expressions (called \emph{patterns}) in one the following three
forms:
$$
\patone::=\varone\midd\im{\varone}\midd\cm{\varone},
$$
where any variable occurs in at most \emph{one} pattern in $\conone$. If $\conone$ and $\contwo$ are two
disjoint environments, then $\conone,\contwo$ is their union. An environment
is \emph{linear} if it only contains variables. Linear environments are indicated with metavariables
like $\lconone$ or $\lcontwo$.
A \emph{term judgment} is an expression in the form $\tjudg{\conone}{\termone}$ where $\conone$ is an environment and
$\termone$ is a term. A \emph{term} is any preterm $\termone$
for which a judgment $\tjudg{\conone}{\termone}$ can be derived by the formal system $\LLwot$, whose rules are
in Figure~\ref{fig:llwotwfr}.
\begin{figure*}
\begin{center}
\fbox{
\begin{minipage}{.97\textwidth}
\begin{center}
{\footnotesize
$$
\infer[\LLvarl]
{\tjudg{\im{\lconone},\cm{\lcontwo},\varone}{\varone}}
{}
\qquad
\infer[\LLvari]
{\tjudg{\im{\lconone},\cm{\lcontwo},\im{\varone}}{\varone}}
{}
\qquad
\infer[\LLvarc]
{\tjudg{\im{\lconone},\cm{\lcontwo},\cm{\varone}}{\varone}}
{}
\qquad
\infer[\LLapp]
{\tjudg{\conone,\contwo,\im{\lconone},\cm{\lcontwo}}{\ap{\termone}{\termtwo}}}
{
  \tjudg{\conone,\im{\lconone},\cm{\lcontwo}}{\termone}
  &
  \tjudg{\contwo,\im{\lconone},\cm{\lcontwo}}{\termtwo}
}
$$
$$
\infer[\LLlaml]
{\tjudg{\conone}{\la{\varone}{\termone}}}
{\tjudg{\conone,\varone}{\termone}}
\qquad
\infer[\LLlami]
{\tjudg{\conone}{\ia{\varone}{\termone}}}
{\tjudg{\conone,\im{\varone}}{\termone}}
\qquad
\infer[\LLlamc]
{\tjudg{\conone}{\ca{\varone}{\termone}}}
{\tjudg{\conone,\cm{\varone}}{\termone}}
\qquad
\infer[\LLlmi]
{\tjudg{\im{\lconone},\cm{\lcontwo}}{\im{\termone}}}
{\tjudg{\im{\lconone},\cm{\lcontwo}}{\termone}}
\qquad
\infer=[\LLlmc]
{\tjudg{\im{\lconone},\cm{\lcontwo}}{\cm{\termone}}}
{\tjudg{\im{\lconone},\cm{\lcontwo}}{\termone}}
$$}
\end{center}
\vspace{2pt}
\end{minipage}}
\end{center}
\caption{$\LLwot$: Well-Formation Rules.}\label{fig:llwotwfr}
\end{figure*}
Please notice that $\LLlmc$ is coinductive, while all the other rules
are inductive. This means that on terms, differently from preterms,
not all recursive equations have a solution, anymore.  As an example
$\termone=\ap{\vartwo}{\cm{\termone}}$ is a term: a derivation
$\derone$ for $\tjudg{\im{\vartwo}}{\termone}$ can be found in
Figure~\ref{fig:ed}.
\begin{figure*}
\begin{center}
\fbox{
\begin{minipage}{.97\textwidth}
\begin{center}
$$
\infer[\LLapp]
{\tjudg{\im{\vartwo}}{\termone}}
{
  \infer[\LLvarc]
    {\tjudg{\im{\vartwo}}{\vartwo}}
    {}
  &
  \infer=[\LLlmc]
    {\tjudg{\im{\vartwo}}{\cm{\termone}}}
    {\derone\;\pof\tjudg{\im{\vartwo}}{\termone}}
}
\qquad\qquad
\infer[\LLapp]
{\tjudg{\emcon}{\ap{\termtwo}{(\la{\varone}{\varone})}}}
{
  \dertwo\pof\tjudg{\emcon}{\ap{\termtwo}{(\la{\varone}{\varone})}}
  &
  \infer[\LLlaml]
    {\tjudg{\emcon}{\la{\varone}{\varone}}}
    {
      \infer[\LLvarl]
      {\tjudg{\varone}{\varone}}
      {}
    }
}
$$
\end{center}
\vspace{0pt}
\end{minipage}}
\end{center}
\caption{$\LLwot$: Example Derivation Trees $\derone$ (left) and $\dertwo$ (right).}\label{fig:ed}
\end{figure*}
$\derone$ is indeed a well-defined derivation because although
infinite, any infinite path in it contains infinitely many occurrences
of $\LLlmc$.  If we try to proceed in the same way with the preterm
$\termtwo=\ap{\termtwo}{(\la{\varone}{\varone})}$, we immediately
fail: the only candidate derivation $\dertwo$ looks like the one in
Figure~\ref{fig:ed} and is not well-defined (it contains a ``loop'' of
inductive rule instances). We write $\tms{\LLwot}(\conone)$ for the
set of all preterms $\termone$ such that $\tjudg{\conone}{\termone}$.
The union of $\tms{\LLwot}(\conone)$ over all environments $\conone$
is denoted simply as $\tms{\LLwot}$.

Some observations about the r\^ole of environments are now in order: If
$\tjudg{\varone,\conone}{\termone}$, then $\varone$ necessarily
occurs free in $\termone$, but exactly once and in \emph{linear
position} (i.e. not in the scope of any box). If, on the other hand,
$\tjudg{\im{\varone},\conone}{\termone}$, then $\varone$ can occur
free any number of times, even infinitely often, in $\termone$.
Similarly when $\tjudg{\cm{\varone},\conone}{\termone}$. Observe, in
this respect, that inductive and coinductive boxes are actually very
permissive: if $\tjudg{\im{\varone},\conone}{\termone}$, $\varone$ can
even occur in the scope of coinductive boxes, while $\varone$ can
occur in the scope of inductive boxes if
$\tjudg{\cm{\varone},\conone}{\termone}$. We claim that this is
source of the great expressive power of the calculus, but also
of its main defects (e.g. the absence of confluence).

\newcommand{\bsone}{s} \newcommand{\bstwo}{t}
\newcommand{\embs}{\varepsilon}
\newcommand{\strnat}[1]{#1^{\bullet}}
\newcommand{\NFs}[1]{\mathsf{NFs}(#1)} 
Sometimes it is useful to denote symbols $\isym$ and $\csym$ in a
unified way. To that purpose, let $\BB$ be the set $\{0,1\}$ of binary
digits, which is ranged over by metavariables like $\bitone$ or
$\bittwo$.  $\psym{0}$ stands for $\isym$, while $\psym{1}$ is
$\csym$. For every $\bsone\in\BB^*$, we can define
\emph{$\bsone$-contexts}, ranged over by metavariables like
$\pctxone{\bsone}$ and $\pctxtwo{\bsone}$, as follows, by
induction on $\bsone$:
\begin{align*}
\pctxone{\varepsilon}::=\emctx;\qquad\qquad\pctxone{0\cdot\bsone}::=\im{\pctxone{\bsone}};\qquad\qquad
\pctxone{1\cdot\bsone}::=\cm{\pctxone{\bsone}};\\
\pctxone{\bsone}::=\ap{\pctxone{\bsone}}{\termone}\midd\ap{\termone}{\pctxone{\bsone}}
  \midd\la{\varone}{\pctxone{\bsone}}\midd\ia{\varone}{\pctxone{\bsone}}\midd\ca{\varone}{\pctxone{\bsone}}.
\end{align*}
Given any subset $\setone$ of $\BB^*$, an $\setone$-context, sometime
denoted as $\pctxone{\setone}$ is an $\bsone$-context where
$\bsone\in\setone$. A \emph{context} $\ctxone$ is simply any
$\bsone$-context $\pctxone{\bsone}$. For every natural
number $\natone\in\NN$, $\strnat{\natone}$ is the
set of those strings in $\BB^*$ in which $1$ occurs
precisely $\natone$ times. For every $\natone$, the language
$\strnat{\natone}$ is regular.
\subsection{Finitary and Infinitary Dynamics}\label{sect:fininfdyn}
In this section, notions of finitary and infinitary reduction for
$\LLwot$ are given.  \emph{Basic reduction} is a binary relation
$\redLLbas\subseteq\ptms\times\ptms$ defined by the following three
rules (where, as usual, $\termone\redLLbas\termtwo$ stands for
$(\termone,\termtwo)\in\;\redLLbas$):
$$
\ap{(\la{\varone}{\termone})}{\termtwo}\redLLbas\sbst{\termone}{\varone}{\termtwo};\quad
\ap{(\ia{\varone}{\termone})}{\im{\termtwo}}\redLLbas\sbst{\termone}{\varone}{\termtwo};\quad
\ap{(\ca{\varone}{\termone})}{\cm{\termtwo}}\redLLbas\sbst{\termone}{\varone}{\termtwo}.
$$
Basic reduction can be applied in any $\bsone$-context, giving
rise to a \emph{ternary} relation
$\redLLwod\subseteq\ptms\times\BB^*\times\ptms$, simply called
\emph{reduction}.  That is defined by stipulating that
$(\termone,\bsone,\termtwo)\in\redLLwod$ iff there are a
$\bsone$-context $\pctxone{\bsone}$ and two terms $\termthree$ and
$\termfour$ such that $\termthree\redLLbas\termfour$,
$\termone=\actx{\pctxone{\bsone}}{\termthree}$, and
$\termtwo=\actx{\pctxone{\bsone}}{\termfour}$.  In this case, the
reduction step is said to occur \emph{at level} $\bsone$ and we write
$\termone\redLL{\bsone}\termtwo$ and
$\level{\termone\redLL{\bsone}\termtwo}=\bsone$. We often employ the
notation $\redLL{\setone}$, i.e., $\redLL{\setone}$ is the union of
the relations $\redLL{\bsone}$ for all $\bsone\in\setone$.  If
$\termone\redLL{\bsone}\termtwo$ but we are not interested in the
specific $\bsone$, we simply write $\termone\redLLwod\termtwo$. If
$\termone\redLL{\strnat{\natone}}\termtwo$, then reduction is said to
occur at depth $\natone$.

Given $\setone\subseteq\BB^*$, a \emph{$\setone$-normal form}
is any term  $\termone$ such that whenever 
$(\termone,\bsone,\termtwo)$, it holds that $\bsone\not\in\setone$.
The set of all $\setone$-normal forms is denoted as
$\NFs{\setone}$. In the just introduced notations, the singleton
$\bsone$ is often used in place of $\{\bsone\}$ if this does
not cause any ambiguity. A \emph{normal form} is simply
a $\BB^*$-normal form.

Depths and levels have a different nature: while the depth increases
only when entering a coinductive box, the level changes while entering
any kind of box, and this is the reason why levels are binary strings
rather than natural numbers.

Since $\sbst{\termone}{\varone}{\termtwo}$ is well-defined whenever
$\termone$ and $\termtwo$ are preterms, reduction is certainly closed
as a relation on the space of preterms. That it is also closed on
terms is not trivial. First of all, substitution lemmas need to be
proved for the three kinds of patterns which possibly appear in
environments. The first of these lemmas concerns linear variables:
\begin{lemma}[Substitution Lemma, Linear Case]\label{lem:substlinllwot}
  If $\tjudg{\conone,\varone,\im{\lconone},\cm{\lcontwo}}{\termone}$
  and $\tjudg{\contwo,\im{\lconone},\cm{\lcontwo}}{\termtwo}$, then it holds that 
  $\tjudg{\conone,\contwo,\im{\lconone},\cm{\lcontwo}}{\sbst{\termone}{\varone}{\termtwo}}$.
\end{lemma}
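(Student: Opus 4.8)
The plan is to prove the statement by coinduction on the derivation of $\tjudg{\conone,\varone,\im{\lconone},\cm{\lcontwo}}{\termone}$, following the fixed-point proof strategy explicitly recommended in the excerpt: to show a set is included in $\cd{\sysone}$, exhibit it as $(\indrul{\sysone}\circ\coindrul{\sysone})$-consistent. Concretely, I would let $\setone$ be the set of all judgments of the form $\tjudg{\conone,\contwo,\im{\lconone},\cm{\lcontwo}}{\sbst{\termone}{\varone}{\termtwo}}$ arising from pairs of premises as in the statement, together with (closing off under the other variable patterns) the judgments that will be needed by the companion substitution lemmas. I would then verify that every element of $\setone$ can be obtained by applying a block of inductive rules to premises that, after passing through one coinductive step, again lie in $\setone$. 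Since $\LLlmc$ is the unique coinductive rule, this is exactly what guarantees that the constructed derivation is well-defined, i.e.\ that every infinite branch crosses $\LLlmc$ infinitely often.

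The bulk of the work is a case analysis on the last rule used in the derivation of $\termone$, threading the substitution through each constructor. The linear variable rules are immediate: if $\termone$ is the variable $\varone$ itself (rule $\LLvarl$), then $\sbst{\termone}{\varone}{\termtwo}=\termtwo$ and the conclusion is precisely the second hypothesis, after observing that the linear-occurrence discipline forces $\conone$ to be empty in this case; if $\termone$ is some other variable, $\varone$ cannot occur, contradicting the requirement that a variable in a linear pattern occurs exactly once, so this case is vacuous. For the application rule $\LLapp$, the key point is the \emph{linear} treatment of the environment: the linear variable $\varone$ sits in exactly one of the two premises (by the disjointness and single-occurrence constraints built into the rules), while the duplicable zones $\im{\lconone},\cm{\lcontwo}$ are shared across both premises. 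I would substitute into the premise containing $\varone$ by the coinductive hypothesis, leave the other premise untouched, and reassemble with $\LLapp$, matching the split of the context $\conone,\contwo$ against the two subterms. The abstraction rules $\LLlaml,\LLlami,\LLlamc$ all commute with substitution by the guarded defining equations, modulo the side condition $\varone\not\in\FV{\termtwo}$ which is arranged by $\alpha$-renaming and is harmless because those bound variables are fresh. The box rules $\LLlmi$ and $\LLlmc$ are where the context shape $\im{\lconone},\cm{\lcontwo}$ in both hypotheses is essential: a box can only be formed in a context consisting solely of duplicable patterns, and since the linear $\varone$ is being eliminated rather than entering a box, the hypotheses are exactly in the right form to apply the rule to the substituted premise.

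The main obstacle, and the reason the proof is genuinely coinductive rather than a routine structural induction, is the rule $\LLlmc$. When the last rule is $\LLlmc$, the single premise is itself a judgment of the same shape, and applying the coinductive hypothesis produces a premise that again lies in $\setone$; I must \emph{not} treat this as a recursive call that must terminate, but rather record it as a legitimate coinductive step that feeds back into the greatest-fixpoint construction. The care needed here is to check that the constructed coderivation has no infinite branch consisting eventually only of inductive rules: this follows because substitution does not alter the underlying tree shape of the derivation of $\termone$ outside the replaced leaves, and $\termtwo$ itself, being a term, has a well-formed derivation whose infinite branches already cross $\LLlmc$ infinitely often. Formally I would package this by defining the consistent set $\setone$ so that it is closed under the single coinductive step, and then invoke the $(\indrul{\sysone}\circ\coindrul{\sysone})$-consistency criterion to conclude $\setone\subseteq\cd{\LLwot}$, which yields the desired judgment.
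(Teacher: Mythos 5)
Your overall framework---collecting the substituted judgments into a set $\setone$ and showing it $(\indrul{\LLwot}\circ\coindrul{\LLwot})$-consistent---matches the paper's, and your handling of the variable, application and abstraction cases is essentially the intended one. But you have misidentified where the difficulty lies, and in doing so you have left the one step that actually needs an argument unjustified. The case you call ``the main obstacle'', namely the last rule being $\LLlmc$, cannot occur: the conclusion of $\LLlmc$ (and likewise of $\LLlmi$, $\LLvari$, $\LLvarc$) has an environment of the shape $\im{\lconone},\cm{\lcontwo}$, containing no bare variable pattern, whereas the judgment under analysis carries the bare linear pattern $\varone$. So no coinductive step is ever crossed in the portion of the derivation that the substitution modifies, and your claim that the box rules are ``exactly in the right form to apply the rule to the substituted premise'' is backwards: those rules simply do not apply. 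This matters because your appeal to ``the coinductive hypothesis'' in the $\LLapp$ case is then an \emph{unguarded} corecursive call: in the greatest-fixed-point reading, membership of the substituted premise in $\setone$ only becomes usable after passing through $\coindrul{\LLwot}$, and on this branch you never do. What rescues the argument---and what the paper's proof makes explicit---is precisely the observation you are missing: since $\varone$ is linear, its occurrence is consumed by $\LLvarl$ within the finite block of inductive rules sitting above the last coinductive step, so one proves by an ordinary \emph{induction} on the length $\natone$ of that block that the substituted judgment already lies in $\cd{\LLwot}$; the base case $\natone=0$ is exactly the impossibility of $\LLlmc$. Consistency of $\setone$ then follows trivially. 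Your closing remark that substitution only alters the derivation at the replaced leaf does contain the germ of a correct justification, but it contradicts your own description of the $\LLlmc$ case and is not what your case analysis actually establishes.

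A smaller point: in the $\LLvarl$ case $\conone$ need not be empty---it may still contain patterns of the form $\im{\vartwo}$ or $\cm{\vartwo}$---so passing from $\tjudg{\contwo,\im{\lconone},\cm{\lcontwo}}{\termtwo}$ to the stated conclusion requires a Weakening Lemma; the same lemma is needed in the $\LLlami$ and $\LLlamc$ cases to match the enlarged environments of the premises before the inductive hypothesis can be applied.
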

\begin{proof}
We can prove that the following subset $\setone$ of judgments is consistent with $\LLwot$:
$$
\left\{
\tjudg{\conone,\contwo,\im{\lconone},\cm{\lcontwo}}{\sbst{\termone}{\varone}{\termtwo}}\midd
\tjudg{\conone,\varone,\im{\lconone},\cm{\lcontwo}}{\termone}\in\cd{\LLwot}\wedge
\tjudg{\contwo,\im{\lconone},\cm{\lcontwo}}{\termtwo}\in\cd{\LLwot}
\right\}\cup\cd{\LLwot}.
$$
Suppose that $\judgone=\tjudg{\conone,\contwo,\im{\lconone},\cm{\lcontwo}}{\sbst{\termone}{\varone}{\termtwo}}$
is in $\setone$. If $\judgone\in\cd{\LLwot}$, then
of course 
$$
\judgone\in\indrul{\LLwot}(\coindrul{\LLwot}(\cd{\LLwot}))\subseteq
\indrul{\LLwot}(\coindrul{\LLwot}(\setone)).
$$
Otherwise, we know that $\judgtwo=\tjudg{\conone,\varone,\im{\lconone},\cm{\lcontwo}}{\termone}\in\cd{\LLwot}$
and, by the fact $\cd{\LLwot}=\indrul{\LLwot}(\coindrul{\LLwot}(\cd{\LLwot}))$ we can infer
that $\judgtwo\in\indrul{\LLwot}(\coindrul{\LLwot}(\cd{\LLwot}))$, namely that $\judgtwo$
can be obtained by judgments in $\cd{\LLwot}$ by applying coinductive rules once, followed
by $\natone$ inductive rules. We prove that 
$\judgone\in\cd{\LLwot}$ by induction on $\natone$:
\begin{varitemize}
\item
  If $n=0$, then $\judgtwo$ can be obtained by means of the rule $\LLlmc$, but this
  is impossible since the environment in any judgment obtained this way cannot contain
  any variable, and $\judgtwo$ actually contains one.
\item
  If $n>0$, then we distinguish a number of cases, depending on the last inductive
  rule applied to derive $\judgtwo$:
  \begin{varitemize}
  \item
    If it is $\LLvarl$, then $\termone=\varone$, $\sbst{\termone}{\varone}{\termtwo}$
    is simply $\termtwo$ and $\conone$ does not contain any variable. By the fact that
    $$
    \tjudg{\contwo,\im{\lconone},\cm{\lcontwo}}{\termtwo}\in\cd{\LLwot}
    $$
    it follows, by a Weakening Lemma, that $\judgone\in\cd{\LLwot}$.
  \item
    It cannot be either $\LLvari$ or $\LLvarc$ or $\LLlmi$: in all these
    cases the underlying environment cannot contain variables;
  \item
    If it is either $\LLlaml$ or $\LLapp$, then the induction hypothesis yields the thesis 
    immediately;
  \item
    If it is either $\LLlami$ or $\LLlamc$, then a Weakening Lemma 
    applied to the induction hypothesis leads to the thesis.  
  \end{varitemize}
\end{varitemize}
From $\judgone\in\cd{\LLwot}$, it follows that
$$
\judgone\in\indrul{\LLwot}(\coindrul{\LLwot}(\cd{\LLwot}))
\subseteq\indrul{\LLwot}(\coindrul{\LLwot}(\setone)),
$$
which is the thesis. This concludes the proof.
\end{proof}
A similar result can be given when the substituted variable occurs in
the scope of an inductive box:
\begin{lemma}[Substitution Lemma, Inductive Case]\label{lem:substindllwot}
  If $\tjudg{\conone,\im{\varone},\im{\lconone},\cm{\lcontwo}}{\termone}$
  and $\tjudg{\im{\lconone},\cm{\lcontwo}}{\termtwo}$, then it holds that
  $\tjudg{\conone,\im{\lconone},\cm{\lcontwo}}{\sbst{\termone}{\varone}{\termtwo}}$.
\end{lemma}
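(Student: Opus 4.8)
The plan is to follow the same coinductive strategy used for the linear case, but with one essential twist. I would set
\[
\setone=\left\{\,\tjudg{\conone,\im{\lconone},\cm{\lcontwo}}{\sbst{\termone}{\varone}{\termtwo}}\midd \tjudg{\conone,\im{\varone},\im{\lconone},\cm{\lcontwo}}{\termone}\in\cd{\LLwot}\ \wedge\ \tjudg{\im{\lconone},\cm{\lcontwo}}{\termtwo}\in\cd{\LLwot}\,\right\}\cup\cd{\LLwot},
\]
and prove that $\setone$ is $(\indrul{\LLwot}\circ\coindrul{\LLwot})$-consistent, i.e.\ $\setone\subseteq\indrul{\LLwot}(\coindrul{\LLwot}(\setone))$; the greatest-fixpoint principle then yields $\setone\subseteq\cd{\LLwot}$, which is the thesis. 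The decisive observation, to be exploited throughout, is that $\im{\varone}$ is a \emph{box} pattern, so, unlike a bare linear variable, it may legitimately occur in the conclusion of the coinductive rule $\LLlmc$; hence the substituted copies of $\termtwo$ can sit arbitrarily deep, even inside infinitely many coinductive boxes. For exactly this reason one \emph{cannot} hope to prove $\judgone\in\cd{\LLwot}$ by a terminating finite induction, as was done in the linear case; instead the target must be the weaker membership in $\indrul{\LLwot}(\coindrul{\LLwot}(\setone))$, with $\setone$ itself available above the single coinductive step.

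Fix $\judgone=\tjudg{\conone,\im{\lconone},\cm{\lcontwo}}{\sbst{\termone}{\varone}{\termtwo}}\in\setone$; if $\judgone\in\cd{\LLwot}$ we are done, so assume $\judgtwo=\tjudg{\conone,\im{\varone},\im{\lconone},\cm{\lcontwo}}{\termone}\in\cd{\LLwot}$ together with $\tjudg{\im{\lconone},\cm{\lcontwo}}{\termtwo}\in\cd{\LLwot}$. Unfolding $\cd{\LLwot}=\indrul{\LLwot}(\coindrul{\LLwot}(\cd{\LLwot}))$ exhibits $\judgtwo$ as the result of a single coinductive rule applied to $\cd{\LLwot}$-premises, followed by $\natone$ inductive rules, and I would induct on $\natone$. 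The base case $\natone=0$, where the bottom rule is the coinductive $\LLlmc$, is precisely the case that was vacuous in the linear proof and is now the heart of the argument: here $\termone=\cm{\termthree}$, the environment $\conone$ consists only of box patterns, the premise $\tjudg{\conone,\im{\varone},\im{\lconone},\cm{\lcontwo}}{\termthree}$ lies in $\cd{\LLwot}$, and since $\sbst{\termone}{\varone}{\termtwo}=\cm{(\sbst{\termthree}{\varone}{\termtwo})}$, the judgment $\tjudg{\conone,\im{\lconone},\cm{\lcontwo}}{\sbst{\termthree}{\varone}{\termtwo}}$ belongs to $\setone$ \emph{by definition}. A single application of $\LLlmc$ to this $\setone$-member then places $\judgone$ in $\coindrul{\LLwot}(\setone)\subseteq\indrul{\LLwot}(\coindrul{\LLwot}(\setone))$. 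Thus $\LLlmc$ acts as the guard that consumes the one allowed coinductive step and resets the proof obligation back into $\setone$.

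For $\natone>0$ I would case on the last inductive rule. If it is $\LLvari$ acting on $\varone$, then $\termone=\varone$, $\sbst{\termone}{\varone}{\termtwo}=\termtwo$, and a Weakening Lemma applied to $\tjudg{\im{\lconone},\cm{\lcontwo}}{\termtwo}$ gives $\judgone\in\cd{\LLwot}$; this is exactly where the hypothesis that $\termtwo$ is typed in a purely box environment is used, guaranteeing that $\termtwo$ may be duplicated and boxed. Any variable rule acting on a variable other than $\varone$ leaves the term untouched, so $\judgone\in\cd{\LLwot}$. In the rules $\LLlaml$, $\LLlami$, $\LLlamc$ and $\LLlmi$ the substitution commutes with the constructor and the single premise carries $\im{\varone}$ with a strictly smaller $\natone$, so the induction hypothesis applies (invoking Weakening to extend $\termtwo$'s environment with the freshly bound box variable in the $\LLlami$ and $\LLlamc$ cases); re-applying the same inductive rule keeps the conclusion in $\indrul{\LLwot}(\coindrul{\LLwot}(\setone))$, which is closed under further inductive rules. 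The case $\LLapp$ is where the pattern genuinely differs from the linear one: because $\im{\varone}$ is a \emph{shared} box pattern, it survives into \emph{both} premises, so the substitution must be pushed into both subderivations; applying the induction hypothesis to each and recombining with $\LLapp$ again lands in $\indrul{\LLwot}(\coindrul{\LLwot}(\setone))$.

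The main obstacle, and the conceptual crux, is not any individual case but the correct choice of target predicate: the fact that $\im{\varone}$ persists through coinductive boxes forces the substitution to recur without bound, so a naive finite induction toward $\judgone\in\cd{\LLwot}$ is doomed. Recognising that the $(\indrul{\LLwot}\circ\coindrul{\LLwot})$-consistency of $\setone$ is the right formulation, with $\LLlmc$ serving as the productivity guard that returns the obligation to $\setone$, is what makes the coinduction go through. The remaining bookkeeping (the environment shapes forced by each rule, the freshness side conditions in the substitution clauses, and the invocations of Weakening) is routine and parallels the linear case.
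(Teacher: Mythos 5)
Your proof is correct and follows essentially the same strategy as the paper, which simply declares this lemma's proof ``identical in structure'' to the linear case: you show the augmented set $\setone$ is $(\indrul{\LLwot}\circ\coindrul{\LLwot})$-consistent and induct on the number of inductive rules above the coinductive layer. Your treatment of the $\LLlmc$ base case --- which is vacuous in the linear case but is precisely where the inductive case differs, since $\im{\varone}$ may legitimately appear in the conclusion of $\LLlmc$, forcing you to land in $\coindrul{\LLwot}(\setone)$ rather than in $\cd{\LLwot}$ --- is exactly the adaptation the paper leaves implicit, and you handle it correctly.
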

\begin{proof}
The structure of this proof is identical to the one of Lemma~\ref{lem:substlinllwot}.
\end{proof}
When the variable is in the scope of a coinductive box, almost nothing changes:
\begin{lemma}[Substitution Lemma, Coinductive Case]\label{lem:substcoillwot}
  If $\tjudg{\conone,\im{\varone},\im{\lconone},\cm{\lcontwo}}{\termone}$
  and $\tjudg{\cm{\lconone},\cm{\lcontwo}}{\termtwo}$, then it holds that
  $\tjudg{\conone,\im{\lconone},\cm{\lcontwo}}{\sbst{\termone}{\varone}{\termtwo}}$.
\end{lemma}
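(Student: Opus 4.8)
The plan is to reuse the consistency technique of Lemma~\ref{lem:substlinllwot} essentially verbatim, the only genuinely new ingredient being the treatment of the coinductive rule. I would take
$$
\setone=\left\{
\tjudg{\conone,\im{\lconone},\cm{\lcontwo}}{\sbst{\termone}{\varone}{\termtwo}}\midd
\tjudg{\conone,\im{\varone},\im{\lconone},\cm{\lcontwo}}{\termone}\in\cd{\LLwot}\wedge
\tjudg{\cm{\lconone},\cm{\lcontwo}}{\termtwo}\in\cd{\LLwot}
\right\}\cup\cd{\LLwot},
$$
and prove that $\setone$ is $(\indrul{\LLwot}\circ\coindrul{\LLwot})$-consistent; by the fixed-point reasoning recalled above this yields $\setone\subseteq\cd{\LLwot}$ and hence the statement. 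Fix $\judgone\in\setone$. If $\judgone\in\cd{\LLwot}$ then $\judgone\in\indrul{\LLwot}(\coindrul{\LLwot}(\setone))$ by monotonicity, exactly as in the linear case. Otherwise $\judgone=\tjudg{\conone,\im{\lconone},\cm{\lcontwo}}{\sbst{\termone}{\varone}{\termtwo}}$ with both hypotheses in force, and I would peel the derivation of $\judgtwo=\tjudg{\conone,\im{\varone},\im{\lconone},\cm{\lcontwo}}{\termone}$ via $\cd{\LLwot}=\indrul{\LLwot}(\coindrul{\LLwot}(\cd{\LLwot}))$: it arises from premises in $\cd{\LLwot}$ by one coinductive rule followed by $\natone$ inductive rules, and I would induct on $\natone$.

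The step $\natone>0$ is routine and parallels the linear case. A case analysis on the topmost inductive rule of $\judgtwo$ gives three possibilities: the variable axiom for $\varone$, where $\termone=\varone$, $\sbst{\termone}{\varone}{\termtwo}=\termtwo$, and the second hypothesis together with a Weakening Lemma place $\judgone$ in $\cd{\LLwot}$; a variable axiom for some other variable, where $\sbst{\termone}{\varone}{\termtwo}$ is left unchanged and already lies in $\cd{\LLwot}$; or one of $\LLapp,\LLlaml,\LLlami,\LLlamc,\LLlmi$, where substitution commutes with the constructor, the induction hypothesis applies to the premises (which carry fewer than $\natone$ inductive rules), and closure of $\indrul{\LLwot}(\coindrul{\LLwot}(\setone))$ under inductive rules---after a Weakening Lemma for the boxed binders $\LLlami,\LLlamc$---delivers $\judgone\in\indrul{\LLwot}(\coindrul{\LLwot}(\setone))$.

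The real obstacle, and the one place where the argument departs from the linear case, is the base case $\natone=0$. Here $\judgtwo$ is the conclusion of the lone coinductive rule $\LLlmc$, so $\termone=\cm{\termthree}$, the environment $\conone$ must consist only of boxed patterns (the alternative rendering this case vacuous), and the premise $\tjudg{\conone,\im{\varone},\im{\lconone},\cm{\lcontwo}}{\termthree}$ lies in $\cd{\LLwot}$. In the linear case this situation was impossible, because a linear variable cannot inhabit the all-boxed environment produced by $\LLlmc$; here the boxed variable $\im{\varone}$ is perfectly compatible with it, so the branch is live and must be discharged. The key is that substitution commutes with the coinductive box, $\sbst{\termone}{\varone}{\termtwo}=\cm{(\sbst{\termthree}{\varone}{\termtwo})}$, and that the judgment $\tjudg{\conone,\im{\lconone},\cm{\lcontwo}}{\sbst{\termthree}{\varone}{\termtwo}}$ is itself a member of $\setone$, since its two defining hypotheses---the premise just obtained and the untouched second hypothesis---both hold. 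I would therefore close the branch by a single coinductive application of $\LLlmc$ to this premise drawn from $\setone$, exhibiting $\judgone\in\coindrul{\LLwot}(\setone)\subseteq\indrul{\LLwot}(\coindrul{\LLwot}(\setone))$. It is exactly this guarded hand-off back into $\setone$---rather than a direct construction of a derivation in $\cd{\LLwot}$, which is unavailable once $\varone$ occurs at unbounded depth and $\sbst{\termone}{\varone}{\termtwo}$ is genuinely infinite---that the coinductive consistency principle licenses, and arranging it correctly is the crux of the proof.
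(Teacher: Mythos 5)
Your proof is correct and follows the same consistency-set argument that the paper itself invokes here (its entire proof of this lemma is the remark that the structure is identical to that of Lemma~\ref{lem:substlinllwot}). You in fact supply the one detail that remark glosses over: the base case $\natone=0$, which is vacuous in the linear case because a bare variable cannot inhabit the all-boxed environment of $\LLlmc$'s conclusion, becomes live once the variable is marked $\im{\varone}$, and your guarded hand-off of $\tjudg{\conone,\im{\lconone},\cm{\lcontwo}}{\sbst{\termthree}{\varone}{\termtwo}}$ back into $\setone$ followed by one application of $\LLlmc$ is exactly the right way to discharge it within the $(\indrul{\LLwot}\circ\coindrul{\LLwot})$-consistency framework.
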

\begin{proof}
The structure of this proof is identical to the one of Lemma~\ref{lem:substlinllwot}.
\end{proof}
The following is an analogue of the so-called Subject Reduction Theorem, and is
an easy consequence of substitution lemmas:
\begin{proposition}[Well-Formedness is Preseved by Reduction]
  If $\tjudg{\conone}{\termone}$ and $\termone\redLLwod\termtwo$, then $\tjudg{\conone}{\termtwo}$.
\end{proposition}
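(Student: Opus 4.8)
The plan is to split the argument into a local step and a propagation step. The local step shows that a single application of basic reduction preserves well-formedness \emph{without changing the environment}; the propagation step carries this through an arbitrary $\bsone$-context. Concretely, I would prove, by induction on the (necessarily finite) structure of the $\bsone$-context $\pctxone{\bsone}$, the statement: if $\tjudg{\conthree}{\actx{\pctxone{\bsone}}{\termthree}}$ and $\termthree\redLLbas\termfour$, then $\tjudg{\conthree}{\actx{\pctxone{\bsone}}{\termfour}}$. Since $\termone\redLLwod\termtwo$ means precisely that $\termone=\actx{\pctxone{\bsone}}{\termthree}$ and $\termtwo=\actx{\pctxone{\bsone}}{\termfour}$ for some $\bsone$, some $\pctxone{\bsone}$, and some redex $\termthree\redLLbas\termfour$, this statement is exactly the proposition. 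It is worth stressing at the outset that the induction is on the finite skeleton of the context---the path from the root to the hole is finite---and \emph{not} on the derivation, which is in general infinite.

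The base case $\pctxone{\varepsilon}=\emctx$ is where the three substitution lemmas enter, since then $\actx{\emctx}{\termthree}=\termthree$ is itself a basic redex. The derivation of $\tjudg{\conthree}{\termthree}$ must end in $\LLapp$, so $\conthree=\conone,\contwo,\im{\lconone},\cm{\lcontwo}$, with the function typed by $\tjudg{\conone,\im{\lconone},\cm{\lcontwo}}{\cdot}$ and the argument by $\tjudg{\contwo,\im{\lconone},\cm{\lcontwo}}{\cdot}$. I then split on the redex. For $\ap{(\la{\varone}{\termfive})}{\termsix}$, inverting $\LLlaml$ gives $\tjudg{\conone,\varone,\im{\lconone},\cm{\lcontwo}}{\termfive}$, and Lemma~\ref{lem:substlinllwot} applied to this and the argument judgment yields $\tjudg{\conthree}{\sbst{\termfive}{\varone}{\termsix}}$ at once. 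For $\ap{(\ia{\varone}{\termfive})}{\im{\termsix}}$, inverting $\LLlami$ gives $\tjudg{\conone,\im{\varone},\im{\lconone},\cm{\lcontwo}}{\termfive}$, whereas inverting $\LLlmi$ on the argument forces $\contwo$ to be purely boxed, say $\contwo=\im{\lconthree},\cm{\lconfour}$, and produces $\tjudg{\im{\lconthree},\cm{\lconfour},\im{\lconone},\cm{\lcontwo}}{\termsix}$; the Weakening Lemma already used above extends the environment of $\termfive$ with $\im{\lconthree},\cm{\lconfour}$ so that the two boxed contexts coincide, and Lemma~\ref{lem:substindllwot} then gives $\tjudg{\conthree}{\sbst{\termfive}{\varone}{\termsix}}$. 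The redex $\ap{(\ca{\varone}{\termfive})}{\cm{\termsix}}$ is treated in exactly the same way, inverting $\LLlamc$ and the coinductive rule $\LLlmc$ and invoking Lemma~\ref{lem:substcoillwot}.

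For the inductive steps I invert the last rule of the derivation---uniquely determined by the outermost symbol of $\pctxone{\bsone}$---apply the induction hypothesis to the single premise whose subject carries the hole, and then reapply the same rule to the modified premise together with the untouched off-path premise(s); since the shared boxed context and the off-path data are unchanged, the reconstructed conclusion still has environment $\conthree$. The one case requiring care is $\pctxone{\bsone}=\cm{\pctxone{\startwo}}$: the only rule concluding a judgment whose subject is a coinductive box is $\LLlmc$, which is coinductive, so inverting it must be justified; but since $\cd{\LLwot}=\indrul{\LLwot}(\coindrul{\LLwot}(\cd{\LLwot}))$ and no inductive rule can conclude such a judgment, the derivation of $\tjudg{\conthree}{\cm{\cdot}}$ necessarily ends in $\LLlmc$ with its premise in $\cd{\LLwot}$, and grafting the (possibly infinite) subderivation of the reduct under a fresh $\LLlmc$ yields again a legal derivation, every infinite branch still crossing $\LLlmc$ infinitely often. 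I expect the main obstacle to lie exactly at this seam between the inductive recursion on the finite context and the coinductive reading of $\cd{\LLwot}$, together with the environment bookkeeping and the two appeals to Weakening in the boxed base cases; once these are handled, the result is indeed the announced easy consequence of the substitution lemmas.
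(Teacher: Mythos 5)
Your proposal is correct and takes essentially the same route as the paper's proof: the three substitution lemmas discharge the basic redex at the hole, and an induction on the finite structure of the enclosing context propagates preservation of the judgment. The extra care you take with inversion of the coinductive rule $\LLlmc$ and with the Weakening needed to align the boxed environments in the inductive-box case only makes explicit what the paper leaves implicit.
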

\begin{proof}
Let us first of all prove that if $\termone\redLLbas\termtwo$, and $\tjudg{\conone}{\termone}$, then
$\tjudg{\conone}{\termtwo}$. let us distinguish three cases:
\begin{varitemize}
\item
  If $\termone$ is $\ap{(\la{\varone}{\termthree})}{\termfour}$, then
  $\tjudg{\contwo,\varone,\im{\lconone},\cm{\lcontwo}}{\termthree}$ and
  $\tjudg{\conthree,\im{\lconone},\cm{\lcontwo}}{\termfour}$, where 
  $\conone=\contwo,\conthree,\im{\lconone},\cm{\lcontwo}$.
  By Lemma \ref{lem:substlinllwot}, one gets that
  $\tjudg{\conone}{\sbst{\termthree}{\varone}{\termfour}}$, which is the thesis.
\item
  If $\termone$ is $\ap{(\ia{\varone}{\termthree})}{\im{\termfour}}$, then
  $\tjudg{\contwo,\im{\varone},\im{\lconone},\cm{\lcontwo}}{\termthree}$ and
  $\tjudg{\im{\lconone},\cm{\lcontwo}}{\termfour}$, where 
  $\conone=\contwo,\im{\lconone},\cm{\lcontwo}$. By Lemma \ref{lem:substindllwot}, one gets that
  $\tjudg{\conone}{\sbst{\termthree}{\varone}{\termfour}}$, which is the thesis.
\item
  If $\termone$ is $\ap{(\ca{\varone}{\termthree})}{\cm{\termfour}}$, then
  $\tjudg{\contwo,\im{\lconone},\cm{\varone},\cm{\lcontwo}}{\termthree}$ and
  $\tjudg{\im{\lconone},\cm{\lcontwo}}{\termfour}$, where 
  $\conone=\contwo,\im{\lconone},\cm{\lcontwo}$. By Lemma \ref{lem:substcoillwot}, one gets that
  $\tjudg{\conone}{\sbst{\termthree}{\varone}{\termfour}}$, which is the thesis.
\end{varitemize}
One can then prove that for every context $\ctxone$ and for every
pair of terms $\termone$ and $\termtwo$ such that $\termone\redLLbas\termtwo$,
if $\tjudg{\conone}{\actx{\ctxone}{\termone}}$
then $\tjudg{\conone}{\actx{\ctxone}{\termtwo}}$. This is an induction on the
structure of $\ctxone$.
\end{proof}
Finitary reduction, as a consequence, is well-defined not only on
preterms, but also on terms.

How about \emph{infinite} reduction? Actually, even defining what an
infinite reduction sequence \emph{is} requires some care. In this paper,
following~\cite{EndrullisPolonsky}, we define infinitary reduction by
way of a mixed formal system (see
Section~\ref{sect:basicproperties}). The judgments of this formal
system have two forms, namely $\sjudg{\termone\redLLinf\termtwo}$ and
$\sjudg{\termone\redLLnext\termtwo}$, and its rules are in
Figure~\ref{fig:llwotid}.
\begin{figure*}
\begin{center}
\fbox{
\begin{minipage}{.97\textwidth}
\begin{center}
{\footnotesize
$$
\infer=[]
{\sjudg{\termone\redLLinf\termthree}}
{\termone\redLLwod^*\termtwo & \sjudg{\termtwo\redLLnext\termthree}}
\qquad
\infer[]
{\sjudg{\termone\termthree\redLLnext\termtwo\termfour}}
{\sjudg{\termone\redLLnext\termtwo} & \sjudg{\termthree\redLLnext\termfour}}
\qquad
\infer[]
{\sjudg{\varone\redLLnext\varone}}
{}
\qquad
\infer[]
{\sjudg{\la{\varone}{\termone}\redLLnext\la{\varone}{\termtwo}}}
{\sjudg{\termone\redLLnext\termtwo}}
$$
$$
\infer[]
{\sjudg{\ia{\varone}{\termone}\redLLnext\ia{\varone}{\termtwo}}}
{\sjudg{\termone\redLLnext\termtwo}}
\qquad
\infer[]
{\sjudg{\ca{\varone}{\termone}\redLLnext\ca{\varone}{\termtwo}}}
{\sjudg{\termone\redLLnext\termtwo}}
\qquad
\infer[]
{\sjudg{\im{\termone}\redLLnext\im{\termtwo}}}
{\sjudg{\termone\redLLnext\termtwo}}
\qquad
\infer[]
{\sjudg{\cm{\termone}\redLLnext\cm{\termtwo}}}
{\sjudg{\termone\redLLinf\termtwo}}
$$
}
\end{center}
\vspace{2pt}
\end{minipage}}
\end{center}
\caption{$\LLwot$: Infinitary Dynamics.}\label{fig:llwotid}
\end{figure*}
The relation $\redLLinf$ is the infinitary, coinductively defined,
notion of reduction we are looking for. Informally,
$\sjudg{\termone\redLLinf\termtwo}$ is provable (and we write,
simply, $\termone\redLLinf\termtwo$) iff there is a third term $\termthree$
such that $\termone$ reduces to $\termthree$ in a finite amount of
steps, and $\termthree$ itself reduces infinitarily to $\termtwo$
where, however, infinitary reduction is applied at depths higher
than one. The latter constraint is taken care of by $\redLLnext$.

An infinite reduction sequence, then, can be seen as being decomposed
into a finite prefix and finitely many infinite suffixes, each
involving subterm occurrences at higher depths. We claim that this
corresponds to strongly convergent reduction sequences as defined
in~\cite{Kennaway97TCS}, although a formal comparison is outside the
scope of this paper (see, however,~\cite{EndrullisPolonsky}).

What are the main properties of $\redLLinf$? Is it a confluent
notion of reduction?  Is it that $\termtwo$ is a normal form
whenever $\termone\redLLinf\termtwo$? Actually, the latter question can be
easily given a negative answer: take the unique preterm $\termone$
such that $\termone=\;\cm{(\ap{\termone}{(\ap{I}{I})})}$, where
$I=\la{\varone}{\varone}$ is the identity combinator. Of course,
$\tjudg{\emcon}{\termone}$. We can prove that both
$\termone\redLLinf\termtwo$ and that $\termone\redLLinf\termthree$,
where 
$$
\termtwo=\;\cm{(\ap{\cm{(\ap{\termtwo}{(\ap{I}{I})})}}{I})};\qquad
\termthree=\;\cm{(\ap{\cm{(\ap{\termthree}{I})}}{(\ap{I}{I})})}.
$$
(Infact, $\redLLinf$ is reflexive, see Lemma~\ref{lemma:reflextrans} below.)
Neither $\termtwo$ nor $\termthree$ is a normal form. It is easy to
realise that there is $\termfour$ to which both $\termtwo$ and
$\termthree$ reduces to, namely
$\termfour=\cm{(\ap{\cm{(\ap{\termfour}{I})}}{I})}$. Confluence, however, does
not hold in general, as can be easily shown by considering
the following two terms $\termone$ and $\termtwo$:
$$
\termone=K\cm{\termtwo}\cm{K};\qquad\termtwo=K\cm{\termone}\cm{I};
$$
where
$K=\ca{\varone}{\ca{\vartwo}{\varone}}$ and $I=\ca{\varone}{\varone}$.
If we reduce $\termone$ at even and at odd depths, we end up
at two terms $\termthree$ and $\termfour$ which cannot be joined
by $\redLLinf$, namely the following:
$$
\termthree=K\cm{\termthree}\cm{I};\qquad \termfour=K\cm{\termfour}\cm{K}.
$$
The deep reason why this phenomenon happens is an interference between
$\redLLwod$ and $\redLLnext$: there are $\termfive$ and $\termsix$
such that $\termone\redLLwod\termfive$ and
$\termone\redLLnext\termsix$, but there is no $\termseven$ such that
$\termfive\redLLnext\termseven$ and $\termsix\redLLwod\termseven$.

\subsection{Level-by-Level Reduction}\label{sect:levelbylevel}
One restriction of $\redLLinf$ that will be useful in the following
is the so called \emph{level-by-level} reduction, which is obtained
by constraining reduction to occur at deeper levels only if no redex
occurs at outer levels. Formally, let $\redLLwodlbl$ be the
restriction of $\redLLwod$ obtained by stipulating that
$(\termone,\bsone,\termtwo)\in\redLLwodlbl$ iff there are a
$\bsone$-context $\pctxone{\bsone}$ and two terms $\termthree$ and
$\termfour$ such that $\termthree\redLLbas\termfour$,
$\termone=\actx{\pctxone{\bsone}}{\termthree}$, and
$\termtwo=\actx{\pctxone{\bsone}}{\termfour}$, \emph{and moreover},
$\termone$ is $\bstwo$-normal for every prefix $\bstwo$ of $\bsone$.
Then one can obtain $\redLLnextlbl$ and $\redLLinflbl$ from
$\redLLwodlbl$ as we did for $\redLLnext$ and
$\redLLinf$ in Section~\ref{sect:fininfdyn} (see Figure~\ref{fig:llwotidlbl}).
\begin{figure*}
\begin{center}
\fbox{
\begin{minipage}{.97\textwidth}
\begin{center}
{\footnotesize
$$
\infer=[]
{\sjudg{\termone\redLLinflbl\termthree}}
{\termone\redLLwodlbl^*\termtwo & \sjudg{\termtwo\redLLnextlbl\termthree} & \termtwo\in\NFs{\strnat{0}}}
\qquad
\infer[]
{\sjudg{\termone\termthree\redLLnextlbl\termtwo\termfour}}
{\sjudg{\termone\redLLnextlbl\termtwo} & \sjudg{\termthree\redLLnextlbl\termfour}}
\qquad
\infer[]
{\sjudg{\varone\redLLnextlbl\varone}}
{}
\quad
\infer[]
{\sjudg{\la{\varone}{\termone}\redLLnextlbl\la{\varone}{\termtwo}}}
{\sjudg{\termone\redLLnextlbl\termtwo}}
$$
$$
\infer[]
{\sjudg{\ia{\varone}{\termone}\redLLnextlbl\ia{\varone}{\termtwo}}}
{\sjudg{\termone\redLLnextlbl\termtwo}}
\quad
\infer[]
{\sjudg{\ca{\varone}{\termone}\redLLnextlbl\ca{\varone}{\termtwo}}}
{\sjudg{\termone\redLLnextlbl\termtwo}}
\quad
\infer[]
{\sjudg{\im{\termone}\redLLnextlbl\im{\termtwo}}}
{\sjudg{\termone\redLLnextlbl\termtwo}}
\quad
\infer[]
{\sjudg{\cm{\termone}\redLLnextlbl\cm{\termtwo}}}
{\sjudg{\termone\redLLinflbl\termtwo}}
$$
}
\end{center}
\vspace{2pt}
\end{minipage}}
\end{center}
\caption{$\LLwot$: Level-by-Level Infinitary Dynamics.}\label{fig:llwotidlbl}
\end{figure*}

Clearly, if $\termone\redLLinflbl\termtwo$, then
$\termone\redLLinf\termtwo$. Moreover, $\redLLinflbl$, contrarily to
$\redLLinf$, is confluent, simply because $\redLLwodlbl$ satisfies a
diamond-property. This is not surprising, and has been already observed in
the realm of finitary rewriting~\cite{Simpson05}.  Moreover,
level-by-level is effective: only a finite portion of
$\termone$ needs to be inspected in order to check if a given redex
occuring in $\termone$ can be fired (or to find one if $\termone$
contains one). Indeed, it will used to define what it means for
a term in $\LLwot$ to compute a function, which is the main topic
of the following section.

\section{On the Expressive Power of $\LLwot$}
The just introduced calculus $\LLwot$ can be seen as a refinement of
$\Lwot$ obtained by giving a first-order status to depths, i.e., by
introducing a specific construct which makes the depth to increase
when crossing it. In this section, we will give an interesting result
about the absolute expressive power of the introduced calculus: not
only functions on finite strings can be expressed, but also functions
on \emph{infinite} strings.  Before doing that, we will investigate on
the possibility to embed existing infinitary $\lambda$-calculi from
the literature.
\subsection{Embedding $\Lwot$}
Some introductory words about $\Lwot$ are now in order
(see~\cite{Kennaway97TCS} or \cite{Kennaway03} for more
details). Originally $\Lwot$ has been defined based 
on completing the space of $\lambda$-terms with respect
to a metric. Here we reformulate the calculus 
differently, based on coinduction. 

In $\Lwot$, there are many choices as to where the underlying depth
can increase. Indeed, \emph{eight} different calculi can be
defined. More specifically, for every
$\bitone,\bittwo,\bitthree\in\BB$,
$\Lwotp{\bitone}{\bittwo}{\bitthree}$ is obtained by stipulating that:
\begin{varitemize}
\item
 the depth increases while crossing abstractions iff $\bitone=1$; 
\item
 the depth increases when going through the first argument of an application iff $\bittwo=1$;
\item
 the depth increases when entering the second argument of an application iff $\bitthree=1$.
\end{varitemize}
Formally, one can define terms of $\Lwotp{\bitone}{\bittwo}{\bitthree}$
as those (finite or infinite) $\lambda$-terms $\termone$ such that
$\ptjudg{\conone}{\bitone\bittwo\bitthree}{\termone}$ is derivable
through the rules in Figure~\ref{fig:lwotwfr}.
\begin{figure*}
\begin{center}
\fbox{
\begin{minipage}{.97\textwidth}
\begin{center}
{\footnotesize
$$
\infer[\Lvar]
{\ptjudg{\conone,\varone}{\bitone\bittwo\bitthree}{\varone}}
{}
\qquad
\infer[\Lapp]
{\ptjudg{\conone}{\bitone\bittwo\bitthree}{\ap{\termone}{\termtwo}}}
{
  \dptjudg{\conone}{\bitone\bittwo\bitthree}{\funsym}{\termone}
  &
  \dptjudg{\conone}{\bitone\bittwo\bitthree}{\argsym}{\termtwo}
}
\qquad
\infer[\Llam]
{\ptjudg{\conone}{\bitone\bittwo\bitthree}{\la{\varone}{\termone}}}
{
  \dptjudg{\varone,\conone}{\bitone\bittwo\bitthree}{\lamsym}{\termone}
}
$$
$$
\infer[\Largi]
{\dptjudg{\conone}{0\bittwo\bitthree}{\argsym}{\termone}}
{\ptjudg{\conone}{0\bittwo\bitthree}{\termone}}
\qquad
\infer=[\Largc]
{\dptjudg{\conone}{1\bittwo\bitthree}{\argsym}{\termone}}
{\ptjudg{\conone}{1\bittwo\bitthree}{\termone}}
\qquad
\infer[\Lfuni]
{\dptjudg{\conone}{\bitone 0 \bitthree}{\funsym}{\termone}}
{\ptjudg{\conone}{\bitone 0 \bitthree}{\termone}}
\qquad
\infer=[\Lfunc]
{\dptjudg{\conone}{\bitone 1 \bitthree}{\funsym}{\termone}}
{\ptjudg{\conone}{\bitone 1 \bitthree}{\termone}}
\qquad
\infer[\Llami]
{\dptjudg{\conone}{\bitone\bittwo 0}{\lamsym}{\termone}}
{\ptjudg{\conone}{\bitone\bittwo 0}{\termone}}
\qquad
\infer=[\Llamc]
{\dptjudg{\conone}{\bitone\bittwo 1}{\lamsym}{\termone}}
{\ptjudg{\conone}{\bitone\bittwo 1}{\termone}}
$$
}
\end{center}
\vspace{2pt}
\end{minipage}}
\end{center}
\caption{$\Lwot$: Well-Formation Rules.}\label{fig:lwotwfr}
\end{figure*}
Finite and infinite reduction sequences can be defined exactly as we
have just done for $\LLwot$. The obtained calculi have a
very rich and elegant mathematical theory. Not much is known, however,
about whether $\Lwot$ can be tailored as to guarantee key properties
of programs working on streams, like productivity.

Let us now show how $\Lwotp{0}{0}{0}$ and $\Lwotp{0}{0}{1}$ can indeed
be embedded into $\LLwot$. For every binary digit $\bitone$, the map
$\peremb{\cdot}{\bitone}$ from the space of terms of
$\Lwotp{0}{0}{\bitone}$ into the space of preterms is defined as
follows:
\begin{align*}
 \peremb{\varone}{\bitone}&=\varone;\\
 \peremb{\ap{\termone}{\termtwo}}{\bitone}&=\ap{\peremb{\termone}{\bitone}}{\pam{\bitone}{\peremb{\termtwo}{\bitone}}};\\
 \peremb{\ab{\varone}{\termone}}{\bitone}&=\pa{\bitone}{\varone}{\peremb{\termone}{\bitone}};
\end{align*}
where the expression $\pam{\bitone}{\termone}$ is defined to be $\im{\termone}$
if $\bitone=0$ and $\cm{\termone}$ if $\bitone=1$. Please observe that
$\peremb{\cdot}{\bitone}$ is defined by coinduction on the space of
terms of $\Lwotp{0}{0}{\bitone}$, which contains possibly infinite
objects. By the way, $\peremb{\cdot}{\bitone}$ can be seen as Girard's
embedding of intuitionistic logic into linear logic where, however,
the kind of boxes and abstractions we use depends on $\bitone$: we go
inductive if $\bitone=0$ and coinductive otherwise.

First of all, preterms obtained via the embedding are actuall
(depending on $\bitone$) in the environment:
\begin{lemma}
 For every $\termone\in\Lambda_{00\bitone}$, it holds that $\tjudg{\pam{\bitone}{\FV{\termone}}}{\peremb{\termone}{\bitone}}$.
\end{lemma}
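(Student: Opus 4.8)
The plan is to use the coinductive proof principle recalled above: to prove each $\tjudg{\pam{\bitone}{\FV{\termone}}}{\peremb{\termone}{\bitone}}$ lies in $\cd{\LLwot}$, I would exhibit a set $\setone$ of judgments containing all of them and check that it is $(\indrul{\LLwot}\circ\coindrul{\LLwot})$-consistent, i.e. $\setone\subseteq\indrul{\LLwot}(\coindrul{\LLwot}(\setone))$. Because $\LLapp$ shares \emph{all} boxed variables between its two premises, it is cleanest to build weakening into the candidate from the start, taking
$$
\setone=\left\{\tjudg{\pam{\bitone}{\conone}}{\peremb{\termone}{\bitone}}\midd\termone\in\Lambda_{00\bitone},\ \FV{\termone}\subseteq\conone\right\},
$$
where $\pam{\bitone}{\conone}$ is the environment carrying the pattern $\pam{\bitone}{\varone}$ for each $\varone\in\conone$. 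The Lemma is then the special case $\conone=\FV{\termone}$.

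For the consistency check I would argue by cases on the shape of $\termone$, in each case building a finite tree of \emph{inductive} rules whose leaves are either axioms or conclusions of a single coinductive $\LLlmc$-step with premise back in $\setone$. A variable $\termone=\varone$ gives $\peremb{\termone}{\bitone}=\varone$, which is the axiom $\LLvari$ or $\LLvarc$ according to $\bitone$. An abstraction $\termone=\ab{\varone}{\termtwo}$ gives $\peremb{\termone}{\bitone}=\pa{\bitone}{\varone}{\peremb{\termtwo}{\bitone}}$, and one application of $\LLlami$ or $\LLlamc$ (both inductive) reduces the goal to $\tjudg{\pam{\bitone}{\conone},\pam{\bitone}{\varone}}{\peremb{\termtwo}{\bitone}}$, which is in $\setone$ since $\FV{\termtwo}\subseteq\conone\cup\{\varone\}$. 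The decisive case is application $\termone=\ap{\termtwo}{\termthree}$, where $\peremb{\termone}{\bitone}=\ap{\peremb{\termtwo}{\bitone}}{\pam{\bitone}{\peremb{\termthree}{\bitone}}}$: here I would apply $\LLapp$ sharing the whole boxed environment $\pam{\bitone}{\conone}$ across both premises, derive the right premise $\tjudg{\pam{\bitone}{\conone}}{\pam{\bitone}{\peremb{\termthree}{\bitone}}}$ from $\tjudg{\pam{\bitone}{\conone}}{\peremb{\termthree}{\bitone}}\in\setone$ by the box rule (legitimate since $\pam{\bitone}{\conone}$ is entirely boxed; it is $\LLlmi$ when $\bitone=0$ and the coinductive $\LLlmc$ when $\bitone=1$, the only place any coinductive step is used), and \emph{continue} decomposing the left premise $\tjudg{\pam{\bitone}{\conone}}{\peremb{\termtwo}{\bitone}}$ rather than treating it as a leaf, thereby following the head spine of $\termone$.

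The point requiring real argument --- and the main obstacle --- is that this inductive phase is always finite, so that what I build is a genuine application of $\indrul{\LLwot}$ rather than an ill-founded inductive tree. When $\bitone=0$ there is nothing to prove: $\Lambda_{000}$ contains only finite terms, no coinductive rule is ever invoked, and the whole derivation is one finite inductive tree. When $\bitone=1$ I would inherit finiteness from the well-formedness derivation of $\termone$ in $\Lambda_{001}$, whose only coinductive rule is the one applied at argument positions (so that $\Lambda_{001}$ is exactly the calculus whose depth grows only inside arguments). The head spine traverses function positions and abstraction bodies but never an argument, hence it is a branch using inductive rules only; were it infinite, $\termone$'s derivation would contain an infinite branch eventually consisting of inductive rules alone, contradicting the validity of mixed derivations. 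Thus the head spine is finite, the inductive phase closes off at the head variable, and each argument met along the way spawns a coinductive $\LLlmc$-leaf whose premise is again in $\setone$. This yields $\setone\subseteq\indrul{\LLwot}(\coindrul{\LLwot}(\setone))$ and hence the Lemma.
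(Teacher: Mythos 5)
Your proposal is correct and follows essentially the same route as the paper: both establish that a candidate set of judgments is $(\indrul{\LLwot}\circ\coindrul{\LLwot})$-consistent, with the finite inductive phase given by the portion of the term lying above all (coinductively boxed) argument positions --- what the paper writes as a finite multi-hole context $\lctxone$ and you call the head spine. Your enlargement of the candidate set to absorb weakening under $\LLapp$, and your explicit argument that the inductive phase is finite, are welcome refinements of details the paper's terse proof leaves implicit.
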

\begin{proof}
  We proceed by showing that the following set of judgments $\setone$
  is consistent with $\LLwot$:
$$
\left\{\tjudg{\pam{\bitone}{\FV{\termone}}}{\peremb{\termone}{\bitone}}\midd\termone\in\Lwotp{0}{0}{\bitone}\right\}.
$$
Suppose that $\tjudg{\pam{\bitone}{\FV{\termone}}}{\peremb{\termone}{\bitone}}$, where
$\termone\in\Lwotp{0}{0}{\bitone}$. This implies that $\termone=\actx{\lctxone}{\termtwo_1,\ldots,\termtwo_\natone}$,
where $\termtwo_1,\ldots,\termtwo_\natone\in\Lwotp{0}{0}{\bitone}$. The fact that
$\termone\in\indrul{\LLwot}(\coindrul{\LLwot}(\setone))$ can be proved by induction on $\lctxone$, with
different cases depending on the value of $\bitone$.
\end{proof}
From a dynamical point of view, this embedding is \emph{perfect}: not only basic reduction in $\Lwot$ can be simulated
in $\LLwot$, but any reduction we do in $\peremb{\termone}{\bitone}$ can be traced back to a reduction happening
in $\termone$.
\begin{lemma}[Perfect Simulation]\label{lemma:perfsim}
 For every $\termone\in\Lwotp{0}{0}{\bitone}$, if $\termone\redlambda{\natone}\termtwo$, then
 $\peremb{\termone}{\bitone}\redLL{\strnat{\natone}}\peremb{\termtwo}{\bitone}$.
 Moreover, for every $\termone\in\Lambda_{00\bitone}$, if 
 $\peremb{\termone}{\bitone}\redLL{\strnat{\natone}}\termtwo$, then there
 is $\termthree$ such that $\termone\redlambda{\natone}\termthree$ and
 $\peremb{\termthree}{\bitone}\syneq\termtwo$.
\end{lemma}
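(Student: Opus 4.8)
The plan is to reduce both directions to a single structural correspondence between the contexts of $\Lwotp{0}{0}{\bitone}$ and the $\bsone$-contexts of $\LLwot$, mediated by the embedding. First I would prove a substitution lemma for $\peremb{\cdot}{\bitone}$, namely that $\peremb{\sbst{\termthree}{\varone}{\termfour}}{\bitone}\syneq\sbst{\peremb{\termthree}{\bitone}}{\varone}{\peremb{\termfour}{\bitone}}$ for all $\termthree,\termfour$. Since both substitution and $\peremb{\cdot}{\bitone}$ are given by guarded coinductive equations, the two sides are solutions of one and the same guarded system, so equality follows by uniqueness of solutions. Next I would extend $\peremb{\cdot}{\bitone}$ to contexts by sending the hole to the hole, and record two facts. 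The first is compositionality, $\peremb{\actx{\ctxone}{\termthree}}{\bitone}\syneq\actx{\peremb{\ctxone}{\bitone}}{\peremb{\termthree}{\bitone}}$, an easy induction on the finite path from the root of $\ctxone$ to its hole. The second is a depth--level correspondence: if the hole of $\ctxone$ sits at depth $\natone$ in $\Lwotp{0}{0}{\bitone}$, then $\peremb{\ctxone}{\bitone}$ is a $\bsone$-context for some $\bsone\in\strnat{\natone}$.

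To see the depth--level correspondence, note that along the path to the hole the embedding crosses a box exactly when $\ctxone$ enters the second argument of an application, and that this box is $\im{\cdot}$ when $\bitone=0$ and $\cm{\cdot}$ when $\bitone=1$; crossing an abstraction or a first argument introduces no box. Hence $\bsone$ consists of exactly one occurrence of the symbol $\bitone$ per second-argument crossing and nothing else, so the number of $1$'s in $\bsone$ equals the number of depth-increasing crossings of $\ctxone$ in $\Lwotp{0}{0}{\bitone}$, which is $\natone$. This is a short induction on $\ctxone$.

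The soundness direction is then immediate. Writing the step as $\termone\syneq\actx{\ctxone}{\termfive}$, $\termtwo\syneq\actx{\ctxone}{\termsix}$ with $\ctxone$ at depth $\natone$ and $\termfive\syneq\ap{(\ab{\varone}{\termseven})}{\termeight}$ contracting to $\termsix\syneq\sbst{\termseven}{\varone}{\termeight}$, compositionality gives $\peremb{\termone}{\bitone}\syneq\actx{\peremb{\ctxone}{\bitone}}{\peremb{\termfive}{\bitone}}$. Now $\peremb{\termfive}{\bitone}\syneq\ap{(\pa{\bitone}{\varone}{\peremb{\termseven}{\bitone}})}{\pam{\bitone}{\peremb{\termeight}{\bitone}}}$ is a basic redex of the matching kind, and $\peremb{\termfive}{\bitone}\redLLbas\sbst{\peremb{\termseven}{\bitone}}{\varone}{\peremb{\termeight}{\bitone}}\syneq\peremb{\termsix}{\bitone}$ by the substitution lemma; since $\peremb{\ctxone}{\bitone}$ is a $\bsone$-context with $\bsone\in\strnat{\natone}$, we conclude $\peremb{\termone}{\bitone}\redLL{\strnat{\natone}}\peremb{\termtwo}{\bitone}$.

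The reflection direction is where the real work lies, and it is the step I expect to be the main obstacle. Given $\peremb{\termone}{\bitone}\redLL{\bsone}\termtwo$ with $\bsone\in\strnat{\natone}$, I must show that the contracted redex is the image of a redex of $\termone$. The key is a coinductive classification of the occurrences of $\peremb{\termone}{\bitone}$: each is either an \emph{image} occurrence $\peremb{\termseven}{\bitone}$ for a subterm $\termseven$ of $\termone$, or a \emph{wrapper} occurrence $\pam{\bitone}{\peremb{\termseven}{\bitone}}$ for such a $\termseven$ appearing as a second argument. This invariant is preserved by descent, and has two consequences: the only abstractions occurring in the image are $\ia{\cdot}{\cdot}$ (if $\bitone=0$) or $\ca{\cdot}{\cdot}$ (if $\bitone=1$), and the only boxes are wrapper boxes of the corresponding kind. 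A basic redex is an application, so it can only sit at an image occurrence $\peremb{\termseven}{\bitone}$ with $\termseven\syneq\ap{\termnine}{\termten}$; for the function part $\peremb{\termnine}{\bitone}$ to be an abstraction and the wrapper argument $\pam{\bitone}{\peremb{\termten}{\bitone}}$ to match it, $\termnine$ must itself be an abstraction $\ab{\varone}{\termeleven}$, so $\termseven$ is a genuine redex of $\termone$. By compositionality the $\bsone$-context of the step is $\peremb{\ctxone}{\bitone}$ for the $\Lwot$-context $\ctxone$ pointing at $\termseven$; the depth--level correspondence forces $\ctxone$ to have depth $\natone$, and the substitution lemma identifies the $\LLwot$-contractum with $\peremb{\sbst{\termeleven}{\varone}{\termten}}{\bitone}$. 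Hence $\termone\redlambda{\natone}\termthree$ with $\peremb{\termthree}{\bitone}\syneq\termtwo$, where $\termthree\syneq\actx{\ctxone}{\sbst{\termeleven}{\varone}{\termten}}$. The delicate point --- and the reason the classification must be carried out coinductively --- is that $\peremb{\termone}{\bitone}$ is an infinite term: one must verify that the embedding introduces no boxes beyond the wrapper boxes, so that the finite level $\bsone$ read off the $\LLwot$-context decodes unambiguously as a sequence of second-argument crossings of $\ctxone$, with no spurious coinductive box inflating the number of $1$'s away from the genuine depth $\natone$.
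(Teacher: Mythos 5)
Your proof is correct and follows essentially the same route as the paper's: translate redexes to redexes of the matching kind, commute the embedding with substitution, and observe for the converse that every redex of $\peremb{\termone}{\bitone}$ is the image of one in $\termone$. You simply make explicit what the paper leaves as a sketch, in particular the context compositionality and the depth--level correspondence between $\redlambda{\natone}$ and $\redLL{\strnat{\natone}}$, which the paper's argument uses implicitly but never states.
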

\begin{proof}
Just consider how a redex $\ap{(\la{\varone}{\termone})}{\termtwo}$ in $\Lwotp{0}{0}{\bitone}$ is translated: it becomes
$\ap{(\pa{\bitone}{\varone}{\peremb{\termone}{\bitone}})}{\pam{\bitone}{\peremb{\termtwo}{\bitone}}}$.
As can be easily proved, for every $\bitone$ and for 
every $\termone,\termtwo\in\Lwotp{0}{0}{\bitone}$, 
$$
\sbst{(\peremb{\termone}{\bitone})}{\varone}{\peremb{\termtwo}{\bitone}}=
\peremb{\sbst{\termone}{\varone}{\termtwo}}{\bitone}.
$$
This means $\LLwot$ correctly simulates $\Lwotp{0}{0}{\bitone}$. For the converse, just observe that the only
redexes in $\peremb{\termone}{\bitone}$ are those corresponding to redexes from $\termone$.
\end{proof}
One may wonder whether $\Lwotp{0}{0}{1}$ is the only (non-degenerate) dialect of $\Lwot$ which
can be simulated in $\LLwot$. Actually, besides the perfect embedding we have just given there 
is also an \emph{imperfect} embedding of systems in the form $\Lwotp{\bitone}{0}{\bittwo}$
(where $\bitone$ and $\bittwo$ are binary digits) into $\LLwot$:
\begin{align*}
 \imperemb{\varone}{\bitone}{\bittwo}&=\varone;\\
 \imperemb{\ap{\termone}{\termtwo}}{\bitone}{\bittwo}&=\ap{(\pa{\bitone}{\varone}{\varone})}
 {(\ap{\imperemb{\termone}{\bitone}{\bittwo}}{\pam{\bittwo}{\imperemb{\termtwo}{\bitone}{\bittwo}}})};\\
 \imperemb{\ab{\varone}{\termone}}{\bitone}{\bittwo}&=\pa{\bittwo}{\varone}{\pam{\bittwo}{\imperemb{\termone}{\bitone}{\bittwo}}}.
\end{align*}
This is a variation on the so-called call-by-value embedding of
intuitionistic logic into linear logic (i.e. the embedding induced by
the map $(A\rightarrow B)^\bullet=!(A^\bullet)\multimap!(B^\bullet)$,
see~\cite{Maraist95ENTCS}). Please notice, however, that variables
occur nonlinearly in the environment, while the term itself is never
a box, contrarily to the usual call-by-value embedding (where at least
values are translated into boxes). As expected:
\begin{lemma}
  For every $\termone\in\Lambda_{\bitone0\bittwo}$, it holds that
  $\tjudg{\pam{\bittwo}{\FV{\termone}}}{\imperemb{\termone}{\bitone}{\bittwo}}$.
\end{lemma}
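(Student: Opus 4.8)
The plan is to follow the template used for the perfect embedding and exhibit a $(\indrul{\LLwot}\circ\coindrul{\LLwot})$-consistent set of judgments that contains the one in the statement. Concretely, I would take
$$
\setone=\left\{\tjudg{\pam{\bittwo}{\FV{\termone}}}{\imperemb{\termone}{\bitone}{\bittwo}}\midd\termone\in\Lwotp{\bitone}{0}{\bittwo}\right\}
$$
and show $\setone\subseteq\indrul{\LLwot}(\coindrul{\LLwot}(\setone))$; since $\cd{\LLwot}$ is the greatest fixpoint of $\indrul{\LLwot}\circ\coindrul{\LLwot}$, this post-fixpoint inclusion gives $\setone\subseteq\cd{\LLwot}$, which is exactly the claim. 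Fixing a judgment of $\setone$, the goal is thus to build a derivation for it whose internal nodes are inductive rules and whose leaves are either axioms or conclusions of a single application of the coinductive rule $\LLlmc$ to a premise lying again in $\setone$ (i.e.\ leaves in $\coindrul{\LLwot}(\setone)$).

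As in the perfect case, I would organise the derivation around the finite level-$0$ skeleton of $\termone$: writing $\termone=\actx{\lctxone}{\termtwo_1,\ldots,\termtwo_\natone}$, where $\lctxone$ is the multi-hole $\Lwot$-context cut at every position the embedding places under a coinductive box and $\termtwo_1,\ldots,\termtwo_\natone\in\Lwotp{\bitone}{0}{\bittwo}$, the derivation is built by induction on $\lctxone$. A variable is an instance of $\LLvari$ or $\LLvarc$ according to $\bittwo$; an abstraction $\ab{\varone}{\termfour}$ is produced by $\LLlami$ or $\LLlamc$ followed by the box rule matching $\pam{\bittwo}{\cdot}$; and an application $\ap{\termfour}{\termfive}$ is produced by two nested $\LLapp$ instances, the outer typing the combinator $\pa{\bitone}{\varthree}{\varthree}$ by a short inductive derivation ending in an axiom, the inner splitting the function $\imperemb{\termfour}{\bitone}{\bittwo}$ from the boxed argument $\pam{\bittwo}{\imperemb{\termfive}{\bitone}{\bittwo}}$. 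Because every entry of $\pam{\bittwo}{\FV{\termone}}$ is boxed, the context split required by $\LLapp$ is unconstrained, and the Weakening Lemma already invoked for the substitution lemmas lets me pass from $\FV{\termtwo_i}$ to $\FV{\termone}$ at each hole.

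The interesting point is the treatment of a hole. At a boxed subterm $\pam{\bittwo}{\imperemb{\termtwo_i}{\bitone}{\bittwo}}$ with $\bittwo=1$ the box is coinductive: it is obtained by $\LLlmc$ from $\tjudg{\cm{\FV{\termtwo_i}}}{\imperemb{\termtwo_i}{\bitone}{\bittwo}}\in\setone$, so the leaf lands in $\coindrul{\LLwot}(\setone)$, which is exactly what the consistency condition asks for and simultaneously guards the infinite descent inherited from $\termone$. When $\bittwo=0$ the boxes are the inductive $\im{\cdot}$, produced by $\LLlmi$, so no guard is introduced and the whole derivation must instead be finite and bottom out at genuine axioms.

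The main obstacle is precisely to justify that the induction on $\lctxone$ is well-founded, i.e.\ that $\lctxone$ is finite, equivalently that no branch of the resulting derivation consists of inductive rules only. For $\bittwo=1$ this is immediate: abstraction bodies and arguments all become coinductive boxes, and the function spine of any term of $\Lwotp{\bitone}{0}{\bittwo}$ is finite because the middle index rules out depth increases in function position, so the skeleton is finite and every infinite branch is cut by an $\LLlmc$-node. The delicate work concentrates on reconciling, when $\bittwo=0$, the source-side notion of depth with the fact that the embedding uses only inductive boxes there: one must verify that the remaining degree of freedom $\bitone$ does not yield an infinite skeleton (and hence an inductive-only infinite branch) before the rule-by-rule bookkeeping sketched above goes through. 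This interaction between $\bitone$ and $\bittwo$, rather than any individual case, is the crux of the argument.
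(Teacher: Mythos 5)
The paper states this lemma without proof, so the only benchmark is its proof of the analogous lemma for the \emph{perfect} embedding, and your proposal reproduces that strategy faithfully: exhibit the set of judgments as a post-fixpoint of $\indrul{\LLwot}\circ\coindrul{\LLwot}$, decompose $\termone$ into a level-$0$ skeleton applied to boxed subterms, and induct on the skeleton. Your rule-by-rule bookkeeping is right (the all-boxed environment makes the $\LLapp$ splits and the weakenings harmless), and your treatment of $\bittwo=1$ is sound: since the middle index forbids depth increases in function position, every infinite branch of a source term eventually enters an abstraction body or an argument position, and both are wrapped in coinductive boxes by the embedding, so every infinite branch of the candidate derivation crosses $\LLlmc$ infinitely often.

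The gap is that the verification you defer for $\bittwo=0$ does not go through when $\bitone=1$; it is not a delicate reconciliation but an outright obstruction. Take the term $\termone\in\Lwotp{1}{0}{0}$ with $\termone=\la{\varone}{\termone}$, which the paper explicitly notes is well-formed in $\Lwotp{1}{0}{0}$. Its image satisfies $\imperemb{\termone}{1}{0}=\ia{\varone}{\im{\imperemb{\termone}{1}{0}}}$: an infinite preterm containing no coinductive box whatsoever, since with $\bittwo=0$ the embedding only ever produces $\im{\cdot}$. Any candidate well-formation derivation therefore has an infinite branch consisting solely of inductive rule instances, so the judgment is not derivable --- your skeleton $\lctxone$ is the whole (infinite) term and the induction is not well-founded. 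In other words, read literally over all four combinations of $\bitone$ and $\bittwo$, the statement itself fails at $(\bitone,\bittwo)=(1,0)$; the coinductive freedom of the source calculus governed by $\bitone$ is simply not matched by any coinductive box on the target side. Your proof is complete for $\bittwo=1$ and for $\bitone=\bittwo=0$ (where the source calculus is finitary), but the remaining sub-case must either be excluded from the statement or flagged as a counterexample rather than left as a verification to be carried out.
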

As can be easily realised, any $\beta$ step in $\Lwot$ can be
simulated by \emph{two} reduction steps in $\LLwot$. This makes the
simulation imperfect:
\begin{lemma}[Imperfect Simulation]
 For every $\termone\in\Lambda_{\bitone0\bittwo}$, if $\termone\redlambda{\natone}\termtwo$, then
 $\imperemb{\termone}{\bitone}{\bittwo}\redLL{\strnat{\natone}}^2\imperemb{\termtwo}{\bitone}{\bittwo}$.
\end{lemma}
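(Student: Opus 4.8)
The plan is to reduce the statement to a purely local analysis of a single $\beta$-step and then lift it through contexts, exactly as in the proof of Lemma~\ref{lemma:perfsim}. First I would record the key algebraic fact that the embedding commutes with substitution, i.e.\ that $\sbst{(\imperemb{\termone}{\bitone}{\bittwo})}{\varone}{\imperemb{\termtwo}{\bitone}{\bittwo}}=\imperemb{\sbst{\termone}{\varone}{\termtwo}}{\bitone}{\bittwo}$ for all $\termone,\termtwo\in\Lwotp{\bitone}{0}{\bittwo}$. Since $\imperemb{\cdot}{\bitone}{\bittwo}$ is defined by coinduction on possibly infinite terms, this is not an ordinary structural induction but is established by the consistency method of Section~\ref{sect:basicproperties}: one shows that the set of pairs witnessing the equation is closed under the (guarded) defining clauses of substitution and of the embedding. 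This is the coinductive analogue of the one-line substitution remark used in Lemma~\ref{lemma:perfsim}.

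Next I would unfold the embedding of a redex. For a $\beta$-redex $\ap{(\la{\varone}{\termfour})}{\termfive}$ occurring in $\termone$, compositionality of the embedding gives $\imperemb{\ap{(\la{\varone}{\termfour})}{\termfive}}{\bitone}{\bittwo}=\ap{(\pa{\bitone}{\varfour}{\varfour})}{(\ap{(\pa{\bittwo}{\varone}{\pam{\bittwo}{\imperemb{\termfour}{\bitone}{\bittwo}}})}{\pam{\bittwo}{\imperemb{\termfive}{\bitone}{\bittwo}}})}$, where $\varfour$ is fresh. This term contains two nested redexes. The inner one is a genuine redex in which a $\psym{\bittwo}$-abstraction meets a $\psym{\bittwo}$-box; firing it and invoking the substitution identity above yields $\ap{(\pa{\bitone}{\varfour}{\varfour})}{\pam{\bittwo}{\imperemb{\sbst{\termfour}{\varone}{\termfive}}{\bitone}{\bittwo}}}$. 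The outer one is the administrative identity redex $\pa{\bitone}{\varfour}{\varfour}$, whose reduction strips both the identity and the box and produces precisely $\imperemb{\sbst{\termfour}{\varone}{\termfive}}{\bitone}{\bittwo}$, the embedding of the contractum. Thus each $\beta$-step is simulated by exactly two basic steps.

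It then remains to verify the level annotation $\redLL{\strnat{\natone}}^2$. By design the embedding inserts a coinductive box $\cm{\cdot}$ exactly at each point where the depth increases in $\Lwotp{\bitone}{0}{\bittwo}$ (entering the argument of an application when $\bittwo=1$, crossing an abstraction when $\bitone=1$), so an occurrence at depth $\natone$ in $\termone$ is mapped to a position reached by crossing exactly $\natone$ coinductive boxes in $\imperemb{\termone}{\bitone}{\bittwo}$, i.e.\ to a level lying in $\strnat{\natone}$. I would make this precise by embedding the surrounding context $\ctxone$ of the redex and checking, by induction on its structure, that $\imperemb{\ctxone}{\bitone}{\bittwo}$ is an $\strnat{\natone}$-context; both basic steps above take place at its hole, hence both occur at depth $\natone$, giving $\imperemb{\termone}{\bitone}{\bittwo}\redLL{\strnat{\natone}}^2\imperemb{\termtwo}{\bitone}{\bittwo}$.

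The main obstacle is the firing of the administrative redex: the identity carries modality $\psym{\bitone}$, whereas after the inner step it is applied to a box of modality $\psym{\bittwo}$, so one must carefully track the interplay between the two digits $\bitone$ and $\bittwo$ to guarantee that the corresponding basic reduction rule is applicable and that the box is genuinely consumed rather than left blocked. Secondary care is required for the substitution-commutation identity in the infinitary setting, where the naive inductive argument is unavailable and the consistency technique must be invoked, and for ensuring that neither of the two administrative manipulations accidentally shifts the depth at which the step is recorded.
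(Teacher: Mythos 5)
Your overall route is the intended one -- the paper in fact offers no proof of this lemma beyond the one-line remark that each $\beta$-step becomes two steps, and your decomposition (substitution commutation by coinduction, one genuine step for the inner redex, one administrative step for the dereliction identity, plus a context/depth bookkeeping argument) is exactly how that remark would be fleshed out. However, you stop at precisely the point where the proof can fail, and you say so yourself: you name the firing of the administrative redex as ``the main obstacle'' without resolving it. This is a genuine gap, not a formality. With the embedding as printed, after the inner step the term is $\ap{(\pa{\bitone}{\varfour}{\varfour})}{\pam{\bittwo}{\imperemb{\sbst{\termfour}{\varone}{\termfive}}{\bitone}{\bittwo}}}$, and the basic reduction rules of $\LLwot$ only fire when the modality of the abstraction agrees with that of the box ($\ia{\varone}{\termone}$ needs $\im{\termtwo}$, $\ca{\varone}{\termone}$ needs $\cm{\termtwo}$). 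So for $\bitone\neq\bittwo$ (e.g.\ $\Lwotp{1}{0}{0}$) the second step is simply not a redex and the simulation halts after one step.

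The resolution is that the abstraction clause of the embedding must wrap the body in $\pam{\bitone}{\cdot}$ rather than $\pam{\bittwo}{\cdot}$, i.e.\ $\imperemb{\ab{\varone}{\termone}}{\bitone}{\bittwo}=\pa{\bittwo}{\varone}{\pam{\bitone}{\imperemb{\termone}{\bitone}{\bittwo}}}$ (the printed clause is a typo). This is forced independently by your own depth argument: in $\Lwotp{\bitone}{0}{\bittwo}$ the depth increases when crossing an abstraction iff $\bitone=1$, so the box guarding the body must be coinductive iff $\bitone=1$; with the printed $\pam{\bittwo}{\cdot}$ your claim that an occurrence at depth $\natone$ in $\termone$ sits under exactly $\natone$ coinductive boxes in $\imperemb{\termone}{\bitone}{\bittwo}$ is also false when $\bitone\neq\bittwo$. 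Once the clause is corrected, the inner redex still matches ($\psym{\bittwo}$-abstraction against the $\psym{\bittwo}$-box on the argument), its contractum is a $\psym{\bitone}$-box, the outer identity $\pa{\bitone}{\varfour}{\varfour}$ consumes it, and both steps sit at the hole of the translated context, so the level annotation $\redLL{\strnat{\natone}}^2$ goes through exactly as you describe. You should make this correction explicit rather than leaving the modality mismatch as an unresolved ``obstacle''.
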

\begin{lemma}
  Moreover, for every $\termone\in\Lambda_{\bitone0\bittwo}$, if
  $\imperemb{\termone}{\bitone}{\bittwo}\redLL{\strnat{\natone}}\termtwo$, then
  there is $\termthree$ such that
  $\termone\redlambda{\natone}\termthree$ and
  $\peremb{\termthree}{\bitone}\syneq\termtwo$.
\end{lemma}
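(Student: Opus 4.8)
The plan is to reflect the single $\LLwot$-step back to a single $\Lwot$-step by classifying the redexes of $\imperemb{\termone}{\bitone}{\bittwo}$, in the spirit of the converse half of Lemma~\ref{lemma:perfsim}; the one new ingredient with respect to the perfect embedding is the administrative identity $\pa{\bitone}{\varone}{\varone}$ that $\imperemb{\cdot}{\bitone}{\bittwo}$ places in front of every translated application. First I would catalogue these redexes by coinduction on $\termone$. The image of an application $\ap{\termfour}{\termfive}$ is $\ap{(\pa{\bitone}{\varone}{\varone})}{(\ap{\imperemb{\termfour}{\bitone}{\bittwo}}{\pam{\bittwo}{\imperemb{\termfive}{\bitone}{\bittwo}}})}$, and its outermost application is never a redex: the argument is again an application, hence never a box of the shape $\pam{\bitone}{\cdot}$ demanded by the pattern of $\pa{\bitone}{\varone}{\varone}$. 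Thus the only redexes of $\imperemb{\termone}{\bitone}{\bittwo}$ are the inner applications $\ap{\imperemb{\termfour}{\bitone}{\bittwo}}{\pam{\bittwo}{\imperemb{\termfive}{\bitone}{\bittwo}}}$ in which $\imperemb{\termfour}{\bitone}{\bittwo}$ is an abstraction, which happens exactly when $\termfour$ is an abstraction, i.e.\ when $\ap{\termfour}{\termfive}$ is a $\beta$-redex of $\termone$. Tracking the depth as one descends through the translated boxes then gives a depth-preserving correspondence between the redexes fireable at level $\strnat{\natone}$ in $\imperemb{\termone}{\bitone}{\bittwo}$ and the $\beta$-redexes contractible at depth $\natone$ in $\termone$.

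Using this correspondence, from the hypothesis $\imperemb{\termone}{\bitone}{\bittwo}\redLL{\strnat{\natone}}\termtwo$ I would read off the unique $\beta$-redex $\ap{(\ab{\varthree}{\termsix})}{\termfive}$ of $\termone$ reflected by the fired step, set $\termthree$ to be its $\Lwot$-contractum, and so obtain $\termone\redlambda{\natone}\termthree$. The fired redex is the embedded one $\ap{(\pa{\bittwo}{\varthree}{\pam{\bittwo}{\imperemb{\termsix}{\bitone}{\bittwo}}})}{\pam{\bittwo}{\imperemb{\termfive}{\bitone}{\bittwo}}}$, and contracting it produces $\pam{\bittwo}{\imperemb{\sbst{\termsix}{\varthree}{\termfive}}{\bitone}{\bittwo}}$ once I invoke the substitution-commutation identity $\sbst{(\imperemb{\termsix}{\bitone}{\bittwo})}{\varthree}{\imperemb{\termfive}{\bitone}{\bittwo}}\syneq\imperemb{\sbst{\termsix}{\varthree}{\termfive}}{\bitone}{\bittwo}$, established exactly as the analogous identity in the proof of Lemma~\ref{lemma:perfsim}. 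This already delivers the $\Lwot$-step half of the statement, which is its substantive content.

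The main obstacle is the syntactic-identity clause $\peremb{\termthree}{\bitone}\syneq\termtwo$. Contracting only the inner redex turns the affected subterm into $\ap{(\pa{\bitone}{\varone}{\varone})}{\pam{\bittwo}{\imperemb{\sbst{\termsix}{\varthree}{\termfive}}{\bitone}{\bittwo}}}$, so the fired step leaves the administrative identity $\pa{\bitone}{\varone}{\varone}$ standing, now in redex position because its argument has just become a genuine box $\pam{\bittwo}{\cdot}$. Matching $\termtwo$ against the claimed right-hand side therefore forces one to account both for this pending administrative redex and for the fact that $\imperemb{\cdot}{\bitone}{\bittwo}$ decorates its boxes and abstractions with $\bittwo$ whereas $\peremb{\cdot}{\bitone}$ uses $\bitone$. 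I expect the reconciliation of these two decorations, together with the disposal of the trailing administrative identity, to be the crux of the argument: the residual $\termtwo$ agrees with the intended embedding of $\termthree$ only once the pending administrative contraction $\ap{(\pa{\bitone}{\varone}{\varone})}{\pam{\bittwo}{\cdot}}\redLLbas(\cdot)$ has been folded in, so making the equality literal amounts to identifying, level by level, the translated contractum sitting inside $\termtwo$ and verifying that this administrative layer is the unique discrepancy a single inner contraction can leave.
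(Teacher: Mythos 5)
The paper states this lemma without any proof, so there is nothing to compare your attempt against; judged on its own merits, your redex analysis is exactly right and is the only reasonable route. Your catalogue is correct: in $\imperemb{\termone}{\bitone}{\bittwo}$ the administrative abstraction is always applied to an application, never to a box, so it can never fire, and the only fireable redexes are the inner applications whose function part translates an abstraction of $\termone$; back-translating the fired redex via the substitution identity gives $\termthree$ with $\termone\redlambda{\natone}\termthree$, just as you say. Where you hedge---the clause $\peremb{\termthree}{\bitone}\syneq\termtwo$---you should have pressed your own observation to its conclusion: the clause is simply \emph{false} as printed, evidently a copy-and-paste slip from the second half of Lemma~\ref{lemma:perfsim}. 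Concretely, take $\bitone=\bittwo=1$ and $\termone=\ap{(\ab{\varthree}{\varthree})}{\vartwo}$; then $\imperemb{\termone}{\bitone}{\bittwo}=\ap{(\ca{\varone}{\varone})}{(\ap{(\ca{\varthree}{\cm{\varthree}})}{\cm{\vartwo}})}\redLL{\strnat{0}}\ap{(\ca{\varone}{\varone})}{\cm{\vartwo}}=\termtwo$, while $\termthree=\vartwo$ and $\peremb{\termthree}{\bitone}=\vartwo\neq\termtwo$ (nor is $\termtwo$ equal to $\imperemb{\termthree}{\bitone}{\bittwo}$). The provable statement, the one that pairs with the two-step forward simulation of the preceding lemma, is that $\termtwo$ contains exactly one newly enabled administrative redex at the contraction site and $\termtwo\redLL{\strnat{\natone}}\imperemb{\termthree}{\bitone}{\bittwo}$. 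Your sketch contains all the material needed to prove this corrected claim, and you should have concluded with it rather than leaving the ``reconciliation'' open as a conjecture.

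A second point, which your sketch glosses over: you wrote the pending administrative contraction as $\ap{(\pa{\bitone}{\varone}{\varone})}{\pam{\bittwo}{\termone}}\redLLbas\termone$, but under the paper's printed definition of $\imperemb{\cdot}{\bitone}{\bittwo}$ this basic step is legal only when $\bitone=\bittwo$, since a $\psym{\bitone}$-abstraction consumes only a $\psym{\bitone}$-box. Tracing the depth bookkeeping for $\Lwotp{\bitone}{0}{\bittwo}$ (abstraction bodies must sit under a box of sort $\bitone$, arguments under a box of sort $\bittwo$) shows that the abstraction clause of the embedding should read $\imperemb{\ab{\varone}{\termone}}{\bitone}{\bittwo}=\pa{\bittwo}{\varone}{\pam{\bitone}{\imperemb{\termone}{\bitone}{\bittwo}}}$: then firing the inner redex yields a $\pam{\bitone}$-box, the administrative redex matches it for all $\bitone,\bittwo$, and both your depth-preserving correspondence and the corrected conclusion go through uniformly. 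You noticed the $\bitone$/$\bittwo$ decoration mismatch but attributed it only to the discrepancy between $\peremb{\cdot}{\bitone}$ and $\imperemb{\cdot}{\bitone}{\bittwo}$; the deeper issue is this typo in the embedding itself, without which, whenever $\bitone\neq\bittwo$, the administrative redex is permanently stuck and even the forward Imperfect Simulation lemma fails.
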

\subsection{$\LLwot$ as a Stream Programming Language}\label{sect:llwotspl}
One of the challenges which lead to the introduction of infinitary
rewriting systems is, as we argued in the Introduction, the
possibility to inject infinity (as arising in lazy data
structures such as streams) into formalisms like the
$\lambda$-calculus. In this section, we show that, indeed, $\LLwot$
can not only express terms of any free (co)algebras, but also any
effective function on them. Moreover, anything $\LLwot$ can compute
can also be computed by Type-2 Turing machines. To the author's
knowledge, this is the first time this is done for any system of
infinitary rewriting (some partial results, however, can be found
in~\cite{Barendregt09}).

\subsection{Signatures and Free (Co)algebras}
A \emph{signature} $\sigone$ is a set of function symbols, each with
an associated \emph{arity}. Function symbols will be denoted with
metavariables like $\funsymone$ or $\funsymtwo$. In this paper, we are
concerned with \emph{finite} signatures, only. Sometimes, a signature
has a very simple structure: an \emph{alphabet signature} is a
signature whose function symbols all have arity $1$, except for a single
nullary symbol, denoted $\emstr$. Given an alphabet $\alpone$,
$\fins{\alpone}$ ($\infs{\alpone}$, respectively) denotes the set of
finite (infinite, respectively) words over
$\alpone$. $\fininfs{\alpone}$ is simply
$\fins{\alpone}\cup\infs{\alpone}$. For every alphabet $\alpone$,
there is a corresponding alphabet signature $\sigone_\alpone$.  Given
a signature (or an alphabet), one usually needs to define the set of
terms built according to the algebra itself. Indeed, the \emph{free
  algebra} $\fa{\sigone}$ induced by a signature $\sigone$ is the set
of all finite terms built from function symbols in $\sigone$, i.e.,
all terms \emph{inductively} defined from the following production
(where $\natone$ is the arity of $\funsymone$):
\begin{equation}\label{equ:algcoalg}
\ftermone::=\funsymone(\ftermone_1,\ldots,\ftermone_n).
\end{equation}
There is another canonical way of building terms from signatures,
however. One can interpret the production above \emph{coinductively},
getting a space of finite \emph{and infinite} terms: the \emph{free
  coalgebra} $\fc{\sigone}$ induced by a signature $\sigone$ is the
set of all finite \emph{and infinite} terms built from function
symbols in $\sigone$, following~(\ref{equ:algcoalg}).
Notice that $\fins{\alpone}$ is isomorphic to $\fa{\sigone_\alpone}$,
while $\fininfs{\alpone}$ is isomorphic to $\fc{\sigone_\alpone}$. We
often elide the underlying isomorphisms, confusing strings and
terms.

\subsection{Representing (Infinitary) Terms in $\LLwot$}.
There are many number systems which work well in the finitary
$\lambda$-calculus. One of them is the well-known system of Church
numerals, in which $\natone\in\NN$ is represented by
$\la{\varone}{\la{\vartwo}{\varone^\natone\vartwo}}$. We here adopt
another scheme, attributed to Scott~\cite{Wadsworth80}: this allows to
make the relation between depths and computation more explicit.  Let
$\sigone=\{\funsymone_1,\ldots,\funsymone_\natone\}$ and suppose that
symbols in $\sigone$ can be totally ordered in such a way that
$\funsymone_\natone\leq\funsymone_\nattwo$ iff $\natone\leq\nattwo$.
Terms of the free algebra $\fa{\sigone}$ can be encoded as terms of
$\LLwot$ as follows
$$
\faemb{\sigone}{\funsymone_\nattwo(\ftermone_1,\ldots,\ftermone_\natthree)}=
\ia{\varone_1}{\cdots.\ia{\varone_\natone}{\varone_\nattwo\im{\faemb{\sigone}{\ftermone_1}}\cdots\im{\faemb{\sigone}{\ftermone_\natthree}}}}.
$$ 
Similarly for terms in the free coalgebra $\fc{\sigone}$:
$$
\fcemb{\sigone}{\funsymone_\nattwo(\ftermone_1,\ldots,\ftermone_\natthree)}=
\ia{\varone_1}{\cdots.\ia{\varone_\natone}{\varone_\nattwo\cm{\fcemb{\sigone}{\ftermone_1}}\cdots\cm{\fcemb{\sigone}{\ftermone_\natthree}}}}.
$$  
Given a string $\strone\in\fins{\alpone}$, the term
$\faemb{\sigone_\alpone}{\strone}$ is denoted simply as
$\fsemb{\strone}$. Similarly, if $\strone\in\infs{\alpone}$,
$\isemb{\strone}$ indicates $\fcemb{\sigone_\alpone}{\strone}$. Please
observe how $\faemb{\sigone}{\cdot}$ differs from
$\fcemb{\sigone}{\cdot}$: in the first case the encoding of subterms
are wrapped in an inductive box, while in the second case the
enclosing box is coinductive. This very much reflects the spirit of
our calculus: in $\fcemb{\sigone}{\ftermone}$ the depth increases
whenever entering a subterm, while in $\faemb{\sigone}{\ftermtwo}$,
the depth \emph{never} increases.
\subsection{Universality}\label{sect:univers}
The question now is: given the encoding in the last paragraph, which
\emph{functions} can we represent in $\LLwot$? If domain and codomain
are \emph{free algebras}, a satisfactory answer easily comes from the
universality of ordinary $\lambda$-calculus with respect to
computability on finite structures: the class of functions at hand
coincides with the effectively computable ones. If, on the other hand,
functions handling or returning terms from free coalgebras are of
interest, the question is much more interesting.

The expressive power of $\LLwot$ actually coincides with the one of
Type-2 Turing machines: these are mild generalisations of ordinary
Turing machines obtained by allowing inputs and outputs to be
not-necessarily-finite strings. Such a machine consists of finitely
many, initially blank work tapes, finitely many one-way input tapes
and a single one-way output tape. Noticeably, input tapes initially
contain not-necessarily-finite strings, while the output tape is
sometime supposed to be filled with an infinite
string. See~\cite{Weihrauch00} for more details and
Figure~\ref{fig:twotm} for a graphical representation of the structure
of any Type-2 Turing machine: black arrows represent the data flow,
whereas grey arrows represent the possible direction of the head in
the various tapes.
\begin{figure}
  \begin{center}
\fbox{
\begin{minipage}{.445\textwidth}
\vspace{4pt}
\begin{center}
\includegraphics[scale=0.6]{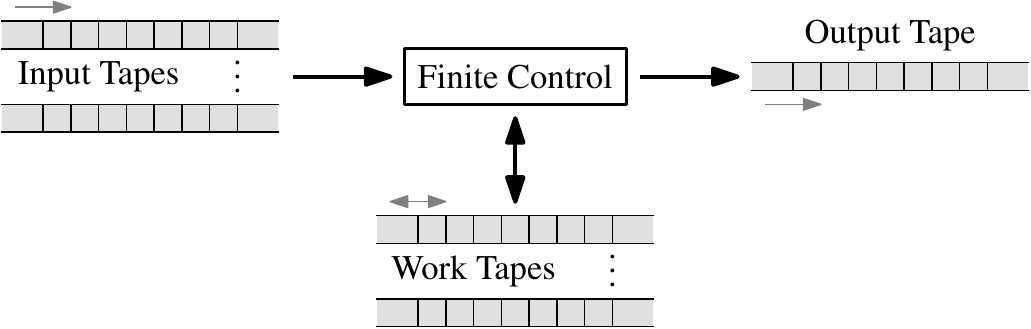}
\end{center}
\vspace{1pt}
\end{minipage}}
\end{center}
\caption{The Structure of a Type-2 Turing Machine}\label{fig:twotm}
\end{figure}

We now need to properly formalise \emph{when} a given function on possibly infinite strings
can be represented by a term in $\LLwot$. To that purpose, let $\SS$ be the set $\{*,\omega\}$, where the 
two elements of $\SS$ are considered merely as \emph{symbols}, with no internal structure. Objects in $\SS$ 
are indicated with metavariables like $\starone$ or $\startwo$. A partial function $\funone$ from 
$\pfininfs{\alpone}{\starone_1}\times\cdots\times\pfininfs{\alpone}{\starone_n}$ to 
$\pfininfs{\alpone}{\startwo}$ is said to be \emph{representable} in $\LLwot$ iff there is a finite term $\termone_\funone$ 
such that for every $\strone_1\in\pfininfs{\alpone}{\starone_1},\dots,
\strone_n\in\pfininfs{\alpone}{\starone_n}$ it holds that
$$
\ap{\termone_\funone}{\psemb{\strone_1}{\starone_1}\cdots\psemb{\strone_n}{\starone_n}}\redLLinflbl\psemb{\funone(\strone_1,\ldots,\strone_n)}{\startwo}
$$
if $\funone(\strone_1,\ldots,\strone_n)$ is defined, while $\ap{\termone_\funone}{\psemb{\strone_1}{\starone_1}\cdots\psemb{\strone_n}{\starone_n}}$
has no normal form otherwise. Notice the use of level-by-level reduction. Noticeably:
\begin{theorem}[Universality]
  The class of functions which are representable in $\LLwot$ coincides with the class of functions computable by
  Type-2 Turing machines.
\end{theorem}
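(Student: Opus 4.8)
The plan is to establish the two inclusions separately, the interesting one being that every Type-2 computable function is representable.

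\emph{Representable implies Type-2 computable.} Given the finite witness $\termone_\funone$ and encoded arguments, I would have a Type-2 machine keep a finite initial segment of the (possibly infinite) reduct of $\ap{\termone_\funone}{\psemb{\strone_1}{\starone_1}\cdots\psemb{\strone_\natone}{\starone_\natone}}$ and simulate $\redLLinflbl$ directly. The enabling fact, already recorded after the definition of level-by-level reduction, is that $\redLLwodlbl$ is effective and confluent: to look for, and fire, the next admissible redex one needs to inspect only a finite part of the term, and the inputs can be expanded lazily (a symbol of $\strone_i$ being demanded only when the simulation opens the corresponding coinductive box). The machine reduces until the term becomes $\strnat{0}$-normal; by the shape of the encoding $\psemb{\cdot}{\startwo}=\fcemb{\sigone_\alpone}{\cdot}$, its head is then a block of inductive abstractions applied to a single variable $\varone_\charone$ and one coinductive box, so $\charone$ is the first output symbol, which the machine writes before descending through $\cm{\cdot}$ to repeat the process at depth $1$, and so on. Producing the first $\natone$ output symbols costs only finitely many steps spread over depths $0,\dots,\natone$ and consumes only finite prefixes of the inputs; if instead the depth-$\natone$ reduction never terminates the machine simply stops emitting, which is exactly the behaviour demanded when $\funone(\strone_1,\dots,\strone_\natone)$ is undefined.

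\emph{Type-2 computable implies representable.} Fix a machine $\mathcal{M}$. A configuration of $\mathcal{M}$ --- control state, the contents and head positions of the finitely many work tapes, and the portions of the inputs already consumed --- is a finite first-order object, hence encodable by the Scott scheme $\faemb{\sigone}{\cdot}$ into a term carrying only inductive boxes, living at depth $0$; and the one-step transition table, being finite, is realised by a closed finitary term $\delta$ operating on such codes. The delicate issue is that between two writes $\mathcal{M}$ may read arbitrarily far into its inputs, whereas the $\natone$-th input symbol sits under $\natone$ coinductive boxes. This is reconciled by the \emph{permissiveness of coinductive boxes}: the continuation of a Scott-encoded stream is returned as a box $\cm{\cdot}$, and a consumer of the form $\ca{\varone}{\cdots}$ with $\varone$ used in linear position extracts, by a single coinductive $\beta$-step, the content of that box down to the \emph{current} depth. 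Thus an unbounded look-ahead before the next output is simulated by finitely many peelings, all at the present depth, each turning $\cm{(\text{tail})}$ into the tail with its head exposed.

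I would then assemble the simulator $\mathit{Run}$ by guarded corecursion. The inner loop --- iterate $\delta$, interleaving input peelings, until the table signals an output --- is realised by an ordinary fixed-point combinator $\fpc$ built with \emph{inductive} boxes, so that its unfolding duplicates without increasing the depth and may, legitimately, fail to terminate (yielding no normal form when $\funone$ is undefined). When a symbol $\charone$ is emitted, $\mathit{Run}$ produces the corresponding output constructor and places the recursive call on the updated configuration under exactly one coinductive box, so that every output is guarded by a single $\cm{\cdot}$. Consequently the derivation of $\tjudg{\emcon}{\mathit{Run}}$ crosses the coinductive rule $\LLlmc$ infinitely often on each infinite branch, making $\mathit{Run}$ a genuine term of $\LLwot$, and taking $\termone_\funone$ to feed the arguments to $\mathit{Run}$ in the initial configuration, a routine induction on $\natone$ yields
\[
\ap{\termone_\funone}{\psemb{\strone_1}{\starone_1}\cdots\psemb{\strone_\natone}{\starone_\natone}}\redLLinflbl\psemb{\funone(\strone_1,\dots,\strone_\natone)}{\startwo},
\]
the $\natone$-th output symbol being delivered by the level-$\strnat{\natone}$ reduction.

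\emph{Where the work lies.} The main obstacle is not the finite bookkeeping (correctness of $\delta$, the substitution behaviour of the encodings, and the closing induction are routine given the substitution lemmas) but verifying precisely that the peeling mechanism respects the mixed formal system defining $\redLLinflbl$: one must check that producing the first $\natone$ outputs decomposes exactly as finitely many reductions at each depth $0,\dots,\natone$ followed by descent through one coinductive box, so that the guardedness supplied by $\cm{\cdot}$ is exactly what $\LLlmc$ demands. Establishing this depth-by-depth correspondence --- reconciling $\mathcal{M}$'s free interleaving of reads and writes with the rigid level structure of $\LLwot$ --- is the crux, and it is precisely here that the expressive, non-confluent features of the calculus are indispensable.
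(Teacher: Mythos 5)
Your proposal follows essentially the same route as the paper: Scott-style encodings of configurations, a finitary term for the transition function iterated by an (inductive-box) fixed-point combinator, each output symbol guarded by one coinductive box so that $\LLlmc$ is crossed infinitely often, unbounded look-ahead into the inputs obtained by "derelicting" coinductive boxes at the current depth, and, for the converse, the effectiveness of level-by-level reduction with one output symbol extracted per depth. The paper's own proof is an observation-level sketch of exactly this construction (it additionally flags that a machine halting while expected to produce infinite output must be made to diverge), so your account is, if anything, slightly more explicit about the peeling mechanism that the paper leaves implicit.
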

\begin{proof}
This proof relies on a standard encoding of Turing machines into $\LLwot$. Rather than describing the encoding in
detail, we now give some observations, that together should convince the reader that the encoding is indeed
possible:
\begin{varitemize}
\item
  First of all, inductive and coinductive fixed point combinators are both available in $\LLwot$. Indeed, let
  $\termone_\bitone$ be the following term:
  $$
  \termone_\bitone=\ia{\varone}{\ia{\vartwo}{\ap{\vartwo}{\pam{\bitone}{(\ap{(\ap{\varone}{\im{\varone}})}{\im{\vartwo}})}}}}
  $$
  Then $\fpc_\bitone$ is just $\ap{\termone_\bitone}{\im{\termone_\bitone}}$. Observe that
  $\fpc_\bitone\im{\termone}\redLLinf\termone\pam{\bitone}{(\ap{\fpc_\bitone}{\im{\termone}})}$.
\item
  Moreover, observe that the encoding of free (co)algebras described above not only provides an elegant way to represent
  \emph{terms}, but also allows to very easily define efficient combinators for selection. For example, given the alphabet
  $\alpone=\{0,1\}$, the algebra $\fa{\alpone}$ of binary strings corresponds to the term
  $$
  \termone=\la{\varone}{\ia{\vartwo_0}{\ia{\vartwo_1}{\ia{\vartwo_\varepsilon}
        {\ap{\varone}{\im{\vartwo_0}\im{\vartwo_1}\im{\vartwo_\varepsilon}}}}}}.
  $$
  Please observe that
  \begin{align*}
  \termone\faemb{\alpone}{0(\strone)}\im{\termtwo_0}\im{\termtwo_1}\im{\termtwo_\varepsilon}&\redLLinf
    \termtwo_0\faemb{\alpone}{\strone}{};\\
  \termone\faemb{\alpone}{1(\strone)}\im{\termtwo_0}\im{\termtwo_1}\im{\termtwo_\varepsilon}&\redLLinf
    \termtwo_1\faemb{\alpone}{\strone}{};\\
  \termone\faemb{\alpone}{\varepsilon}\im{\termtwo_0}\im{\termtwo_1}\im{\termtwo_\varepsilon}&\redLLinf
    \termtwo_\varepsilon\faemb{\alpone}{\strone}.
  \end{align*}
  This can be generalised to an arbitrary (co)algebra.
\item
  Tuples can be represented easily as follows:
  $$
  \la{\varone}{\ap{\varone}{\im{\termone_1}\ldots\im{\termone_\natone}}}.
  $$
\item
  A configuration of a Type-2 Turing machine working on the alphabet $\alpone$, with states in the
  set $\stsone$, and having $\natone$ input tapes and $\nattwo$ working tapes can be seen
  as the following $\natone+3\nattwo+1$-tuple:
  $$
  (\strone_1,\ldots,\strone_\natone,\strtwo_1^l,\charone_1,\strtwo_1^r,\ldots,\strtwo_\nattwo^l,\charone_\nattwo,\strtwo_\nattwo^r,\stone)
  $$
  where $\strone_i$ is the not-yet-read portion of the $i$-th input tape,
  $\strtwo_i^l$ (respectively, $\strtwo_i^r$) is the portion of the $i$-th working tape to the left (respectively, right) 
  of the head, $\charone_i$ is the symbol currently under the $i$-th head, and $\stone$ is the current state.
  All these $\natone+3\nattwo+1$ objects can be seen as elements of appropriate (co)algebras:
  \begin{varitemize}
    \item
      $\strone_1,\ldots,\strone_\natone$ are (finite or infinite, depending on the underlying machine) strings;
    \item
      $\charone_1,\ldots,\charone_\nattwo,\stone$ are all elements of finite sets;
    \item
      $\strtwo_1^l,\strtwo_1^r,\ldots,\strtwo_\nattwo^l,\strtwo_\nattwo^r$ are finite strings;
  \end{varitemize}
  As a consequence, all of them can be encoded in Scott-style. Moreover, the availability of selectors
  and tuples makes it easy to write a term encoding the transition function of the encoded machine. Please notice that the 
  output tape is \emph{not} part of the configuration above. 
  If a character is produced in output (i.e. whenever the head of the output tape moves right), the rest
  of the computation will take place ``inside'' the encoding of a (possibly infinite) string.
\item
  One needs to be careful when handling final states: if the machine reaches a final state \emph{even if} 
  it is meant to produce infinite strings in output, the encoding $\lambda$-term should anyway diverge~\cite{Weihrauch00}.
\end{varitemize}
Putting the ingredients above together, one gets for every machine $\mdtone$ a term $\termone_\mdtone$ which
computes the same function as $\mdtone$. The fact that anything computable by a finite term $\termone$
is also computable by a Type-2 Turing machine is quite easy, since level-by-level reduction is effective and,
moreover, a normal form in the sense of level-by-level reduction is reached iff it is reached applying
surface reduction (in the sense of inductive boxes) \emph{at each depth}.
\end{proof}
\section{Taming Infinity: $\LLLwot$}
As we have seen in the last sections, $\LLwot$ is a very powerful
model: not only it is universal as a way to compute over streams, but
also comes with an extremely liberal infinitary dynamics for which,
unfortunately, confluence does not hold. If this paper finished here,
this work would then be rather inconclusive: $\LLwot$ suffers from the same
kind of defects which its nonlinear sibling $\Lwot$ has.

In this section, however, we will define a restriction of $\LLwot$,
called $\LLLwot$, which thanks to a careful management of boxes in the
style of light logics~\cite{Girard98IC,Danos03IC,Lafont04TCS}, allows
to keep infinity under control, and to get results which are
impossible to achieve in $\Lwot$.

Actually, the notion of a preterm remains unaltered. What changes is
how \emph{terms} are defined.  First of all, patterns are generalised
by two new productions $\patone::=\dm{\varone}\midd\am{\varone}$, which
make the notion of an environment slightly more general: it can now
contain variables in \emph{five} different forms. Judgments have the
usual shape, namely $\tjudg{\conone}{\termone}$ where $\conone$ is an
environment and $\termone$ is a term. Metavariables like $\liconone$
or $\licontwo$ stand for environments where the only allowed patterns
are either variables or the ones in the form $\im{\varone}$. Rules of
$\LLLwot$ are quite different than the ones of $\LLwot$, and can be
found in Figure~\ref{fig:lllwotwfr}.
\begin{figure*}
\begin{center}
\fbox{
\begin{minipage}{.97\textwidth}
  \begin{center}
  \vspace{6pt}
  {\footnotesize
    $$
    \infer[\LLvarl]
    {\tjudg{\dm{\lconone},\cm{\lcontwo},\am{\lconthree},\varone}{\varone}}
    {}
    \qquad
    \infer[\LLvard]
    {\tjudg{\dm{\lconone},\cm{\lcontwo},\am{\lconthree},\dm{\varone}}{\varone}}
    {}
    \qquad
    \infer[\LLvara]
    {\tjudg{\dm{\lconone},\cm{\lcontwo},\am{\lconthree},\am{\varone}}{\varone}}
    {}    
    $$
    $$
    \infer[\LLapp]
    {\tjudg{\liconone,\licontwo,\dm{\lconone},\cm{\lcontwo},\am{\lconthree}}{\ap{\termone}{\termtwo}}}
    {
      \tjudg{\liconone,\dm{\lconone},\cm{\lcontwo},\am{\lconthree}}{\termone}
      &
      \tjudg{\licontwo,\dm{\lconone},\cm{\lcontwo},\am{\lconthree}}{\termtwo}
    }
    \qquad
    \infer[\LLlaml]
    {\tjudg{\conone}{\la{\varone}{\termone}}}
    {\tjudg{\conone,\varone}{\termone}}
    \qquad
    \infer[\LLlami_1]
    {\tjudg{\conone}{\ia{\varone}{\termone}}}
    {\tjudg{\conone,\dm{\varone}}{\termone}}
    $$
    $$
    \infer[\LLlami_2]
    {\tjudg{\conone}{\ia{\varone}{\termone}}}
    {\tjudg{\conone,\im{\varone}}{\termone}}
    \qquad
    \infer[\LLlamc]
    {\tjudg{\conone}{\ca{\varone}{\termone}}}
    {\tjudg{\conone,\cm{\varone}}{\termone}}
    \qquad
    \infer[\LLlmi]
    {\tjudg{\dm{\lconone},\im{\lcontwo},\cm{\lconthree},\am{\lconfour}}{\im{\termone}}}
    {\tjudg{\lcontwo,\cm{\lconthree},\am{\lconfour}}{\termone}}
    \qquad
    \infer=[\LLlmc]
    {\tjudg{\dm{\lconone},\cm{\lcontwo},\am{\lconthree}}{\cm{\termone}}}
    {\tjudg{\am{\lcontwo},\am{\lconthree}}{\termone}}
    $$}
    \vspace{1pt}
  \end{center}
\end{minipage}}
\end{center}
  \caption{$\LLLwot$: Well-Formation Rules.}\label{fig:lllwotwfr}
\end{figure*}
The meaning well-formation rules induce on variable occurring in
environments is more complicated than for $\LLwot$.  Suppose that
$\tjudg{\conone}{\termone}$. Then:
\begin{varenumerate}
\item\label{enum:first}
  If $\varone\in\conone$ then, as usual, $\varone$ occurs once in
  $\termone$, and outside of any box;
\item\label{enum:second} 
  If $\dm{\varone}\in\conone$ then $\varone$
  can occur any number of times in $\termone$, but all these
  occurrences are in \emph{linear} position, i.e., outside the scope
  of \emph{any} box;
\item\label{enum:third}
  If $\im{\varone}\in\conone$ then $\varone$ occurs exactly once in
  $\termone$, and in the scope of exactly one (inductive) box;
\item\label{enum:fourth}
  If $\cm{\varone}\in\conone$, then $\varone$ occurs any number of
  times in $\termone$, with the only proviso that any such occurrence
  of $\varone$ must be in the scope of \emph{at least} one coinductive
  box.
\item\label{enum:fifth}
  Finally, if $\am{\varone}\in\conone$, then $\varone$ occurs any
  number of times in $\termone$, in any possible position.
\end{varenumerate}
Conditions \ref{enum:first}. to \ref{enum:third}. are reminiscent of
the ones of Lafont's soft linear logic.  Analogously,
Condition~\ref{enum:fourth}. is very much in the style of \FLL\ as
described by Danos and Joinet~\cite{Danos03IC}: $\csym$ is morally a
functor for which contraction, weakening and digging are available,
but which does not support dereliction.  We will come back to the
consequences of this exponential discipline below in this
section. Please observe that any variable $\varone$ marked as
$\dm{\varone}$ or $\im{\varone}$ cannot occur in the scope of
coinductive boxes. The pattern $\am{\varone}$ has only a merely
technical role.

If $\conone$ and $\contwo$ are environments, we write
$\conone\envleq\contwo$ iff $\contwo$ can be obtained from $\conone$
by replacing \emph{some} patterns in the form $\im{\varone}$ with
$\dm{\varone}$.  Well-formation is preserved by reduction and, as for
$\LLwot$, a proof of this fact requires a number of substitution
lemmas:
\begin{lemma}[Substitution Lemma, Linear Case]\label{lemma:lllsllinear}
If $\tjudg{\liconone,\varone,\dm{\lconone},\cm{\lcontwo},\am{\lconthree}}{\termone}$ and
$\tjudg{\licontwo,\dm{\lconone},\cm{\lcontwo},\am{\lconthree}}{\termtwo}$, then
$\tjudg{\liconone,\licontwo,\dm{\lconone},\cm{\lcontwo},\am{\lconthree}}{\sbst{\termone}{\varone}{\termtwo}}$.
\end{lemma}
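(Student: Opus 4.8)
The plan is to mirror the strategy used for the linear substitution lemma of $\LLwot$ (Lemma~\ref{lem:substlinllwot}): exploit that $\cd{\LLLwot}$ is the greatest fixpoint of $\indrul{\LLLwot}\circ\coindrul{\LLLwot}$ and prove the statement by exhibiting a consistent set of judgments. Concretely, I would work with
$$
\setone=\left\{\tjudg{\liconone,\licontwo,\dm{\lconone},\cm{\lcontwo},\am{\lconthree}}{\sbst{\termone}{\varone}{\termtwo}}\midd\tjudg{\liconone,\varone,\dm{\lconone},\cm{\lcontwo},\am{\lconthree}}{\termone}\in\cd{\LLLwot}\wedge\tjudg{\licontwo,\dm{\lconone},\cm{\lcontwo},\am{\lconthree}}{\termtwo}\in\cd{\LLLwot}\right\}\cup\cd{\LLLwot},
$$
and reduce consistency of $\setone$ to showing that every substituted judgment $\judgone$ already lies in $\cd{\LLLwot}$. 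Writing $\judgtwo=\tjudg{\liconone,\varone,\dm{\lconone},\cm{\lcontwo},\am{\lconthree}}{\termone}$, the fixpoint equation $\cd{\LLLwot}=\indrul{\LLLwot}(\coindrul{\LLLwot}(\cd{\LLLwot}))$ lets me present $\judgtwo$ as one coinductive step followed by $\natone$ inductive steps, and I would then prove $\judgone\in\cd{\LLLwot}$ by induction on $\natone$.

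The observation governing the whole case analysis is that the substituted variable $\varone$ occurs \emph{bare} (linearly) in the environment of $\judgtwo$. Every rule whose conclusion carries an environment containing no bare variable is therefore immediately excluded. This disposes of the base case $\natone=0$ (where $\judgtwo$ would be a $\LLlmc$-conclusion, whose environment has the shape $\dm{\lconone},\cm{\lcontwo},\am{\lconthree}$), as well as the inductive cases $\LLvard$, $\LLvara$ and $\LLlmi$, all of which produce environments of the form $\dm{\lconone},\im{\lcontwo},\cm{\lconthree},\am{\lconfour}$. In the only genuine leaf case $\LLvarl$, matching forces $\termone=\varone$ and $\liconone$ empty (no further bare or $\im{}$ pattern can appear), so $\sbst{\termone}{\varone}{\termtwo}=\termtwo$ and $\judgone$ is exactly the second hypothesis, modulo weakening on the freely-erasable patterns.

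The remaining cases $\LLapp$, $\LLlaml$, $\LLlami_1$, $\LLlami_2$ and $\LLlamc$ all proceed by applying the induction hypothesis to the unique premise in which the linear $\varone$ occurs, substituting $\termtwo$ there, and re-applying the same rule; since $\varone\notin\FV{\termtwo}$, the sibling premises and $\termtwo$ are untouched by the substitution. The delicate point is the bookkeeping of the patterns introduced by binders, which splits into two regimes. Bare and inductive-box patterns (those introduced by $\LLlaml$ and $\LLlami_2$, namely $\varfour$ and $\im{\varfour}$) are \emph{side} patterns: they are absorbed into the $\liconone$ component and carried unchanged through the induction hypothesis, so the hypothesis on $\termtwo$ is reused verbatim. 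By contrast, the patterns introduced by $\LLlami_1$ and $\LLlamc$ (namely $\dm{\varfour}$ and $\cm{\varfour}$) belong to the \emph{shared} component, so invoking the induction hypothesis requires the $\termtwo$-hypothesis to mention them as well; this is supplied by a restricted Weakening Lemma. For $\LLapp$ the shared component is copied to both premises and the split $\liconone$-part recombines with $\licontwo$ exactly as the rule prescribes.

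The main obstacle is precisely this last bookkeeping, and it is where $\LLLwot$ departs from $\LLwot$: under the light/soft discipline of $\LLLwot$ the pattern $\im{\varone}$ means \emph{exactly once, inside exactly one inductive box} (clause~\ref{enum:third}.), and is therefore \emph{not} weakenable, unlike the freely-erasable $\dm{}$, $\cm{}$ and $\am{}$ of clauses~\ref{enum:second}., \ref{enum:fourth}.\ and \ref{enum:fifth}. Hence I cannot, as in the $\LLwot$ proof, simply weaken $\termtwo$ by the binder introduced in $\LLlami_2$; the resolution is the asymmetry noted above, namely that $\im{}$-binders are always routed into the side environment $\liconone$ (which by definition admits exactly bare and $\im{}$ patterns) rather than weakened in. Establishing and then invoking this restricted Weakening Lemma---weakening is admissible exactly for the $\dm{}$, $\cm{}$ and $\am{}$ patterns---is the single auxiliary ingredient the argument needs beyond the template of Lemma~\ref{lem:substlinllwot}; the inductive and coinductive cases of the substitution lemma will then follow by the same pattern.
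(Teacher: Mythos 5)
Your proposal is correct and follows exactly the template the paper intends: the paper states Lemma~\ref{lemma:lllsllinear} without proof, the implicit argument being the consistency-set construction and induction on the number of inductive rules after a coinductive step used for Lemma~\ref{lem:substlinllwot}, which is what you reproduce. You also correctly identify and resolve the one point where the adaptation to $\LLLwot$ is not literally verbatim, namely that $\im{\varone}$ is no longer weakenable under the soft discipline but never needs to be, since binders of that shape are absorbed into the side environment $\liconone$ while weakening is only ever invoked for the $\dsym$-, $\csym$- and $\asym$-patterns of the shared environment.
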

\begin{lemma}[Substitution Lemma, First Inductive Case]\label{lemma:lllslinductiveI}
If $\tjudg{\liconone,\dm{\lconone},\dm{\varone},\cm{\lcontwo},\am{\lconthree}}{\termone}$ and
$\tjudg{\lconfour,\cm{\lcontwo},\am{\lconthree}}{\termtwo}$, then
$\tjudg{\liconone,\dm{\lconone},\dm{\lconfour},\cm{\lcontwo},\am{\lconthree}}{\sbst{\termone}{\varone}{\termtwo}}$.
\end{lemma}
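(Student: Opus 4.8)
The plan is to mirror the proof of Lemma~\ref{lem:substlinllwot}, adapting it to the richer pattern discipline of $\LLLwot$. I would exhibit the set of judgments
$$
\setone=\left\{\tjudg{\liconone,\dm{\lconone},\dm{\lconfour},\cm{\lcontwo},\am{\lconthree}}{\sbst{\termone}{\varone}{\termtwo}}\midd\tjudg{\liconone,\dm{\lconone},\dm{\varone},\cm{\lcontwo},\am{\lconthree}}{\termone}\in\cd{\LLLwot}\wedge\tjudg{\lconfour,\cm{\lcontwo},\am{\lconthree}}{\termtwo}\in\cd{\LLLwot}\right\}\cup\cd{\LLLwot}
$$
and show it is consistent, i.e.\ $\setone\subseteq\indrul{\LLLwot}(\coindrul{\LLLwot}(\setone))$. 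Judgments already in $\cd{\LLLwot}$ are immediate, so the work is to place each genuine substitution judgment $\judgone$ into $\cd{\LLLwot}$. Given the hypothesis $\judgtwo=\tjudg{\liconone,\dm{\lconone},\dm{\varone},\cm{\lcontwo},\am{\lconthree}}{\termone}\in\cd{\LLLwot}=\indrul{\LLLwot}(\coindrul{\LLLwot}(\cd{\LLLwot}))$, I would write $\judgtwo$ as the conclusion of one coinductive rule followed by $\natone$ inductive rules and argue $\judgone\in\cd{\LLLwot}$ by induction on $\natone$, using that $\cd{\LLLwot}$ is closed under the inductive rules (since $\indrul{\LLLwot}$ admits zero or more applications), so reassembling $\cd{\LLLwot}$-premises by an inductive rule stays in $\cd{\LLLwot}$.

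The conceptual key, and what legitimises an essentially inductive argument inside a coinductive system, is that the pattern $\dm{\varone}$ cannot survive the crossing of a box: both box-forming rules, the inductive $\LLlmi$ and the coinductive $\LLlmc$, discard the entire $\dm$-context when passing to their premise. Hence $\varone$ occurs in $\termone$ only outside all boxes, and the inductive rules peeled off in the induction reach an occurrence of $\varone$ (handled by $\LLvard$) before any box is entered; the recursion never has to pass through a coinductive rule while $\dm{\varone}$ is still live.

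The case analysis on the last inductive rule then runs as follows. If it is $\LLvard$, then $\termone=\varone$, the environment $\liconone$ is forced to be empty (its conclusion admits no linear or $\im$-patterns), $\sbst{\termone}{\varone}{\termtwo}=\termtwo$, and a Weakening Lemma turns $\tjudg{\lconfour,\cm{\lcontwo},\am{\lconthree}}{\termtwo}$ into $\judgone$ by relaxing the linear patterns of $\lconfour$ to $\dm{\lconfour}$ and adjoining the unused $\dm{\lconone}$. For the abstraction rules $\LLlaml$, $\LLlami_1$, $\LLlami_2$, $\LLlamc$ and for $\LLapp$, the induction hypothesis applies to the premise(s)---after weakening $\termtwo$'s judgment with the bound-variable pattern where the abstraction rule introduces one---and the same rule recombines the results. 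In $\LLapp$ the pattern $\dm{\varone}$ lies in the \emph{shared} $\dm$-context, so it is passed to both premises, and the substituted images of $\termtwo$, now carrying $\dm{\lconfour}$ again in the shared part, merge cleanly; this is exactly what makes multiple occurrences of $\varone$ unproblematic. Finally, whenever the last rule is $\LLlmi$, or $\LLlmc$ (which covers the base case $\natone=0$), the $\dm$-context is dropped, so $\varone$ does not occur in $\termone$, the substitution is vacuous, and $\judgone$ is obtained by re-applying that same rule with $\dm{\varone}$ deleted and $\dm{\lconfour}$ adjoined, both admissible since these rules constrain only the $\cm$/$\am$/$\im$ part of their conclusion.

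The step I expect to be the main obstacle is the precise formulation and proof of the Weakening Lemma invoked in the $\LLvard$ case: it must simultaneously permit adjoining fresh $\dm$-, $\cm$- and $\am$-patterns, relaxing $\im{\varone}$ to $\dm{\varone}$ along $\envleq$, and---crucially---promoting a \emph{linear} pattern to a $\dm$-pattern, this last being what accommodates the duplication of the free variables of $\termtwo$. Getting the bookkeeping of which contexts are shared and which are split correct across $\LLapp$ and the box rules, so that $\lconfour$ ends up uniformly $\dm$-marked in every branch, is the other place where care is required. Lemma~\ref{lemma:lllsllinear} follows by the same scheme, with $\varone$ a genuinely linear pattern routed to a single premise of $\LLapp$ rather than shared.
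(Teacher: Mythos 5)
Your proof is correct and follows exactly the template the paper establishes for these substitution lemmas, namely the consistency argument spelled out for Lemma~\ref{lem:substlinllwot} (the paper states Lemma~\ref{lemma:lllslinductiveI} without an explicit proof, clearly intending that same structure). Your two key adaptations are exactly the right ones: since both box rules discard the $\dsym$-part of the environment, the substitution is vacuous whenever a box rule (in particular the coinductive base case) is reached with $\dm{\varone}$ still present, so the induction on the number of inductive rules goes through; and the Weakening Lemma invoked at the $\LLvard$ leaf must indeed permit promoting the linear patterns of $\lconfour$ to $\dsym$-patterns to account for duplication of $\termtwo$.
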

\begin{lemma}[Substitution Lemma, Second Inductive Case]\label{lemma:lllslinductiveII}
If $\tjudg{\liconone,\dm{\lconone},\im{\varone},\cm{\lcontwo},\am{\lconthree}}{\termone}$ and
$\tjudg{\lconfour,\cm{\lcontwo},\am{\lconthree}}{\termtwo}$, then
$\tjudg{\liconone,\dm{\lconone},\im{\lconfour},\cm{\lcontwo},\am{\lconthree}}{\sbst{\termone}{\varone}{\termtwo}}$.
\end{lemma}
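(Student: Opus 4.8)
The plan is to transpose the coinductive, consistency-based argument of Lemma~\ref{lem:substlinllwot} to $\LLLwot$, reducing the crucial step to the linear substitution lemma (Lemma~\ref{lemma:lllsllinear}). Concretely, I would show that
$$
\setone=\left\{\tjudg{\liconone,\dm{\lconone},\im{\lconfour},\cm{\lcontwo},\am{\lconthree}}{\sbst{\termone}{\varone}{\termtwo}}\midd
\tjudg{\liconone,\dm{\lconone},\im{\varone},\cm{\lcontwo},\am{\lconthree}}{\termone}\in\cd{\LLLwot}\wedge
\tjudg{\lconfour,\cm{\lcontwo},\am{\lconthree}}{\termtwo}\in\cd{\LLLwot}\right\}\cup\cd{\LLLwot}
$$
is $(\indrul{\LLLwot}\circ\coindrul{\LLLwot})$-consistent, whence $\setone\subseteq\cd{\LLLwot}$ by the greatest-fixpoint principle, which is the thesis. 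For a genuinely new element $\judgone\in\setone$, the source judgment $\judgtwo=\tjudg{\liconone,\dm{\lconone},\im{\varone},\cm{\lcontwo},\am{\lconthree}}{\termone}$ lies in $\cd{\LLLwot}=\indrul{\LLLwot}(\coindrul{\LLLwot}(\cd{\LLLwot}))$, so it is obtained by applying coinductive rules once followed by $\natone$ inductive rules; I would then prove $\judgone\in\cd{\LLLwot}$ by induction on $\natone$, using throughout that $\cd{\LLLwot}$ is closed under a single further inductive-rule application.

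The driving observation is that $\im{\varone}$ records an occurrence of $\varone$ under \emph{exactly one} inductive box and \emph{no} coinductive box, so the occurrence is met inside the inductive prefix and the only rule that can consume the pattern $\im{\varone}$ is $\LLlmi$. In the base case $\natone=0$ the judgment $\judgtwo$ would be a conclusion of $\LLlmc$, whose environment carries no $\im{\cdot}$ pattern, contradicting the presence of $\im{\varone}$; for the same reason the variable rules $\LLvarl,\LLvard,\LLvara$ are impossible. For $\LLapp$ the pattern $\im{\varone}$, being of $\im{\cdot}$ shape, lives in the split part $\liconone,\licontwo$ and hence in exactly one premise, whose shared context $\cm{\lcontwo},\am{\lconthree}$ already matches the environment of $\termtwo$; I would apply the induction hypothesis there, leave the other premise unchanged (it does not mention $\varone$), and reinstate $\LLapp$. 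For $\LLlaml,\LLlami_1,\LLlami_2$ the freshly bound pattern falls in a context region disjoint from $\termtwo$'s environment, so the induction hypothesis applies directly and the rule is simply reapplied; for $\LLlamc$ the bound pattern is of the shared $\cm{\cdot}$ kind, so I would first apply a Weakening Lemma to enlarge $\termtwo$'s judgment with the extra $\cm{\cdot}$ pattern (weakening being admissible precisely for the $\dm{\cdot},\cm{\cdot},\am{\cdot}$ patterns, which tolerate zero occurrences), then use the induction hypothesis and reinstate $\LLlamc$.

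The crucial case $\LLlmi$ is where the argument leaves the purely inductive track. Since the conclusion of $\LLlmi$ has no plain variable, $\liconone=\im{\lconone'}$ for a linear $\lconone'$, and every $\im{\cdot}$ pattern — in particular $\im{\varone}$ — becomes a plain variable in the premise, which therefore reads $\tjudg{\lconone',\varone,\cm{\lcontwo},\am{\lconthree}}{\termone'}$ with $\termone=\im{\termone'}$ and $\varone$ now \emph{linear}. Its shared context $\cm{\lcontwo},\am{\lconthree}$ matches that of $\termtwo$, so I would invoke Lemma~\ref{lemma:lllsllinear} to get $\tjudg{\lconone',\lconfour,\cm{\lcontwo},\am{\lconthree}}{\sbst{\termone'}{\varone}{\termtwo}}\in\cd{\LLLwot}$ and reapply $\LLlmi$ (which freely reintroduces the $\dm{\lconone}$ part) to obtain $\tjudg{\im{\lconone'},\dm{\lconone},\im{\lconfour},\cm{\lcontwo},\am{\lconthree}}{\im{\sbst{\termone'}{\varone}{\termtwo}}}$; this is exactly $\judgone$, because $\liconone=\im{\lconone'}$ and $\sbst{\termone}{\varone}{\termtwo}=\im{\sbst{\termone'}{\varone}{\termtwo}}$, and re-boxing the substituted free variables of $\termtwo$ turns $\lconfour$ into the required $\im{\lconfour}$.

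I expect the difficulty to be organisational rather than conceptual: keeping the five-fold pattern algebra aligned across $\LLapp$ and $\LLlmi$ — checking that both rules preserve the shared $\cm{\cdot},\am{\cdot}$ contexts of $\termtwo$ while the split $\liconone$ and the $\dm{\cdot}$ context are handled separately — and verifying that the appeal to Lemma~\ref{lemma:lllsllinear} delivers precisely the $\im{\lconfour}$ annotation demanded by the conclusion. The single genuinely new idea, absent from the $\LLwot$ development, is that an $\im{\cdot}$-bound variable is resolved not at a leaf but at the $\LLlmi$ node, where it degenerates into a linear variable and the inductive problem collapses onto the already-established linear lemma.
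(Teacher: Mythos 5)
Your proof is correct and follows essentially the approach the paper itself takes: the paper leaves Lemma~\ref{lemma:lllslinductiveII} unproved, and its $\LLwot$ analogue (Lemma~\ref{lem:substindllwot}) is dispatched as ``identical in structure to the linear case,'' which is exactly the consistency-plus-inner-induction template you instantiate. Your one genuine addition --- resolving the $\im{\varone}$ pattern at the $\LLlmi$ node, where it degenerates into a linear variable so that Lemma~\ref{lemma:lllsllinear} takes over --- is precisely the point where the $\LLLwot$ rules depart from those of $\LLwot$ (where $\LLlmi$ keeps $\im{\varone}$ in its premise), and you handle it correctly.
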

\begin{lemma}[Substitution Lemma, Coinductive Case]\label{lemma:lllslcoinductive}
If $\tjudg{\liconone,\dm{\lconone},\cm{\lcontwo},\cm{\varone},\am{\lconthree}}{\termone}$ and
$\tjudg{\am{\lcontwo},\am{\lconthree}}{\termtwo}$, then
$\tjudg{\liconone,\dm{\lconone},\cm{\lcontwo},\am{\lconthree}}{\sbst{\termone}{\varone}{\termtwo}}$.
\end{lemma}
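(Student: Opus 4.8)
The plan is to reuse the fixed-point (consistency) strategy of Lemma~\ref{lem:substlinllwot}: I would exhibit a set $\setone$ of judgments that contains the desired conclusions and is $(\indrul{\LLLwot}\circ\coindrul{\LLLwot})$-consistent, i.e.\ $\setone\subseteq\indrul{\LLLwot}(\coindrul{\LLLwot}(\setone))$, so that $\setone\subseteq\cd{\LLLwot}$. The decisive new feature is that the coinductive modality $\csym$ supports \emph{no dereliction}: no variable rule has a conclusion consuming a pattern $\cm{\varone}$, so an occurrence of $\varone$ can only be reached after crossing a coinductive box through $\LLlmc$, and at that very step $\cm{\varone}$ is turned into $\am{\varone}$ while the whole ambient environment becomes all-$\asym$. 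For this reason a set mentioning $\csym$-substitutions alone is not closed, and I would take
\[
\setone \;=\; \cd{\LLLwot}\;\cup\;\setone_\csym\;\cup\;\setone_\asym,
\]
where $\setone_\csym$ is the set of judgments demanded by the lemma, namely $\tjudg{\liconone,\dm{\lconone},\cm{\lcontwo},\am{\lconthree}}{\sbst{\termone}{\varone}{\termtwo}}$ with $\tjudg{\liconone,\dm{\lconone},\cm{\lcontwo},\cm{\varone},\am{\lconthree}}{\termone}\in\cd{\LLLwot}$ and $\tjudg{\am{\lcontwo},\am{\lconthree}}{\termtwo}\in\cd{\LLLwot}$, and $\setone_\asym$ collects the analogous judgments $\tjudg{\conone}{\sbst{\termone}{\varone}{\termtwo}}$ where the substituted variable is instead marked $\am{\varone}$ in the source $\tjudg{\conone,\am{\varone}}{\termone}\in\cd{\LLLwot}$, with $\tjudg{\am{\lcontwo},\am{\lconthree}}{\termtwo}\in\cd{\LLLwot}$ and the variables of $\lcontwo,\lconthree$ occurring $\asym$-marked in $\conone$. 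Since the conclusion to be proved is exactly a member of $\setone_\csym$, establishing consistency suffices.

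Fix $\judgone\in\setone$. If $\judgone\in\cd{\LLLwot}$ we are done by monotonicity; otherwise I would unfold the source judgment $\judgtwo$ via $\cd{\LLLwot}=\indrul{\LLLwot}(\coindrul{\LLLwot}(\cd{\LLLwot}))$ as ``one coinductive step followed by $\natone$ inductive steps'', and proceed by induction on $\natone$ with a case analysis on the topmost rule, exactly as in Lemma~\ref{lem:substlinllwot}. The inductive cases ($\LLapp$, the abstraction rules $\LLlaml$, $\LLlami_1$, $\LLlami_2$, $\LLlamc$, and $\LLlmi$) are routine: substitution commutes with each constructor, so the recursive instances are again members of $\setone_\csym$ (resp.\ $\setone_\asym$) with strictly smaller $\natone$, the induction hypothesis places them in $\indrul{\LLLwot}(\coindrul{\LLLwot}(\setone))$, and one reattaches the same inductive rule. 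The only bookkeeping is to check that the $\csym$/$\asym$ markings and the invariant ``$\termtwo$ is typeable in $\am{\lcontwo},\am{\lconthree}$, whose variables remain $\asym$-marked'' survive: $\LLapp$ shares the whole nonlinear zone $\dm{\lconone},\cm{\lcontwo},\am{\lconthree}$ into both premises, and $\LLlmi$ preserves $\cm{\lconthree}$ and $\am{\lconfour}$ verbatim, so the match with the environment of $\termtwo$ is maintained.

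The heart of the argument is the base case, where the topmost rule is the coinductive $\LLlmc$. For a $\setone_\csym$ instance this forces $\liconone$ empty, $\termone=\cm{\termthree}$, and premise $\tjudg{\am{\lcontwo},\am{\varone},\am{\lconthree}}{\termthree}$; since substitution is guarded, $\sbst{\termone}{\varone}{\termtwo}=\cm{\sbst{\termthree}{\varone}{\termtwo}}$, and reattaching $\LLlmc$ leaves the obligation $\tjudg{\am{\lcontwo},\am{\lconthree}}{\sbst{\termthree}{\varone}{\termtwo}}$, which is an element of $\setone_\asym$ (the modality has dropped from $\csym$ to $\asym$, and $\termtwo$ is still typed in $\am{\lcontwo},\am{\lconthree}$). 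Hence $\judgone\in\coindrul{\LLLwot}(\setone)\subseteq\indrul{\LLLwot}(\coindrul{\LLLwot}(\setone))$. For a $\setone_\asym$ instance the same step again produces a $\setone_\asym$ obligation, so both classes are closed under crossing coinductive boxes; this is exactly where the guardedness of the occurrences of $\varone$ provides the coinductive progress. The remaining leaves are the variable rules: a $\cm{\varone}$ is never the conclusion of one, so in $\setone_\csym$ the displayed variable differs from $\varone$ and the substitution is vacuous, whereas in $\setone_\asym$, when $\LLvara$ gives $\termthree=\varone$ one has $\sbst{\termthree}{\varone}{\termtwo}=\termtwo$, and the invariant guarantees that the free variables of $\termtwo$ already occur $\asym$-marked in the local environment, so a Weakening Lemma closes the case.

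I expect the genuine difficulty to be the design of $\setone$ rather than the case analysis: one has to realise that the $\csym$-lemma is not self-contained and that the auxiliary $\asym$-class is forced precisely because $\LLlmc$ converts $\cm{\varone}$ into $\am{\varone}$ over an all-$\asym$ environment. The subtle point is the invariant that $\termtwo$ stays typeable in the \emph{same} environment $\am{\lcontwo},\am{\lconthree}$ however many boxes we descend through, which holds only because $\asym$-marked variables survive both $\LLlmi$ and $\LLlmc$ unchanged; correspondingly, the weakening used at the $\LLvara$ leaves must add only fresh patterns and never re-mark an $\asym$ variable as $\csym$, an operation that would be unsound. Once this is in place the two substitution classes are manifestly closed under every rule, $\setone$ is consistent, and therefore $\setone\subseteq\cd{\LLLwot}$, which yields the statement.
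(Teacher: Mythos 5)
Your proof is correct and follows exactly the strategy the paper announces for all of its substitution lemmas (the consistency/fixed-point method of Lemma~\ref{lem:substlinllwot}); the paper itself states Lemma~\ref{lemma:lllslcoinductive} without proof, so you are supplying details it omits. You correctly identify the one genuinely new ingredient: since $\csym$ has no dereliction, the $\LLlmc$ base case converts a $\csym$-obligation into an $\asym$-obligation over an all-$\asym$ environment, so the consistency set must also absorb the judgments of the Arbitrary Case (Lemma~\ref{lemma:lllslarbitrary}) --- equivalently, one could prove that lemma first and invoke it as a black box at the coinductive step; your subset-style invariant on the environment of $\termtwo$ and the fresh-patterns-only weakening are exactly what is needed to make either version go through.
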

\begin{lemma}[Substitution Lemma, Arbitrary Case]\label{lemma:lllslarbitrary}
If $\tjudg{\liconone,\dm{\lconone},\cm{\lcontwo},\am{\lconthree},\am{\varone}}{\termone}$ and
$\tjudg{\am{\lconthree}}{\termtwo}$, then
$\tjudg{\liconone,\dm{\lconone},\cm{\lcontwo},\am{\lconthree}}{\sbst{\termone}{\varone}{\termtwo}}$.
\end{lemma}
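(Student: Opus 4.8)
The plan is to reuse the greatest-fixpoint methodology of Lemma~\ref{lem:substlinllwot}, but to lean on the coinductive machinery in a way that the linear and inductive cases did not require. I would introduce the candidate set of judgments
$$
\setone=\left\{
\tjudg{\liconone,\dm{\lconone},\cm{\lcontwo},\am{\lconthree}}{\sbst{\termone}{\varone}{\termtwo}}\midd
\tjudg{\liconone,\dm{\lconone},\cm{\lcontwo},\am{\lconthree},\am{\varone}}{\termone}\in\cd{\LLLwot}
\wedge\tjudg{\am{\lconthree}}{\termtwo}\in\cd{\LLLwot}
\right\}\cup\cd{\LLLwot},
$$
and prove that $\setone$ is $(\indrul{\LLLwot}\circ\coindrul{\LLLwot})$-consistent, i.e.\ $\setone\subseteq\indrul{\LLLwot}(\coindrul{\LLLwot}(\setone))$; by the fixed-point characterisation of $\cd{\LLLwot}$ this gives $\setone\subseteq\cd{\LLLwot}$, which is the statement. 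Judgments already in $\cd{\LLLwot}$ are immediate, since $\cd{\LLLwot}=\indrul{\LLLwot}(\coindrul{\LLLwot}(\cd{\LLLwot}))\subseteq\indrul{\LLLwot}(\coindrul{\LLLwot}(\setone))$.

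The work lies in the genuine substitution judgments. Fix such a $\judgone=\tjudg{\liconone,\dm{\lconone},\cm{\lcontwo},\am{\lconthree}}{\sbst{\termone}{\varone}{\termtwo}}$ and let $\judgtwo=\tjudg{\liconone,\dm{\lconone},\cm{\lcontwo},\am{\lconthree},\am{\varone}}{\termone}\in\cd{\LLLwot}$. Using $\cd{\LLLwot}=\indrul{\LLLwot}(\coindrul{\LLLwot}(\cd{\LLLwot}))$, the derivation of $\judgtwo$ consists (from the root) of a finite block of $\natone$ inductive rules sitting above one layer of coinductive rules whose premises are again in $\cd{\LLLwot}$. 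I would then induct on $\natone$, mirroring each inductive rule of $\judgtwo$ on $\sbst{\termone}{\varone}{\termtwo}$ and using that the guarded equations defining $\sbst{\cdot}{\varone}{\termtwo}$ make substitution commute with every term constructor; since each mirrored rule is inductive, it preserves membership in $\indrul{\LLLwot}(\coindrul{\LLLwot}(\setone))$. The leaves of this inductive block are variable axioms: if the axiom is $\LLvara$ with $\termone=\varone$, then $\sbst{\termone}{\varone}{\termtwo}=\termtwo$ and, as $\am{\lconthree}$ survives into the environment there, a Weakening Lemma turns $\tjudg{\am{\lconthree}}{\termtwo}\in\cd{\LLLwot}\subseteq\setone$ into the required judgment; if $\termone$ is any other variable the axiom applies directly after discarding the now-superfluous $\am{\varone}$.

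The main obstacle, and the point where this case genuinely departs from the linear and inductive ones, is the coinductive rule $\LLlmc$. In Lemma~\ref{lem:substlinllwot} the induction terminated by an impossibility (a linear variable cannot sit under a box), but here $\am{\varone}$ \emph{is} allowed to occur inside coinductive boxes, so the argument must pass through $\LLlmc$ and \emph{re-enter} $\setone$ — this coinductive escape is precisely why the greatest-fixpoint formulation is needed rather than a plain induction toward $\cd{\LLLwot}$. Concretely, when the block bottoms out at an instance (in the notation of Figure~\ref{fig:lllwotwfr}, with its own local environments)
$$
\infer=[\LLlmc]
{\tjudg{\dm{\lconone}',\cm{\lcontwo}',\am{\lconthree}'}{\cm{\termthree}}}
{\tjudg{\am{\lcontwo}',\am{\lconthree}'}{\termthree}}
$$
with $\am{\varone}$ among the arbitrary patterns $\am{\lconthree}'$, the premise $\tjudg{\am{\lcontwo}',\am{\lconthree}'}{\termthree}$ lies in $\cd{\LLLwot}$ and, crucially, carries \emph{only} arbitrary patterns — exactly the environment shape $\LLlmc$ forces above a coinductive box. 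Hence $\tjudg{\am{\lcontwo}',\am{\lconthree}'\setminus\am{\varone}}{\sbst{\termthree}{\varone}{\termtwo}}$ is again a substitution judgment of the form defining $\setone$ (the hypothesis $\tjudg{\am{\lconthree}}{\termtwo}$ being available after weakening into the enlarged arbitrary context), and applying the single coinductive rule $\LLlmc$ to this member of $\setone$ yields $\tjudg{\dm{\lconone}',\cm{\lcontwo}',\am{\lconthree}'}{\cm{\sbst{\termthree}{\varone}{\termtwo}}}\in\coindrul{\LLLwot}(\setone)$. The mirrored inductive block above it then keeps the whole of $\judgone$ inside $\indrul{\LLLwot}(\coindrul{\LLLwot}(\setone))$. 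The delicate verification is checking that the bookkeeping of $\LLlmc$ (the rewriting of $\cm{\lcontwo}'$ into $\am{\lcontwo}'$ and the erasure of $\dm{\lconone}'$) is compatible with the requirement $\tjudg{\am{\lconthree}}{\termtwo}$; the remaining cases ($\LLapp$, $\LLlaml$, the $\LLlami$ rules, and $\LLlmi$) reduce to routine rule-by-rule checking together with the Weakening Lemma.
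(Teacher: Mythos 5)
Your proof is correct and follows exactly the methodology the paper sets up for these substitution lemmas: the paper states Lemma~\ref{lemma:lllslarbitrary} without proof, and its model argument (Lemma~\ref{lem:substlinllwot} for $\LLwot$) is the same consistency-of-a-candidate-set scheme you use, with the same treatment of the variable axioms via weakening and the same finite induction on the inductive block. Your one substantive adaptation --- that at an $\LLlmc$ instance the substituted judgment must \emph{re-enter} $\setone$ rather than being resolved into $\cd{\LLLwot}$ by an impossibility as in the linear case, with a Weakening Lemma absorbing the passage from $\cm{\lcontwo}'$ in the conclusion to $\am{\lcontwo}'$ in the premise so that the second hypothesis still has the required shape --- is precisely what the arbitrary case needs and is the point the paper leaves implicit.
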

Altogether, the lemmas above imply
\begin{proposition}[Well-Formedness is Preseved by Reduction]
  If $\tjudg{\conone}{\termone}$ and $\termone\redLLwod\termtwo$, then $\tjudg{\contwo}{\termtwo}$
  where $\conone\envleq\contwo$.
\end{proposition}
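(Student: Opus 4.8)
The plan is to follow the same two-phase strategy used for $\LLwot$: first establish the claim for a single \emph{basic} reduction step $\termone\redLLbas\termtwo$, and then lift it to an arbitrary reduction $\termone\redLLwod\termtwo$ by induction on the context in which the redex sits. The genuine novelty with respect to $\LLwot$ is that the environment is no longer preserved on the nose but may only grow along $\envleq$, and it is worth pinning down in advance \emph{where} this slack is created. The only source is the firing of an inductive redex $\ap{(\ia{\varone}{\termone})}{\im{\termtwo}}$ in which the abstraction was formed by $\LLlami_1$: there the box $\im{\termtwo}$ is dissolved, so the free variables of $\termtwo$, previously confined to one inductive box and hence recorded as patterns of the form $\im{}$, are relocated to linear position and become patterns of the form $\dm{}$. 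This is exactly the replacement of $\im{\varone}$ by $\dm{\varone}$ licensed by $\envleq$; every other case will leave the environment unchanged.

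For the base case I invert the typing derivation at the root of the redex. If $\termone$ is $\ap{(\la{\varone}{\termthree})}{\termfour}$, inversion on $\LLapp$ and $\LLlaml$ followed by Lemma~\ref{lemma:lllsllinear} yields $\tjudg{\conone}{\sbst{\termthree}{\varone}{\termfour}}$ with the environment unchanged, so $\conone\envleq\conone$. If $\termone$ is $\ap{(\ca{\varone}{\termthree})}{\cm{\termfour}}$, inversion on $\LLlamc$ and $\LLlmc$ and then Lemma~\ref{lemma:lllslcoinductive} again give $\tjudg{\conone}{\sbst{\termthree}{\varone}{\termfour}}$, unchanged. The interesting case is $\termone=\ap{(\ia{\varone}{\termthree})}{\im{\termfour}}$, where the abstraction admits two formation rules: if it was introduced by $\LLlami_2$ the premise carries $\im{\varone}$ and Lemma~\ref{lemma:lllslinductiveII} keeps the $\im{}$-variables of $\termfour$ as $\im{}$, so the environment is unchanged; if it was introduced by $\LLlami_1$ the premise carries $\dm{\varone}$ and Lemma~\ref{lemma:lllslinductiveI} turns the $\im{}$-variables of $\termfour$ into $\dm{}$, producing a strictly $\envleq$-larger environment.

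For the inductive step I argue by induction on the shape of the (finite) context $\ctxone$, which is legitimate even though the derivations themselves may be infinite. The box cases are the pleasant ones. When $\ctxone=\im{\ctxtwo}$, inversion on $\LLlmi$ exposes a premise whose environment, being of the form $\lcontwo,\cm{\lconthree},\am{\lconfour}$, contains \emph{no} pattern of the form $\im{}$; since $\envleq$ can only rewrite $\im{}$ into $\dm{}$, the induction hypothesis forces the inner environment to be literally unchanged, and reapplying $\LLlmi$ leaves the outer environment fixed. The case $\ctxone=\cm{\ctxtwo}$ is identical, the premise of $\LLlmc$ being built solely from patterns of the form $\am{}$. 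The three abstraction cases are handled by inversion, the induction hypothesis, and reapplication of the corresponding rule; the only subtlety is that an inner reduction may reclassify the bound variable of an $\ia{\varone}{\cdot}$ from $\im{\varone}$ to $\dm{\varone}$, but this is harmless precisely because both $\LLlami_1$ and $\LLlami_2$ derive the same conclusion, so one simply switches from the latter to the former.

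The main obstacle is the application case, say $\ctxone=\ap{\ctxtwo}{\termfive}$, because $\LLapp$ splits the environment and its non-shared components $\liconone,\licontwo$ are allowed to contain only variables and patterns of the form $\im{}$. Applying the induction hypothesis to the reducing subterm may replace some $\im{}$-patterns in its non-shared part by $\dm{}$-patterns, which $\LLapp$ cannot host there. The remedy is a routine weakening lemma, entirely analogous to the one invoked for $\LLwot$ and valid because a $\dm{}$-, $\cm{}$-, or $\am{}$-pattern permits zero occurrences: the newly created $\dm{}$-variables are moved into the shared context and the other premise is weakened by them. Since these variables did not previously occur in the other premise, the weakening is sound, $\LLapp$ reapplies, and the resulting total environment differs from the original only by having turned the corresponding $\im{}$-patterns into $\dm{}$-patterns, i.e. it is $\envleq$-above $\conone$. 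Composing the two phases gives the proposition.
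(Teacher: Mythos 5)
Your proof is correct and follows the same route the paper intends: the paper leaves this proposition's proof implicit (``Altogether, the lemmas above imply''), the intended argument being the exact analogue of the $\LLwot$ version, namely a case analysis on the three basic redexes via the substitution Lemmas~\ref{lemma:lllsllinear}--\ref{lemma:lllslarbitrary} followed by an induction on the enclosing context. Your additional observations --- that the $\envleq$-slack originates solely from firing an inductive redex whose abstraction was formed by $\LLlami_1$ (turning the $\im{}$-patterns recorded by $\LLlmi$ into $\dm{}$-patterns via Lemma~\ref{lemma:lllslinductiveI}), and that the $\LLapp$ case needs a $\dm{}$-weakening to re-split the environment --- are accurate elaborations of details the paper glosses over.
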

Please observe that the underlying environment can indeed change
during reduction, but in a very peculiar way: variables occurring in a
$\isym$-pattern can later move to a $\dsym$-pattern.

Classes $\tms{\LLLwot}$ and $\tms{\LLLwot}(\conone)$ (where $\conone$
is an environment) are defined in the natural way, as in $\LLwot$.
\subsection{The Fundamental Lemma}
It is now time to show \emph{why} $\LLLwot$ is a computationally
well-behaved object. 
In this section we will prove a crucial result,
namely that reduction is strongly normalising \emph{at each
  depth}.
Before embarking on the proof of this result, let us spend
some time to understand why this is the case, giving some necessary
definitions along the way.

For any term $\termone\in\tms{\LLLwot}$ let us define the \emph{size}
$\psize{\termone}{\natone}$ of $\termone$ \emph{at depth} $\natone$ as the
number of occurrences of any symbol at depth $\natone$ inside
$\termone$. Observe that $\psize{\termone}{\natone}$ is well-defined
\emph{only} because $\termone$ is assumed to be a term and not just a
preterm. Formally, $\psize{\termone}{\natone}$ is any
  natural number satisfying the equations in
  Figure~\ref{fig:psizetms}, and the following result holds:
  \begin{figure}
    \begin{center}
\fbox{
\begin{minipage}{.445\textwidth}
\begin{center}
\begin{align*}
\psize{\varone}{0}&=1;\\   
\psize{\im{\termone}}{0}&=\psize{\termone}{0}+1;\\
\psize{\cm{\termone}}{0}&=0;\\
\psize{\ap{\termone}{\termtwo}}{0}&=\psize{\termone}{0}+\psize{\termtwo}{0}+1;\\
\psize{\la{\varone}{\termone}}{0}&=\psize{\ia{\varone}{\termone}}{0}\\
  &=\psize{\ca{\varone}{\termone}}{0}=\psize{\termone}{0}+1;
\end{align*}
\begin{align*}
\psize{\varone}{\nattwo+1}&=0;\\
\psize{\im{\termone}}{\nattwo+1}&=\psize{\termone}{\nattwo+1};\\
\psize{\cm{\termone}}{\nattwo+1}&=\psize{\termone}{\nattwo};\\
\psize{\ap{\termone}{\termtwo}}{\nattwo+1}&=\psize{\termone}{\nattwo+1}+\psize{\termtwo}{\nattwo+1};\\
\psize{\la{\varone}{\termone}}{\nattwo+1}&=\psize{\ia{\varone}{\termone}}{\nattwo+1}\\
  &=\psize{\ca{\varone}{\termone}}{\nattwo+1}=\psize{\termone}{\nattwo+1}.
\end{align*}
\end{center}
\vspace{2pt}
\end{minipage}}
\end{center}
\caption{Parametrised Sizes of Preterms: Equations.}\label{fig:psizetms}
\end{figure}
\begin{lemma}\label{lemma:psize}
For every term $\termone$ and for every natural number $\nattwo\in\NN$ there is a unique
natural number $\natone$ such that $\psize{\termone}{\nattwo}=\natone$.
\end{lemma}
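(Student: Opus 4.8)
The plan is to prove the statement by induction on $\nattwo$, after isolating the one structural fact that makes the (a priori circular) equations of Figure~\ref{fig:psizetms} collapse into an ordinary finite recursion. For an infinite term these equations do not obviously determine a value, since descending through applications, abstractions and inductive boxes could in principle go on forever; the whole point is that it cannot. Call the \emph{inductive skeleton} of a term $\termone$ the set of subterm occurrences reached from the root by descending only through applications, abstractions and inductive boxes---i.e.\ never entering a coinductive box---together with the variable occurrences and the topmost coinductive boxes $\cm{\termthree}$ at which such descents terminate. The key lemma I would establish first is that this skeleton is always \emph{finite}.

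To prove finiteness I would argue on a derivation $\derone$ of $\tjudg{\conone}{\termone}$ in $\LLLwot$. Since the only coinductive rule is $\LLlmc$ and every other rule is inductive, the rule tree of $\derone$ has the same shape as the syntax tree of $\termone$ (each constructor is introduced by exactly one rule, with premises corresponding to immediate subterms, and variables at the leaves). The inductive skeleton then corresponds exactly to the prefix subtree $\derone'$ of $\derone$ obtained by cutting $\derone$ at each \emph{topmost} instance of $\LLlmc$, keeping those instances as leaves. Every internal node of $\derone'$ is an inductive rule, so an infinite branch of $\derone'$ would be an infinite branch of $\derone$ containing no instance of $\LLlmc$ at all, i.e.\ a branch that is eventually (indeed always) inductive---precisely what the well-formedness criterion of the mixed system forbids. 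As each rule has finitely many premises, $\derone'$ is finitely branching and has no infinite branch, so by K\"onig's Lemma it is finite; hence so is the skeleton.

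With this in hand the lemma follows by induction on $\nattwo$. For $\nattwo=0$ the equations for $\psize{\cdot}{0}$ only recurse through the skeleton, contribute $1$ at variables and at each application, abstraction or inductive box, and stop with value $0$ at every coinductive box; since the skeleton is finite, $\psize{\termone}{0}$ is forced to equal one specific finite sum, which yields both existence and uniqueness. For the inductive step, assume $\psize{\termtwo}{\nattwo}$ is well defined for every term $\termtwo$. The equations for $\psize{\cdot}{\nattwo+1}$ again traverse only the skeleton while keeping the index $\nattwo+1$, give $0$ at variables, and at each topmost coinductive box $\cm{\termthree}$ hand control to $\psize{\termthree}{\nattwo}$. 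The body $\termthree$ is itself a term, being the premise of the corresponding $\LLlmc$ step, so $\psize{\termthree}{\nattwo}$ is a well-defined natural number by the induction hypothesis; summing these finitely many values over the finite skeleton pins down $\psize{\termone}{\nattwo+1}$ uniquely. The main obstacle is concentrated entirely in the finiteness lemma: one must make precise that the mixed-system condition ``no eventually-inductive infinite branch'' really bounds the coinductive-box-free part of the term, and then invoke K\"onig's Lemma correctly. Once that is done, existence and uniqueness reduce to routine bottom-up computations over a finite tree.
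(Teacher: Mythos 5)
Your proof is correct and follows essentially the same route as the paper's: an outer induction on $\nattwo$, with existence at each level secured by the finiteness of the portion of the term lying above its topmost coinductive boxes, which the paper extracts directly from the characterisation $\cd{\LLLwot}=\indrul{\LLLwot}(\coindrul{\LLLwot}(\cd{\LLLwot}))$ and exploits via an inner induction on the number of inductive rules, where you package the same fact as a separate K\"onig's-Lemma argument on the derivation tree. The only real divergence is uniqueness, which you obtain for free from the well-founded computation over the finite skeleton, while the paper argues it separately by orienting the equations into a confluent rewrite system; both are fine.
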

\begin{proof}
The fact that for each $\termone$ and for each $\nattwo$ there is \emph{one} natural number
satisfying the equations in Figure~\ref{fig:psizetms} can be proved by induction on $\nattwo$:
\begin{varitemize}
\item
  If $\nattwo=0$, then since $\termone$ is a term, $\tjudg{\conone}{\termone}$ is an element of the
  set $\indrul{\LLLwot}(\coindrul{\LLLwot}(\cd{\LLLwot}))$. Then, let us perform another induction
  on the (finite) number of inductive rules used to obtain $\tjudg{\conone}{\termone}$
  from something in $\coindrul{\LLLwot}(\cd{\LLLwot})$:
  \begin{varitemize}
  \item
    If the last rule is $\LLvarl$, $\LLvard$ or $\LLvara$, then $\psize{\termone}{0}=1$ by
    definition;
  \item
    If the last rule is $\LLapp$, then $\termone=\ap{\termtwo}{\termthree}$,
    there are $\psize{\termtwo}{0}$ and $\psize{\termthree}{0}$,
    and $\psize{\termone}{0}=\psize{\termtwo}{0}+\psize{\termthree}{0}+1$;
  \item
    If the last rule is either $\LLlaml$, $\LLlami_1$, $\LLlami_2$, $\LLlamc$ or
    $\LLlmi$, then we can proceed like in the previous case, by the inductive hypothesis;
  \item
    If the last rule is $\LLlmc$, then $\psize{\termone}{0}=0$ by definition.
  \end{varitemize}
\item
  If $\nattwo\geq 1$, then again, since $\termone$ is a term, $\tjudg{\conone}{\termone}$ is an element of the
  set $\indrul{\LLLwot}(\coindrul{\LLLwot}(\cd{\LLLwot}))$. Then, let us perform an induction
  on the (finite) number of inductive rules used to get $\tjudg{\conone}{\termone}$
  from something in $\coindrul{\LLLwot}(\cd{\LLLwot})$. The only interesting
  case is $\termone=\cm{\termtwo}$, since in all the other cases we can proceed exactly
  as in the case $\nattwo=0$. From the fact that $\tjudg{\conone}{\termone}\in\indrul{\LLLwot}(\coindrul{\LLLwot}(\cd{\LLLwot}))$,
  it follows that there is $\contwo$ such that $\tjudg{\contwo}{\termtwo}\in\cd{\LLLwot}$, i.e. $\termtwo$
  itself is a term. But then we can apply the inductive hypothesis and obtain that
  $\psize{\termtwo}{\nattwo-1}$ exists. It is now clear that $\psize{\termone}{\nattwo}=\psize{\termtwo}{\nattwo-1}$
  exists.
\end{varitemize}
As for uniqueness, it can be proved by observing that the equations from figure~\ref{fig:psizetms} can be oriented
so as to get a confluent rewrite system for which, then, the Church-Rosser property holds. This concludes the proof.
\end{proof}

Now, suppose that a term $\termone$ is such that
$\termone\redLL{\strnat{\natone}}\termfour$. The term $\termone$, then, must be in the
form $\actx{\pctxone{\strnat{\natone}}}{\termtwo}$ where $\termtwo$ is a redex
whose reduct is $\termthree$, and $\termfour$ is just
$\actx{\pctxone{\strnat{\natone}}}{\termthree}$. The question is: how does any
$\psize{\actx{\pctxone{\strnat{\natone}}}{\termtwo}}{\nattwo}$ relate to the
corresponding $\psize{\actx{\pctxone{\strnat{\natone}}}{\termthree}}{\nattwo}$?
Some interesting observations follow:
\begin{varitemize}
\item
  If $\nattwo<\natone$ then
  $\psize{\actx{\pctxone{\strnat{\natone}}}{\termthree}}{\nattwo}$
  equals
  $\psize{\actx{\pctxone{\strnat{\natone}}}{\termtwo}}{\nattwo}$,
  since reduction does not affect the size at lower levels;
\item
  If $\nattwo>\natone$ then of course
  $\psize{\actx{\pctxone{\strnat{\natone}}}{\termthree}}{\nattwo}$ can
  be much bigger than
  $\psize{\actx{\pctxone{\strnat{\natone}}}{\termtwo}}{\nattwo}$,
  simply because symbol occurrences at depth $\nattwo$ can be
  duplicated as an effect of substitution;
\item
  Finally, if $\nattwo=\natone$, then
  $\natthree=\psize{\actx{\pctxone{\strnat{\natone}}}{\termthree}}{\nattwo}$
  can again be bigger than
  $\natfour=\psize{\actx{\pctxone{\strnat{\natone}}}{\termtwo}}{\nattwo}$,
  but in a very controlled way. More specifically,
  \begin{varitemize}
    \item
      if $\termtwo$ is a \emph{linear} redex, then $\natfour<\natthree$
      because the function body has exactly one free occurrence of the
      bound variable;
    \item
      if $\termtwo$ is an \emph{inductive} redex, then $\natfour$ can indeed
      by bigger than $\natthree$, but in that case the involved
      inductive box has disappeared.
    \item
      if $\termtwo$ is a \emph{coinductive} redex, then $\natfour<\natthree$
      because the involved coinductive box can actually be copied many
      times, but all the various copies will be found at depths
      strictly bigger than $\natone=\nattwo$.
  \end{varitemize}
\end{varitemize}
The informal argument above can be formalised by way of an appropriate
notion of weight, generalising the argument by Lafont~\cite{Lafont04TCS}
to the more general setting we work in here.

Given $\natone,\nattwo\in\NN$ and a term $\termone$, the
\emph{$\natone$-weight} $\wei{\natone}{\nattwo}{\termone}$ of
$\termone$ at depth $\nattwo$ is any natural number satisfying the
rules from Figure~\ref{fig:pwtms}.
\begin{figure}
  \begin{center}
    \fbox{
      \begin{minipage}{.445\textwidth}
        \begin{center}
          \begin{align*}
            \wei{\natone}{0}{\varone}&=1;\\   
            \wei{\natone}{0}{\im{\termone}}&=\natone\cdot\wei{\natone}{0}{\termone};\\
            \wei{\natone}{0}{\cm{\termone}}&=0;\\
            \wei{\natone}{0}{\ap{\termone}{\termtwo}}&=\wei{\natone}{0}{\termone}+\wei{\natone}{0}{\termtwo};\\
            \wei{\natone}{0}{\la{\varone}{\termone}}&=\wei{\natone}{0}{\ia{\varone}{\termone}}\\
            &=\wei{\natone}{0}{\ca{\varone}{\termone}}=\wei{\natone}{0}{\termone}+1;
          \end{align*}
          \begin{align*}
            \wei{\natone}{\nattwo+1}{\varone}&=0;\\
            \wei{\natone}{\nattwo+1}{\im{\termone}}&=\wei{\natone}{\nattwo+1}{\termone};\\
            \wei{\natone}{\nattwo+1}{\cm{\termone}}&=\wei{\natone}{\nattwo}{\termone};\\
            \wei{\natone}{\nattwo+1}{\ap{\termone}{\termtwo}}&=\wei{\natone}{\nattwo+1}{\termone}+\wei{\natone}{\nattwo+1}{\termtwo};\\
            \wei{\natone}{\nattwo+1}{\la{\varone}{\termone}}&=\wei{\natone}{\nattwo+1}{\ia{\varone}{\termone}}\\
            &=\wei{\natone}{\nattwo+1}{\ca{\varone}{\termone}}=\wei{\natone}{\nattwo+1}{\termone}.
          \end{align*}
        \end{center}
        \vspace{2pt}
    \end{minipage}}
  \end{center}
  \caption{Parametrised Weights of Preterms: Equations.}\label{fig:pwtms}
\end{figure}
\begin{lemma}
  For every term $\termone$ and for every natural numbers
  $\natone,\nattwo\in\NN$, there is a unique natural number $\natthree$
  such that $\wei{\natone}{\nattwo}{\termone}=\natthree$.
\end{lemma}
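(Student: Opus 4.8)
The plan is to follow the proof of Lemma~\ref{lemma:psize} almost verbatim, after observing that the equations of Figure~\ref{fig:pwtms} have exactly the same recursion structure as those of Figure~\ref{fig:psizetms}: the recursive calls in each defining equation are on \emph{immediate} subterms, and the depth parameter $\nattwo$ decreases (from $\nattwo+1$ to $\nattwo$) precisely and only when crossing a coinductive box $\cm{\cdot}$. The specific arithmetic operations differ --- for instance an inductive box now contributes a factor $\natone$ rather than an increment --- but existence and uniqueness depend only on the shape of the recursion, not on these operations. Hence the two-layered induction carries over unchanged.

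For existence I would argue by induction on $\nattwo$. In the base case $\nattwo=0$, the fact that $\termone$ is a term means $\tjudg{\conone}{\termone}$ belongs to $\indrul{\LLLwot}(\coindrul{\LLLwot}(\cd{\LLLwot}))$, so I would perform an inner induction on the finite number of inductive rules used to reach $\tjudg{\conone}{\termone}$ from $\coindrul{\LLLwot}(\cd{\LLLwot})$. For the variable rules $\wei{\natone}{0}{\varone}=1$ is fixed by definition; for $\LLapp$, for all the abstraction rules, and for $\LLlmi$ the value is a sum, increment, or $\natone$-multiple of the weights of immediate subterms at depth $0$, which exist by the inner induction hypothesis; and the case $\LLlmc$, i.e.\ $\termone=\cm{\termtwo}$, is immediate since $\wei{\natone}{0}{\cm{\termtwo}}=0$ by definition, so the recursion simply stops. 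In the step $\nattwo\geq 1$ the only case not handled exactly as for $\nattwo=0$ is again $\termone=\cm{\termtwo}$: here $\wei{\natone}{\nattwo}{\cm{\termtwo}}=\wei{\natone}{\nattwo-1}{\termtwo}$, and since the derivation of $\tjudg{\conone}{\cm{\termtwo}}$ forces a derivation of $\tjudg{\contwo}{\termtwo}$ for some $\contwo$, the subterm $\termtwo$ is itself a term, so the outer induction hypothesis supplies $\wei{\natone}{\nattwo-1}{\termtwo}$, whence the whole weight exists.

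Uniqueness I would obtain exactly as in Lemma~\ref{lemma:psize}: orient the equations of Figure~\ref{fig:pwtms} left-to-right into a rewrite system on weight-expressions. This system is terminating (each rule strictly reduces the subterm-plus-depth measure that the two inductions above exploit) and trivially locally confluent, as no two left-hand sides overlap; by Newman's lemma it is confluent, so every weight-expression has a unique numeral normal form, which is the value $\natthree$.

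The one genuine point --- and the only place where \emph{term}-hood rather than mere preterm-hood is used --- is the termination of this recursion, and this is where I expect the real content of the argument to lie. A preterm may contain infinitely many symbols with no coinductive box between them, in which case neither the equations nor their oriented version would terminate. For terms this cannot happen: the mixed formal system guarantees that every infinite branch of a derivation crosses $\LLlmc$ infinitely often, so that between any two consecutive coinductive boxes along a path there is only finitely much inductive structure. Since $\nattwo$ strictly decreases only at coinductive boxes, and only finitely much structure is traversed before reaching the next one, the double induction is well-founded. Overcoming this obstacle is done precisely by appealing to well-formedness, exactly as in Lemma~\ref{lemma:psize}.
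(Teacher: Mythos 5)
Your proposal is correct and follows exactly the route the paper takes: the paper's own proof simply says the claim holds ``in exactly the same way as we did for the size in Lemma~\ref{lemma:psize}'', and your argument is precisely that adaptation, with the double induction on depth and on the number of inductive rules for existence, and orientation of the equations into a confluent rewrite system for uniqueness. You have merely spelled out the details the paper leaves implicit, including the key observation that term-hood (infinitely many coinductive boxes on every infinite branch) is what makes the recursion well-founded.
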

\begin{proof}
  This can be proved to hold in exactly the same way as we did for the
  size in Lemma~\ref{lemma:psize}.
\end{proof}
Similarly, one can define the duplicability factor of $\termone$ at
depth $\nattwo$, $\df{\nattwo}{\termone}$: take the rules in
Figure~\ref{fig:dftms} and prove they uniquely define a natural number
for every term, in the same way as we have just done for the weight
($\nfo{\varone}{\termone}$ is the number of free occurrences of
$\varone$ in the term $\termone$, itself a well-defined concept when
$\termone$ is a term).
\begin{figure}
  \begin{center}
\fbox{
\begin{minipage}{.445\textwidth}
\begin{center}
\begin{align*}
\df{0}{\varone}&=1;\\   
\df{0}{\im{\termone}}&=\df{0}{\termone};\\
\df{0}{\cm{\termone}}&=1;\\
\df{0}{\ap{\termone}{\termtwo}}&=\max\{\df{0}{\termone},\df{0}{\termtwo}\};\\
\df{0}{\la{\varone}{\termone}}&=\df{0}{\ca{\varone}{\termone}}=\df{0}{\termone};\\
\df{0}{\ia{\varone}{\termone}}&=\max\{\nfo{\varone}{\termone},\df{0}{\termone}\}.
\end{align*}
\begin{align*}
\df{\nattwo+1}{\varone}&=1;\\
\df{\nattwo+1}{\im{\termone}}&=\df{\nattwo+1}{\termone};\\
\df{\nattwo+1}{\cm{\termone}}&=\df{\nattwo}{\termone};\\
\df{\nattwo+1}{\ap{\termone}{\termtwo}}&=\max\{\df{\nattwo+1}{\termone},\df{\nattwo+1}{\termtwo}\};\\
\df{\nattwo+1}{\la{\varone}{\termone}}&=\df{\nattwo+1}{\ia{\varone}{\termone}}\\
  &=\df{\nattwo+1}{\ca{\varone}{\termone}}=\df{\nattwo+1}{\termone}.
\end{align*}
\end{center}
\vspace{2pt}
\end{minipage}}
\end{center}
\caption{Parametrised Duplicability Factor of Preterms: Equations.}\label{fig:dftms}
\end{figure}
Given a term $\termone$, the weight of $\termone$ at depth $\natone$ is simply
$\twei{\natone}{\termone}=\wei{\df{\natone}{\termone}}{\natone}{\termone}$.

The calculus $\LLLwot$ is designed in such a way that the duplicability factor never increases:
\begin{lemma}\label{lemma:dfdni}
If $\termone\in\tms{\LLLwot}$ and $\termone\redLLwod\termtwo$, then
$\df{\nattwo}{\termone}\geq\df{\nattwo}{\termtwo}$ for every $\nattwo$.
\end{lemma}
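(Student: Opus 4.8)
The plan is to decompose the statement into a \emph{root} case and a \emph{context} case. First I would prove that every basic step $\termthree\redLLbas\termfour$ satisfies $\df{\nattwo}{\termthree}\geq\df{\nattwo}{\termfour}$ for all $\nattwo$; then, by induction on the context $\ctxone$, that $\df{\nattwo}{\actx{\ctxone}{\termthree}}\geq\df{\nattwo}{\actx{\ctxone}{\termfour}}$ for all $\nattwo$. Since $\redLLwod$ is exactly basic reduction inside a context, these two steps yield the lemma. The technical heart of the root case is a duplicability-factor counterpart of the substitution Lemmas~\ref{lemma:lllsllinear}--\ref{lemma:lllslcoinductive}: for a substitution carried out according to any admissible pattern I would prove a bound of the form $\df{\nattwo}{\sbst{\termfive}{\varone}{\termsix}}\leq\max\{\df{\nattwo}{\termfive},\ \max_{d}\df{\nattwo-d}{\termsix}\}$, with $d$ ranging over the depths at which $\varone$ occurs in $\termfive$. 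Its validity rests on the fact that the equations of Figure~\ref{fig:dftms} combine subterm factors only through $\max$, so replacing a variable by many copies of $\termsix$ cannot push the factor beyond that of one correctly-placed copy.

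Instantiating this at the three basic redexes: for the linear redex $\ap{(\la{\varone}{\termfive})}{\termsix}$ the single occurrence of $\varone$ is at depth $0$, so $d=0$ and $\df{\nattwo}{\sbst{\termfive}{\varone}{\termsix}}\leq\max\{\df{\nattwo}{\termfive},\df{\nattwo}{\termsix}\}=\df{\nattwo}{\ap{(\la{\varone}{\termfive})}{\termsix}}$. For the coinductive redex $\ap{(\ca{\varone}{\termfive})}{\cm{\termsix}}$ every occurrence of $\varone$ lies under at least one coinductive box, so all inserted copies of $\termsix$ are created at strictly greater depth; at depth $0$ they are invisible ($\df{0}{\cm{\termsix}}=1$) and at positive depth their contribution is the depth-shifted one already recorded for $\cm{\termsix}$ on the redex side, so the bookkeeping for this modality is where one must be careful. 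The inductive redex $\ap{(\ia{\varone}{\termfive})}{\im{\termsix}}$ is the genuinely delicate case, because $\varone$ may occur several times while $\termsix$ is inserted at depth $0$.

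The context closure is an induction on $\ctxone$, and the clause I expect to be the main obstacle is $\ctxone=\ia{\varone}{\ctxtwo}$ evaluated at depth $0$, where $\df{0}{\ia{\varone}{\actx{\ctxtwo}{\termfour}}}=\max\{\nfo{\varone}{\actx{\ctxtwo}{\termfour}},\df{0}{\actx{\ctxtwo}{\termfour}}\}$. Every other clause is immediate from Figure~\ref{fig:dftms}: application distributes a $\max$ over its two immediate subterms; plain and coinductive abstractions and the inductive box are transparent for $D$; and the coinductive box only shifts the depth, so one applies the induction hypothesis at depth $\nattwo-1$ (both sides being $1$ at depth $0$). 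In the $\ia{\varone}{\ctxtwo}$ clause, by contrast, firing the redex inside $\ctxtwo$ can \emph{increase} $\nfo{\varone}{\cdot}$, exactly when that redex is inductive and copies a box carrying the occurrence of $\varone$. To absorb this I would strengthen the induction with the auxiliary invariant $\nfo{\varone}{\actx{\ctxtwo}{\termfour}}\leq\max\{\nfo{\varone}{\actx{\ctxtwo}{\termthree}},\df{0}{\actx{\ctxtwo}{\termthree}}\}$, maintained in parallel with the inequality for $D$.

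The auxiliary invariant is what makes the $\ia{}$-clause close, and it holds precisely because of the soft-linear discipline of rules $\LLlami_1$ and $\LLlami_2$: a variable bound by $\lambda\!\isym$ either occurs arbitrarily often but only in linear position (pattern $\dm{\varone}$) or occurs exactly once under a single inductive box (pattern $\im{\varone}$), but never both. In the first case $\varone$ cannot lie inside the duplicated argument box and its count is unchanged; in the second case its unique occurrence may be copied, but each copy is made by the consumed inner abstraction $\ia{\vartwo}{\cdots}$, whose multiplicity $\nfo{\vartwo}{\cdots}$ is bounded by $\df{0}{\actx{\ctxtwo}{\termthree}}$ because $\ia{\vartwo}{\cdots}$ sits at depth $0$. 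Thus any increase of $\nfo{\varone}{\cdot}$ is already dominated by the duplicability factor present on the redex side, and the desired inequality at the binder follows. This is the generalisation of Lafont's counting argument for \SLL\ to the two-modality setting; converting the slogan ``every copy is paid for by the abstraction that makes it'' into the precise parallel invariant, together with the depth bookkeeping for the coinductive modality, is the part that will require the most care.
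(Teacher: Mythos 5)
Your decomposition---a root case for $\redLLbas$ established through a substitution bound on $D$, followed by context closure with a strengthened invariant at $\ia{\varone}{\cdot}$-binders---is essentially the argument the paper gives, except that the paper presents it informally (it explicitly declines to give a formal proof) and folds your auxiliary invariant on $\nfo{\varone}{\cdot}$ into its discussion of the substitution lemmas: its remark that $\nfo{\vartwo}{\sbst{\termone}{\varone}{\termtwo}}$ is majorised by counts already present before the step, because a $\dm{\varone}$-marked substituted variable forces the free variables of the argument to be linear, is exactly your ``every copy is paid for by the abstraction that makes it''. So on the inductive side your proposal is a faithful and somewhat more precise rendering of the intended proof.

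There is, however, one step that fails, in the coinductive root case, and it is precisely the one you gesture at with ``the bookkeeping for this modality is where one must be careful''. A variable bound by $\ca{\varone}{\cdot}$ is only required to occur under \emph{at least} one coinductive box, so in $\ap{(\ca{\varone}{\termfive})}{\cm{\termsix}}$ a copy of $\termsix$ may land at depth $d\geq 2$, contributing $\df{\nattwo-d}{\termsix}$ to the reduct at depth $\nattwo$, whereas the redex side only records $\df{\nattwo-1}{\termsix}$ through $\cm{\termsix}$; these two quantities are incomparable in general, so your (correct) substitution bound cannot be discharged against the redex. Indeed the statement is false as written for $\nattwo$ strictly above the depth of the fired redex: take $\termone=\ap{(\ca{\varone}{\cm{\cm{\varone}}})}{\cm{\termsix}}$ with $\termsix=\ia{\vartwo}{\ap{(\ap{\vartwo}{\vartwo})}{\vartwo}}$, both well formed; then $\termone\redLL{\strnat{0}}\cm{\cm{\termsix}}=\termtwo$, and $\df{2}{\termone}=\max\{\df{0}{\varone},\df{1}{\termsix}\}=1$ while $\df{2}{\termtwo}=\df{0}{\termsix}=3$. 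The paper's own sketch has the same blind spot (``every abstraction occurring in $\sbst{\termone}{\varone}{\termtwo}$ also occurs in either $\termone$ or $\termtwo$'' ignores the depth shift of abstractions coming from $\termtwo$). The damage is contained: Lemma~\ref{lemma:weightdecr} only uses this lemma at the depth where the redex is fired (and below), where everything is sound---at relative depth $0$ the inserted copies of $\termsix$ sit under coinductive boxes and are invisible since $\df{0}{\cm{\termone}}=1$, and at strictly lower depths nothing changes. You should therefore restrict the claim to $\nattwo\leq\natone$ for a step of $\redLL{\strnat{\natone}}$; with that restriction your proof, including the parallel invariant for the $\ia{\varone}{\cdot}$ clause, goes through.
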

\begin{proof}
A formal proof could be given. We prefer, however, to give a more intuitive one here.
Observe that:
\begin{varitemize}
\item
  If $\tjudg{\conone,\varone}{\termone}$ or $\tjudg{\conone,\im{\varone}}{\termone}$, then the variable $\varone$ occurs 
  free exactly once in $\termone$, in the first case outside the scope of any box, in the second case in the scope of
  exactly one inductive box.
\item
  If $\tjudg{\conone,\dm{\varone}}{\termone}$, then $\varone$ occurs free more than once in $\termone$, all the occurrences
  being outside the scope of any box.
\item
  The duplicability factor at level $\natone$ of $\termone$ is nothing more than the maximum, over all abstractions $\ia{\varone}{\termtwo}$
  at level $\natone$ in $\termone$, of the number of free occurrences of $\varone$ in $\termtwo$. Observe that by the well-formation rules
  in Figure~\ref{fig:lllwotwfr}, the variable $\varone$ must be marked as $\dm{\varone}$ or as $\im{\varone}$ for any such $\termtwo$.
  If it is marked as $\im{\varone}$, however, it occurs once in $\termtwo$.
\item
  Now, consider the substitution lemmas~\ref{lemma:lllsllinear}, \ref{lemma:lllslinductiveI}, \ref{lemma:lllslinductiveII},
  \ref{lemma:lllslcoinductive}, and \ref{lemma:lllslarbitrary}. In all the five cases, one realises that:
  \begin{varenumerate}
    \item
      for every $\natone$, $\df{\natone}{\sbst{\termone}{\varone}{\termtwo}}\leq\max\{\df{\natone}{\termone},\df{\natone}{\termtwo}\}$,
      because every abstraction occurring in $\sbst{\termone}{\varone}{\termtwo}$ also occurs in either $\termone$ or $\termtwo$, and
      substitution is capture-avoiding. 
    \item
      If in the judgment $\tjudg{\conone}{\sbst{\termone}{\varone}{\termtwo}}$ one gets as a result of the substitution lemma there
      is $\dm{\vartwo}\in\conone$, then $\nfo{\vartwo}{\sbst{\termone}{\varone}{\termtwo}}$ cannot be too big: there must be
      some $\varthree$ such that $\varthree$ is marked as $\dm{\varthree}$ in one (or both) of the provable judgments existing by hypothesis,
      but $\nfo{\varthree}{\termone}+\nfo{\varthree}{\termtwo}$ majorises $\nfo{\vartwo}{\sbst{\termone}{\varone}{\termtwo}}$. Why?
      Simply because the only case in which $\nfo{\vartwo}{\sbst{\termone}{\varone}{\termtwo}}$ could potentially grow bigger is the one in
      which $\varone$ is marked as $\dm{\varone}$ in the judgment for $\termone$. In that case, however, the variables which are
      free in $\termtwo$ are all linear (or marked as $\cm{\varthree}$ or $\am{\varthree}$).
  \end{varenumerate}
\end{varitemize}
This concludes the proof.
\end{proof}
Moreover, and this is the crucial point, $\twei{\natone}{\termone}$ is
guaranteed to strictly decrease whenever
$\termone\redLL{\strnat{\natone}}\termtwo$:
\begin{lemma}\label{lemma:weightdecr}
Suppose that $\termone\in\tms{\LLLwot}$ and that $\termone\redLL{\strnat{\natone}}\termtwo$. Then
$\twei{\natone}{\termone}>\twei{\natone}{\termtwo}$. Moreover, $\twei{\nattwo}{\termone}=\twei{\nattwo}{\termtwo}$
whenever $\nattwo<\natone$.
\end{lemma}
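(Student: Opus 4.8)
The plan is to reduce the global statement to a local weight computation at the fired redex and then propagate the change through the surrounding context. Since $\termone\redLL{\strnat{\natone}}\termtwo$, write $\termone = \actx{\pctxone{\strnat{\natone}}}{\termthree}$ and $\termtwo = \actx{\pctxone{\strnat{\natone}}}{\termfour}$, where $\pctxone{\strnat{\natone}}$ is a $\strnat{\natone}$-context, $\termthree$ is the contracted redex and $\termfour$ its reduct. First I would prove a \emph{context lemma}: for every $\strnat{\natone}$-context $\pctxone{\strnat{\natone}}$, every $k$ and every term $\termfive$, one has $\wei{k}{\natone}{\actx{\pctxone{\strnat{\natone}}}{\termfive}} = \alpha + \beta\cdot\wei{k}{0}{\termfive}$ with $\alpha\geq 0$ and $\beta = k^{m}\geq 1$ depending only on $\pctxone{\strnat{\natone}}$ and $k$ (here $m$ is the number of inductive boxes on the path to the hole lying below all $\natone$ coinductive boxes). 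This is an induction on $\pctxone{\strnat{\natone}}$: the depth parameter is decremented exactly at the $\natone$ coinductive boxes of the path, reaching $0$ at the hole, while $\wei{k}{0}{\im{\cdot}}$ multiplies by $k$ and applications and abstractions contribute additively. Because $\beta\geq 1$, strict decrease of the global weight will follow from strict decrease of the local weight $\wei{k}{0}{\cdot}$ across one basic reduction, taken with $k = \df{\natone}{\termone}$.

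Next I would establish four \emph{weight-substitution facts}, one per pattern shape, by induction on the well-formation derivation of the body $\termfive$: if $\varone$ occurs linearly at depth $0$ then $\wei{k}{0}{\sbst{\termfive}{\varone}{\termsix}} = \wei{k}{0}{\termfive} - 1 + \wei{k}{0}{\termsix}$; if $\dm{\varone}$ with $j = \nfo{\varone}{\termfive}$ occurrences (all at depth $0$, outside boxes) then $\wei{k}{0}{\sbst{\termfive}{\varone}{\termsix}} = \wei{k}{0}{\termfive} - j + j\cdot\wei{k}{0}{\termsix}$; if $\im{\varone}$ (a single occurrence inside exactly one inductive box) then $\wei{k}{0}{\sbst{\termfive}{\varone}{\termsix}} = \wei{k}{0}{\termfive} - k + k\cdot\wei{k}{0}{\termsix}$; and if $\cm{\varone}$ (all occurrences at depth $\geq 1$) then $\wei{k}{0}{\sbst{\termfive}{\varone}{\termsix}} = \wei{k}{0}{\termfive}$. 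Plugging these into the three basic reductions yields strict decreases: exactly $2$ for the linear redex; $1 + j + (k-j)\wei{k}{0}{\termsix}$ and $1 + k$ for the two inductive subcases; and exactly $1$ for the coinductive redex, since $\wei{k}{0}{\cm{\termsix}} = 0$ and the substituted copies of $\termsix$ all land at depth $\geq 1$. The essential point making the inductive $\dm{}$-subcase work is $k - j\geq 0$: the redex's abstraction lies in the depth-$\natone$ slice, so its bound-variable count $j$ is dominated by $\df{\natone}{\termone} = k$, exactly as in Lafont's argument for \SLL.

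Combining the two steps gives $\wei{k}{\natone}{\termone} > \wei{k}{\natone}{\termtwo}$ for $k = \df{\natone}{\termone}$. To pass to $\twei{\natone}{\cdot}$, I would observe that $\wei{k}{\natone}{\cdot}$ is monotone nondecreasing in $k$ (its only occurrence being the positive factor in $\wei{k}{0}{\im{\cdot}}$), so that by Lemma~\ref{lemma:dfdni}, $\twei{\natone}{\termtwo} = \wei{\df{\natone}{\termtwo}}{\natone}{\termtwo}\leq\wei{k}{\natone}{\termtwo} < \wei{k}{\natone}{\termone} = \twei{\natone}{\termone}$. For the \emph{Moreover} clause, for $\nattwo < \natone$ both $\wei{k}{\nattwo}{\actx{\pctxone{\strnat{\natone}}}{\termfive}}$ and $\df{\nattwo}{\actx{\pctxone{\strnat{\natone}}}{\termfive}}$ are independent of $\termfive$: after $\nattwo$ coinductive boxes the parameter is $0$ strictly above the hole, and the next coinductive box on the path is then evaluated by $\wei{k}{0}{\cm{\cdot}} = 0$ and $\df{0}{\cm{\cdot}} = 1$, cutting the hole off; the $\nfo$ terms entering $\df$ stay fixed because the markings confine every inductively bound variable's occurrences to its own slice, out of reach of the deeper redex. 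Hence $\df{\nattwo}{\termone} = \df{\nattwo}{\termtwo}$ and therefore $\twei{\nattwo}{\termone} = \twei{\nattwo}{\termtwo}$.

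The step I expect to be the main obstacle is the inductive-pattern weight-substitution fact together with the verification that $k = \df{\natone}{\termone}$ is a uniform bound large enough to guarantee \emph{strict} (not merely non-strict) decrease; the remaining cases and the propagation through the context are bookkeeping over the defining equations of $W$ and $D$.
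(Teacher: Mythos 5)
Your proposal is correct and follows essentially the same route as the paper: weight--substitution lemmas keyed to the environment markings, a strict local decrease for each of the three basic redexes using $\df{\natone}{\termone}$ as the uniform bound on $\nfo{\varone}{\cdot}$ (exactly the Lafont-style argument), and propagation through the $\strnat{\natone}$-context by induction. Your exact equalities and the explicit linear form $\alpha+\beta\cdot\wei{k}{0}{\cdot}$ with $\beta=k^{m}\geq 1$ are mild sharpenings of the paper's inequalities and of its unstated context induction, and your explicit treatment of the \emph{moreover} clause (and of why the $\nfo$-counts at lower depths are untouched) fills in a step the paper leaves implicit.
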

\begin{proof}
  We first of all need to prove the following variations on the substitution lemmas:
  \begin{varenumerate}
  \item\label{point:sublemmaI}
    If $\tjudg{\liconone,\varone,\dm{\lconone},\cm{\lcontwo},\am{\lconthree}}{\termone}$ and
    $\tjudg{\licontwo,\dm{\lconone},\cm{\lcontwo},\am{\lconthree}}{\termtwo}$, then
    for every $\natone\geq\max\{\df{0}{\termone},\df{0}{\termtwo}\}$ it holds
    that $\wei{\natone}{0}{\sbst{\termone}{\varone}{\termtwo}}\leq
    \wei{\natone}{0}{\termone}+\wei{\natone}{0}{\termtwo}$.
  \item\label{point:sublemmaII}
    If $\tjudg{\liconone,\dm{\lconone},\dm{\varone},\cm{\lcontwo},\am{\lconthree}}{\termone}$ and
    $\tjudg{\lconfour,\cm{\lcontwo},\am{\lconthree}}{\termtwo}$, then
    for every $\natone\geq\max\{\df{0}{\termone},\df{0}{\termtwo}\}$ it holds
    that $\wei{\natone}{0}{\sbst{\termone}{\varone}{\termtwo}}\leq
    \wei{\natone}{0}{\termone}+\nfo{\varone}{\termone}\cdot\wei{\natone}{0}{\termtwo}$.
  \item\label{point:sublemmaIII}
    If $\tjudg{\liconone,\dm{\lconone},\im{\varone},\cm{\lcontwo},\am{\lconthree}}{\termone}$ and
    $\tjudg{\lconfour,\cm{\lcontwo},\am{\lconthree}}{\termtwo}$, then
    for every $\natone\geq\max\{\df{0}{\termone},\df{0}{\termtwo}\}$ it holds
    that $\wei{\natone}{0}{\sbst{\termone}{\varone}{\termtwo}}\leq
    \wei{\natone}{0}{\termone}+\natone\cdot\wei{\natone}{0}{\termtwo}$.
   \item\label{point:sublemmaIV}
    If $\tjudg{\liconone,\dm{\lconone},\cm{\lcontwo},\am{\lconthree},\am{\varone}}{\termone}$ and
    $\tjudg{\am{\lconthree}}{\termtwo}$, then
    for every $\natone\geq\max\{\df{0}{\termone},\df{0}{\termtwo}\}$ it holds
    that $\wei{\natone}{0}{\sbst{\termone}{\varone}{\termtwo}}\leq
    \wei{\natone}{0}{\termone}$.
   \item\label{point:sublemmaV}
    If $\tjudg{\liconone,\dm{\lconone},\cm{\lcontwo},\cm{\varone},\am{\lconthree}}{\termone}$ and
    $\tjudg{\am{\lcontwo},\am{\lconthree}}{\termtwo}$, then
    for every $\natone\geq\max\{\df{0}{\termone},\df{0}{\termtwo}\}$ it holds
    that $\wei{\natone}{0}{\sbst{\termone}{\varone}{\termtwo}}\leq
    \wei{\natone}{0}{\termone}$.
  \end{varenumerate}
  All the statements above can be proved, as usual, by induction on the (finite) number of
  inductive well-formation rules which are necessary to prove the judgment about $\termone$ from
  something in $\coindrul{\LLLwot}(\cd{\LLLwot})$. As an example, let us consider some
  inductive cases on Point~\ref{point:sublemmaII}., which is one of the most interesting:
  \begin{varitemize}
  \item
    If $\termone$ is proved well-formed by
    $$
    \infer[\LLvard]
    {\tjudg{\dm{\lconone},\cm{\lcontwo},\am{\lconthree},\dm{\varone}}{\varone}}
    {}
    $$
    then $\sbst{\termone}{\varone}{\termtwo}=\termtwo$ and
    \begin{align*}
    \wei{\natone}{0}{\sbst{\termone}{\varone}{\termtwo}}&=\wei{\natone}{0}{\termtwo}\\
       &=1\cdot\wei{\natone}{0}{\termtwo}=\nfo{\varone}{\termone}\cdot\wei{\natone}{0}{\termtwo}\\
       &\leq\wei{\natone}{0}{\termone}+\nfo{\varone}{\termone}\cdot\wei{\natone}{0}{\termtwo}.
    \end{align*}
  \item
    If $\termone$ is proved well-formed by
    $$
    \infer[\LLapp]
    {\tjudg{\liconone,\licontwo,\dm{\lconone},\dm{\varone},\cm{\lcontwo},\am{\lconthree}}{\ap{\termthree}{\termfour}}}
    {
      \tjudg{\liconone,\dm{\lconone},\dm{\varone},\cm{\lcontwo},\am{\lconthree}}{\termthree}
      &
      \tjudg{\licontwo,\dm{\lconone},\dm{\varone},\cm{\lcontwo},\am{\lconthree}}{\termfour}
    }
    $$
    then $\sbst{\termone}{\varone}{\termtwo}=\ap{(\sbst{\termthree}{\varone}{\termtwo})}{(\sbst{\termfour}{\varone}{\termtwo})}$
    and, by inductive hypothesis, we have
    \begin{align*}
      \wei{\natone}{0}{\sbst{\termthree}{\varone}{\termtwo}}&\leq\wei{\natone}{0}{\termthree}+\nfo{\varone}{\termthree}\cdot\wei{\natone}{0}{\termtwo};\\
      \wei{\natone}{0}{\sbst{\termfour}{\varone}{\termtwo}}&\leq\wei{\natone}{0}{\termfour}+\nfo{\varone}{\termfour}\cdot\wei{\natone}{0}{\termtwo}.
    \end{align*}
    But then:
    \begin{align*}
      \wei{\natone}{0}{\sbst{\termone}{\varone}{\termtwo}}&=\wei{\natone}{0}{\sbst{\termthree}{\varone}{\termtwo}}+\wei{\natone}{0}{\sbst{\termfour}{\varone}{\termtwo}}\\
         &\leq(\wei{\natone}{0}{\termthree}+\nfo{\varone}{\termthree}\cdot\wei{\natone}{0}{\termtwo})\\
         &\quad+(\wei{\natone}{0}{\termfour}+\nfo{\varone}{\termfour}\cdot\wei{\natone}{0}{\termtwo})\\
         &=(\wei{\natone}{0}{\termthree}+\wei{\natone}{0}{\termfour})\\
         &\quad+(\nfo{\varone}{\termthree}+\nfo{\varone}{\termfour})\cdot\wei{\natone}{0}{\termtwo}\\
         &=\wei{\natone}{0}{\termone}+\nfo{\varone}{\termone}\cdot\wei{\natone}{0}{\termtwo}.
    \end{align*}
  \item
    If $\termone$ is proved well-formed by
    $$
    \infer[\LLlmi]
    {\tjudg{\dm{\lconone},\dm{\varone},\im{\lcontwo},\cm{\lconthree},\am{\lconfour}}{\im{\termone}}}
    {\tjudg{\lcontwo,\cm{\lconthree},\am{\lconfour}}{\termone}}
    $$
    then $\varone$ does not occur free in $\termone$ and, as a consequence $\sbst{\termone}{\varone}{\termtwo}=\termone$. The thesis easily follows.
  \end{varitemize}
With the five lemmas above in our hands, it is possible to prove that if $\termone\redLLbas\termtwo$, then
$\twei{0}{\termone}>\twei{0}{\termtwo}$. Let's proceed by cases depending on how $\termone\redLLbas\termtwo$ is derived:
\begin{varitemize}
\item
  If $\termone=\ap{(\la{\varone}{\termthree})}{\termfour}$, then
  $\tjudg{\liconone,\varone,\dm{\lconone},\cm{\lcontwo},\am{\lconthree}}{\termthree}$
  and $\tjudg{\licontwo,\dm{\lconone},\cm{\lcontwo},\am{\lconthree}}{\termfour}$. We can apply Point \ref{point:sublemmaI}.
  (and Lemma~\ref{lemma:dfdni}) obtaining
  \begin{align*}
    \twei{0}{\termone}&=\wei{\df{0}{\termone}}{0}{\termone}=\wei{\df{0}{\termone}}{0}{\termthree}+\wei{\df{0}{\termone}}{0}{\termfour}+1\\
      &>\wei{\df{0}{\termone}}{0}{\termthree}+\wei{\df{0}{\termone}}{0}{\termfour}\geq\wei{\df{0}{\termone}}{0}{\sbst{\termthree}{\varone}{\termfour}}\\
      &=\wei{\df{0}{\termone}}{0}{\termtwo}\geq\wei{\df{0}{\termtwo}}{0}{\termtwo}=\twei{0}{\termtwo}.
  \end{align*}
\item
  If $\termone=\ap{(\ia{\varone}{\termthree})}{\im{\termfour}}$, then we can distinguish two sub-cases:
  \begin{varitemize}
  \item
    If $\tjudg{\liconone,\dm{\varone},\dm{\lconone},\cm{\lcontwo},\am{\lconthree}}{\termthree}$
    and $\tjudg{\lconfour,\cm{\lcontwo},\am{\lconthree}}{\termfour}$, then we can apply Point \ref{point:sublemmaII}.
    (and Lemma~\ref{lemma:dfdni}) obtaining
    \begin{align*}
      \twei{0}{\termone}&=\wei{\df{0}{\termone}}{0}{\termone}\\
      &=\wei{\df{0}{\termone}}{0}{\termthree}+\df{0}{\termone}\cdot\wei{\df{0}{\termone}}{0}{\termfour}+1\\
      &>\wei{\df{0}{\termone}}{0}{\termthree}+\df{0}{\termone}\cdot\wei{\df{0}{\termone}}{0}{\termfour}\\
      &\geq\wei{\df{0}{\termone}}{0}{\termthree}+\nfo{\varone}{\termthree}\cdot\wei{\df{0}{\termone}}{0}{\termfour}\\
      &\geq\wei{\df{0}{\termone}}{0}{\sbst{\termthree}{\varone}{\termfour}}\\
      &=\wei{\df{0}{\termone}}{0}{\termtwo}\geq\wei{\df{0}{\termtwo}}{0}{\termtwo}\\
      &=\twei{0}{\termtwo}.
    \end{align*}
  \item
    If $\tjudg{\liconone,\im{\varone},\dm{\lconone},\cm{\lcontwo},\am{\lconthree}}{\termthree}$
    and $\tjudg{\lconfour,\cm{\lcontwo},\am{\lconthree}}{\termfour}$, then we can apply Point \ref{point:sublemmaIII}.
    (and Lemma~\ref{lemma:dfdni}) obtaining
    \begin{align*}
      \twei{0}{\termone}&=\wei{\df{0}{\termone}}{0}{\termone}\\
      &=\wei{\df{0}{\termone}}{0}{\termthree}+\df{0}{\termone}\cdot\wei{\df{0}{\termone}}{0}{\termfour}+1\\
      &>\wei{\df{0}{\termone}}{0}{\termthree}+\df{0}{\termone}\cdot\wei{\df{0}{\termone}}{0}{\termfour}\\
      &\geq\wei{\df{0}{\termone}}{0}{\sbst{\termthree}{\varone}{\termfour}}\\
      &=\wei{\df{0}{\termone}}{0}{\termtwo}\geq\wei{\df{0}{\termtwo}}{0}{\termtwo}=\twei{0}{\termtwo}.
    \end{align*}
  \end{varitemize}
\item
  If $\termone=\ap{(\ca{\varone}{\termthree})}{\cm{\termfour}}$, then
  $\tjudg{\liconone,\dm{\lconone},\cm{\varone},\cm{\lcontwo},\am{\lconthree}}{\termthree}$ and $\tjudg{\cm{\lcontwo},\am{\lconthree}}{\termfour}$.
  We can apply Point \ref{point:sublemmaV}. (and Lemma~\ref{lemma:dfdni}) obtaining
  \begin{align*}
    \twei{0}{\termone}&=\wei{\df{0}{\termone}}{0}{\termone}=\wei{\df{0}{\termone}}{0}{\termthree}+1\\
      &>\wei{\df{0}{\termone}}{0}{\termthree}\geq\wei{\df{0}{\termone}}{0}{\sbst{\termthree}{\varone}{\termfour}}\\
      &=\wei{\df{0}{\termone}}{0}{\termtwo}\geq\wei{\df{0}{\termtwo}}{0}{\termtwo}=\twei{0}{\termtwo}.
  \end{align*}
\end{varitemize}
Now, remember that $\termone\redLL{\strnat{\natone}}\termtwo$ iff there are a $\natone$-context $\pctxone{\strnat{\natone}}$ and two terms
$\termthree$ and $\termfour$ such that $\termthree\redLLbas\termfour$, $\termone=\actx{\pctxone{\strnat{\natone}}}{\termthree}$,
and $\termtwo=\actx{\pctxone{\strnat{\natone}}}{\termfour}$. The thesis can be proved by induction on $\pctxone{\strnat{\natone}}$.
\end{proof}
The Fundamental Lemma easily follows:
\begin{proposition}[Fundamental Lemma]
For every natural number $n\in\NN$, the relation $\redLL{\strnat{\natone}}$ is strongly normalising.
\end{proposition}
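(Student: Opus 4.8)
The plan is to read off the result directly from Lemma~\ref{lemma:weightdecr}, which carries the entire computational burden. The first thing I would make explicit is that, for each fixed term $\termone\in\tms{\LLLwot}$ and each fixed depth $\natone$, the quantity $\twei{\natone}{\termone}=\wei{\df{\natone}{\termone}}{\natone}{\termone}$ is a genuine, unambiguous natural number: the duplicability factor $\df{\natone}{\termone}$ and the parametrised weight $\wei{\cdot}{\natone}{\cdot}$ are each guaranteed to exist and be unique (by the two existence-and-uniqueness lemmas for $\df{\cdot}{\cdot}$ and $\wei{\cdot}{\cdot}{\cdot}$, proved exactly as Lemma~\ref{lemma:psize}). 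Hence $\termone\mapsto\twei{\natone}{\termone}$ is a well-defined map $\tms{\LLLwot}\to\NN$, and in particular it is insensitive to which redex at depth $\natone$ one chooses to contract.

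I would then fix $\natone\in\NN$ and argue by contradiction. Suppose $\redLL{\strnat{\natone}}$ were not strongly normalising; then there would exist an infinite sequence $\termone_0\redLL{\strnat{\natone}}\termone_1\redLL{\strnat{\natone}}\termone_2\redLL{\strnat{\natone}}\cdots$, where each $\termone_i\in\tms{\LLLwot}$ since well-formedness is preserved along reduction (by the Subject-Reduction Proposition for $\LLLwot$), so that Lemma~\ref{lemma:weightdecr} applies at every step. That lemma gives $\twei{\natone}{\termone_i}>\twei{\natone}{\termone_{i+1}}$ for every $i$, and therefore a strictly decreasing infinite chain $\twei{\natone}{\termone_0}>\twei{\natone}{\termone_1}>\twei{\natone}{\termone_2}>\cdots$ of natural numbers. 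This contradicts the well-foundedness of $(\NN,<)$. Consequently no infinite $\redLL{\strnat{\natone}}$-sequence exists, which is precisely the claim that $\redLL{\strnat{\natone}}$ is strongly normalising.

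I expect no real obstacle to remain at this level: the proof is a one-line appeal to a strictly decreasing measure into a well-founded order, and the only points demanding a word of care are the two already noted above, namely that $\twei{\natone}{\cdot}$ is a legitimate $\NN$-valued invariant of terms (so the chain genuinely lives in a well-founded poset) and that every term in the sequence is again well-formed (so Lemma~\ref{lemma:weightdecr} is applicable at each step). The genuine difficulty lies upstream, in Lemma~\ref{lemma:weightdecr} itself: there the substitution sub-lemmas must be calibrated so that the unit contributed by contracting the redex strictly dominates any weight growth caused by duplication, which works only because the weight base $\df{\natone}{\termone}$ never increases under reduction (Lemma~\ref{lemma:dfdni}). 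Once that machinery is in place, the Fundamental Lemma is an immediate corollary.
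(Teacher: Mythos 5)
Your proof is correct and is exactly the argument the paper intends: the paper gives no explicit proof beyond "easily follows," and the intended reading is precisely your appeal to Lemma~\ref{lemma:weightdecr} as a strictly decreasing $\NN$-valued measure along any $\redLL{\strnat{\natone}}$-reduction sequence, together with well-foundedness of $(\NN,<)$. Your two explicit side-conditions (well-definedness of $\twei{\natone}{\cdot}$ and preservation of well-formedness so the lemma applies at every step) are the right ones to record.
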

\subsection{Normalisation and Confluence}
The Fundamental Lemma has a number of interesting consequences, which
make the dynamics of $\LLLwot$ definitely better-behaved than that of
$\LLwot$. The first such result we give is a Weak-Normalisation
Theorem:
\begin{theorem}[Normalisation]
  For every term $\termone\in\tms{\LLLwot}$ there is a normal form $\termtwo\in\tms{\LLLwot}$ 
  such that $\termone\redLLinf\termtwo$.
\end{theorem}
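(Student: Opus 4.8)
The plan is to normalise $\termone$ in an \emph{outside-in} fashion, depth by depth, and to build the required $\redLLinf$ derivation coinductively, following the mixed-system proof strategy (exhibiting a set of judgments $\setone$ consistent with the rules of Figure~\ref{fig:llwotid}, i.e. $\setone\subseteq\indrul{\sysone}(\coindrul{\sysone}(\setone))$). The key observation is that the coinductive rule for $\redLLinf$ asks for a \emph{finite} reduction $\termone\redLLwod^*\termtwo$ followed by a single $\redLLnext$ step, and $\redLLnext$ freezes the entire depth-$0$ skeleton of $\termtwo$, letting $\redLLinf$-reduction resume only underneath coinductive boxes. So the whole burden splits into: (i) reach, in finitely many steps, a term whose depth-$0$ part is already in final shape, and (ii) recurse inside its coinductive boxes.

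First I would invoke the Fundamental Lemma: since $\redLL{\strnat{0}}$ is strongly normalising, reducing $\termone$ repeatedly at depth $0$ terminates, yielding a finite reduction $\termone\redLL{\strnat{0}}^*\termone'$ with $\termone'\in\NFs{\strnat{0}}$. As $\redLL{\strnat{0}}\subseteq\redLLwod$, this supplies the finite prefix demanded by the coinductive rule. I then define the normal form $\termtwo=\mathsf{NF}(\termone)$ corecursively: leave the depth-$0$ skeleton of $\termone'$ untouched and replace the body $\termthree$ of every depth-$0$ coinductive box $\cm{\termthree}$ occurring in $\termone'$ by $\mathsf{NF}(\termthree)$. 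The judgment $\termone\redLLinf\termtwo$ is then witnessed by the coinductive rule with premises $\termone\redLLwod^*\termone'$ and $\termone'\redLLnext\termtwo$; the latter is derived by the congruence rules for $\redLLnext$ following the depth-$0$ structure of $\termone'$, each coinductive-box premise $\cm{\termthree}\redLLnext\cm{\mathsf{NF}(\termthree)}$ being discharged by the coinduction hypothesis $\termthree\redLLinf\mathsf{NF}(\termthree)$.

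The decisive point, and the reason the construction is a \emph{legal} derivation in the mixed system, is guardedness: along every branch only finitely many inductive $\redLLnext$ rules are applied before a coinductive rule instance is crossed again. This is guaranteed by Lemma~\ref{lemma:psize}: for a term $\termone'$ the depth-$0$ size $\psize{\termone'}{0}$ is a genuine natural number, i.e. the depth-$0$ skeleton is \emph{finite}, because the only construct raising the depth is the coinductive box and $\LLlmc$ is the only coinductive well-formation rule. Hence, between two consecutive coinductive steps, the $\redLLnext$ derivation halts after at most $\psize{\termone'}{0}$ inductive congruence steps, at which point it meets a coinductive box and the coinductive $\redLLinf$ rule fires. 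Thus every infinite branch crosses coinductive rule instances infinitely often, so the infinite tree built is a bona fide derivation in the sense of Figure~\ref{fig:llwotid}.

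It remains to check that $\termtwo$ is a term of $\tms{\LLLwot}$ and is a normal form. Well-formedness follows from the Proposition on preservation of well-formedness for the finite prefixes, together with the corecursive replacement of each box body $\termthree$ (well-formed in its own environment via rule $\LLlmc$) by the term $\mathsf{NF}(\termthree)$; since the replacement is confined to box bodies, it glues back into a derivation of $\tjudg{\conone}{\termtwo}$. That $\termtwo$ is normal is immediate: its depth-$0$ part coincides with that of $\termone'\in\NFs{\strnat{0}}$, hence carries no redex at depth $0$, while each coinductive box holds a normal form by corecursion, so no redex survives at any depth. I expect the main obstacle to be precisely the bookkeeping that recasts this corecursive recipe as a single consistent set $\setone$ of $\redLLinf$/$\redLLnext$ judgments and verifies $\setone\subseteq\indrul{\sysone}(\coindrul{\sysone}(\setone))$, while simultaneously maintaining that well-formedness is preserved under the box-wise replacement.
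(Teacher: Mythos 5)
Your proposal is correct and follows essentially the same route as the paper's own (very terse) proof: invoke the Fundamental Lemma to exhaust the depth-$0$ redexes in finitely many steps, then recurse corecursively inside the coinductive boxes via $\redLLnext$, observing that deeper reduction cannot disturb the already-normalised depth-$0$ skeleton. The extra detail you supply---the explicit corecursive definition of the normal form, the guardedness argument via the finiteness of $\psize{\cdot}{0}$, and the checks of well-formedness and normality---is exactly the bookkeeping the paper leaves implicit, and it is carried out correctly.
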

\begin{proof}
This is an immediate consequence of the Fundamental Lemma: for every
term $\termone$, first reduce (by $\redLLwod$) all redexes
at depth $0$, obtaining $\termtwo$, and then normalise all
the subterms of $\termtwo$ at depth $1$ (by $\redLLnext$).
Conclude by observing that reduction at higher depths does not
influence lower depths.
\end{proof}
The way the two modalities interact in $\LLLwot$ has effects
which go beyond normalisation. More specifically, the two relations
$\redLLwod$ and $\redLLnext$ do not interfere like in $\LLwot$,
and as a consequence, we get a Confluence Theorem:
\begin{theorem}[Strong Confluence]\label{theo:conf}
If $\termone\in\tms{\LLLwot}$, $\termone\redLLinf\termtwo$
and $\termone\redLLinf\termthree$, then there is
$\termfour\in\LLLwot$ such that
$\termtwo\redLLinf\termfour$ and
$\termthree\redLLinf\termfour$.
\end{theorem}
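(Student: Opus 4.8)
The plan is to derive confluence of the infinitary relation $\redLLinf$ from two ingredients that are available for $\LLLwot$ but not for $\LLwot$: confluence \emph{at each fixed depth}, supplied by the Fundamental Lemma, and the \emph{absence of interference} between surface reduction and the congruence $\redLLnext$, which is exactly the phenomenon that breaks confluence in $\LLwot$ and that the \FLL-discipline on coinductive boxes repairs. First I would record the structural observation that drives everything. Reading off the rules of Figure~\ref{fig:llwotid}, $\redLLnext$ is a pure congruence at depth $0$: it leaves the depth-$0$ skeleton of a term untouched and performs reductions only strictly inside coinductive boxes, where it switches to $\redLLinf$ and the depth increases. Dually, a surface step $\redLL{\strnat{0}}$ may duplicate, move, or erase entire coinductive boxes but---and this is where $\LLLwot$ departs from $\LLwot$---it can never lift the \emph{content} of such a box up to depth $0$: in a coinductive redex $\ap{(\ca{\varone}{\termone})}{\cm{\termtwo}}$, Condition~\ref{enum:fourth}. forces every free occurrence of $\varone$ in $\termone$ to sit under at least one coinductive box, so $\sbst{\termone}{\varone}{\termtwo}$ only ever places copies of $\termtwo$ at depth $\ge 1$. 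Hence surface reduction and $\redLLnext$ operate on disjoint depth regions, up to the wholesale copying/erasing of boxes.

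From this observation I would extract two commutation facts. (i) \emph{Finitary} reduction $\redLLwod^*$ is confluent on $\tms{\LLLwot}$: this is the standard Tait--Martin-L\"of parallel-reduction diamond, which is insensitive to duplication and goes through unchanged for the three box-aware $\beta$-rules. (ii) Surface reduction and $\redLLnext$ commute: if $\termone\redLL{\strnat{0}}\termfive$ and $\termone\redLLnext\termtwo$, then since $\termtwo$ has the same depth-$0$ skeleton as $\termone$ the same surface redex is available in $\termtwo$, and contracting it on both sides realigns the residual $\redLLnext$-work across the copied, moved, or erased boxes, yielding $\termseven$ with $\termtwo\redLL{\strnat{0}}\termseven$ and $\termfive\redLLnext\termseven$. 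This is precisely the diagram declared impossible for $\LLwot$ at the end of Section~\ref{sect:fininfdyn}.

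I would then assemble infinitary confluence, preferably through the already-established confluence of level-by-level reduction $\redLLinflbl$. Using the Fundamental Lemma---so that $\redLL{\strnat{\natone}}$ is strongly normalising at every depth $\natone$---together with (i) and (ii), one proves a \emph{postponement/standardisation} result: every $\redLLinf$-reduction ending in a normal form can be reorganised into a $\redLLinflbl$-reduction to the \emph{same} term (first exhaust depth $0$, which terminates by the Fundamental Lemma and is confluent by Newman's Lemma, local confluence of $\redLL{\strnat{0}}$ being the usual critical-pair check, then recurse one coinductive box deeper). Since $\redLLinflbl\subseteq\redLLinf$ and $\redLLinflbl$ is confluent, this makes the $\redLLinf$-normal form reachable from a given $\termone$ unique. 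Confluence follows formally: given $\termone\redLLinf\termtwo$ and $\termone\redLLinf\termthree$, the Normalisation Theorem supplies normal forms $\termtwo',\termthree'$ with $\termtwo\redLLinf\termtwo'$ and $\termthree\redLLinf\termthree'$; transitivity (Lemma~\ref{lemma:reflextrans}) gives $\termone\redLLinf\termtwo'$ and $\termone\redLLinf\termthree'$, uniqueness forces $\termtwo'=\termthree'=:\termfour$, and reflexivity closes the diagram.

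A more self-contained alternative is a direct coinductive tiling: decompose each of $\termone\redLLinf\termtwo$, $\termone\redLLinf\termthree$ as a finite surface prefix followed by $\redLLnext$, join the prefixes by (i), commute the $\redLLnext$-parts past the remaining surface steps by (ii), and recurse inside each coinductive box by the coinductive hypothesis, packaging the result as $(\indrul{\LLwot}\circ\coindrul{\LLwot})$-consistency of the candidate joining relation, in the style of Section~\ref{sect:basicproperties}. I expect the main obstacle to be fact (ii) together with the bookkeeping of this (co)induction: one must track exactly how a surface step copies and erases the very boxes in which $\redLLnext$ is acting, and check that the depth strictly increases at each recursive call---equivalently, that every infinite branch crosses an $\LLlmc$-style step infinitely often---so that the constructed joining reductions are legal infinite derivations. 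Everything else reduces to standard rewriting or to direct appeals to the Fundamental Lemma, confluence of $\redLLinflbl$, Normalisation, and Lemma~\ref{lemma:reflextrans}.
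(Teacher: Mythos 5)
Your ``direct coinductive tiling'' alternative is, in essence, the paper's actual proof: the paper first establishes Noninterference (Lemma~\ref{lemma:noninterference}), Postponement (Lemma~\ref{lemma:commut}), and confluence of surface reduction $\redLL{\strnat{0}}^*$ (Proposition~\ref{prop:confluenceone}), and then defines a joining term by a guarded corecursive function on \emph{pairs} of $\redLLinf$- (resp.\ $\redLLnext$-) derivations, exactly as you sketch: factor each $\redLLinf$ as a finite prefix followed by $\redLLnext$, push the non-surface finitary steps into the $\redLLnext$ part (Lemma~\ref{lemma:onenext} plus Postponement), join the surface prefixes, tile across with Noninterference, and recurse one coinductive box deeper. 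Your identification of the key obstacles---the commutation of $\redLL{\strnat{0}}$ with $\redLLnext$ and the guardedness of the corecursion---is on target, and the structural observation you start from (a coinductive redex only ever places copies of its argument at depth $\geq 1$) is precisely what makes Noninterference and Postponement go through.

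The route you present as primary, however, has genuine gaps. First, it needs \emph{transitivity} of $\redLLinf$ (to pass from $\termone\redLLinf\termtwo\redLLinf\termtwo'$ to $\termone\redLLinf\termtwo'$), and you cite Lemma~\ref{lemma:reflextrans} for it; that lemma asserts only \emph{reflexivity}. Transitivity of a coinductively defined infinitary reduction is not free---proving it requires essentially the same postponement/noninterference machinery you are trying to avoid---and the paper in fact derives uniqueness of normal forms as a \emph{corollary} of confluence, so your argument inverts the logical order and must establish the uniqueness property independently, via a standardisation claim ($\redLLinf$ to normal form factors through $\redLLinflbl$) that is itself of the same strength as the theorem and is only asserted. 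Second, your ingredient (i) is false as stated: finitary reduction $\redLLwod^*$ is \emph{not} confluent on $\tms{\LLLwot}$, because in a coinductive redex $\ap{(\ca{\varone}{\termfive})}{\cm{\termsix}}$ the bound variable may occur infinitely often (at depths $\geq 1$), so a single step can give a redex inside $\termsix$ infinitely many residuals, which no finite reduction sequence from the other side can contract. This is exactly why the paper restricts Proposition~\ref{prop:confluenceone} to $\redLL{\strnat{0}}^*$, where duplication is finite, and handles everything deeper through the coinductively defined $\redLLnext$. Your assembly happens to use only the depth-$0$ instance, so the error is containable, but as written the Tait--Martin-L\"of appeal does not survive the infinitary setting.
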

The path to confluence requires some auxiliary lemmas:
\begin{lemma}\label{lemma:commam}
If
$\tjudg{\liconone,\dm{\lconone},\cm{\lcontwo},\am{\varone},\am{\lconthree}}{\termone}$
and $\tjudg{\am{\lconthree}}{\termtwo}$,
$\termone\redLLinf\termthree$, and $\termtwo\redLLinf\termfour$, then
$\sbst{\termone}{\varone}{\termtwo}\redLLinf\sbst{\termthree}{\varone}{\termfour}$.
\end{lemma}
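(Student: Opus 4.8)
The statement is a congruence property of the coinductively defined relations $\redLLinf$ and $\redLLnext$ (Figure~\ref{fig:llwotid}) with respect to substitution for an $\updownarrow$-variable, so the plan is to prove it by coinduction, following the consistency recipe of Section~\ref{sect:basicproperties}: I will exhibit a single set $\setone$ of $\redLLinf$- and $\redLLnext$-judgments, check $\setone\subseteq\indrul{}(\coindrul{}(\setone))$ for the system of Figure~\ref{fig:llwotid}, and read off the thesis. Since $\redLLinf$ is defined mutually with $\redLLnext$, $\setone$ must contain judgments of both shapes. I put into $\setone$: (i) every $\sbst{\termone}{\varone}{\termtwo}\redLLinf\sbst{\termthree}{\varone}{\termfour}$ arising exactly as in the hypotheses ($\termone\redLLinf\termthree$, $\termtwo\redLLinf\termfour$, plus the well-formedness assumptions); (ii) every $\termsix\redLLnext\sbst{\termthree}{\varone}{\termfour}$ where $\termfive\redLLnext\termthree$, $\termtwo\redLLwod^{*}\termseven$ with $\termseven\redLLnext\termfour$, and $\termsix$ is obtained from $\sbst{\termfive}{\varone}{\termtwo}$ by reducing, through $\termtwo\redLLwod^{*}\termseven$, precisely those copies of $\termtwo$ sitting at coinductive depth $0$ (under no coinductive box), the deeper copies being left as $\termtwo$; and (iii) all judgments already derivable in the system of Figure~\ref{fig:llwotid}, which discharges leaf obligations exactly as in the proof of Lemma~\ref{lem:substlinllwot}.

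Two preliminary facts are needed, both routine. First, finite reduction commutes with substitution: from $\termone\redLLwod^{*}\termfive$ one gets $\sbst{\termone}{\varone}{\termtwo}\redLLwod^{*}\sbst{\termfive}{\varone}{\termtwo}$, using the usual substitution-commutation identity on preterms. Second, and crucially, the variable $\varone$ has only \emph{finitely many} free occurrences at coinductive depth $0$ in any term: a well-formed $\LLLwot$ derivation has every infinite branch crossing $\LLlmc$ infinitely often, so by K\"onig's lemma only finitely many positions lie outside all coinductive boxes; equivalently this count is bounded by $\psize{\termone}{0}$, which is finite by Lemma~\ref{lemma:psize}. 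This finiteness is exactly what makes the passage $\sbst{\termfive}{\varone}{\termtwo}\redLLwod^{*}\termsix$ in clause (ii) a \emph{finite} reduction.

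The consistency check then splits in two. For a clause-(i) judgment, decompose $\termone\redLLinf\termthree$ as $\termone\redLLwod^{*}\termfive\redLLnext\termthree$ and $\termtwo\redLLinf\termfour$ as $\termtwo\redLLwod^{*}\termseven\redLLnext\termfour$, and form the finite prefix $\sbst{\termone}{\varone}{\termtwo}\redLLwod^{*}\sbst{\termfive}{\varone}{\termtwo}\redLLwod^{*}\termsix$ (the first stretch lifts the body reduction, the second advances the finitely many depth-$0$ copies of $\termtwo$ to $\termseven$). The coinductive formation rule for $\redLLinf$ now reduces the obligation to $\termsix\redLLnext\sbst{\termthree}{\varone}{\termfour}$, a clause-(ii) member of $\setone$; thus clause-(i) judgments lie in $\coindrul{}(\setone)$. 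For a clause-(ii) judgment I follow the finite depth-$0$ skeleton of the derivation of $\termfive\redLLnext\termthree$: the variable rule on $\varone$ reaches a depth-$0$ occurrence, where $\termsix$ carries $\termseven$ against $\termfour$ and the obligation $\termseven\redLLnext\termfour$ is derivable (clause (iii)); the variable rule on $\vartwo\neq\varone$, the application rule, the three abstraction rules, and the congruence rule for inductive boxes are depth-$0$-preserving congruences producing smaller clause-(ii) obligations; and the congruence rule for coinductive boxes, applied to a subterm $\termfive'$ with $\termfive'\redLLinf\termthree'$, leaves \emph{unreduced} copies of $\termtwo$ beneath it, so its premise is $\sbst{\termfive'}{\varone}{\termtwo}\redLLinf\sbst{\termthree'}{\varone}{\termfour}$, a clause-(i) member of $\setone$ (hence, as just shown, of $\coindrul{}(\setone)$). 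Because the depth-$0$ skeleton is finite, this is a finite inductive tree whose leaves lie in $\coindrul{}(\setone)$, so clause-(ii) judgments lie in $\indrul{}(\coindrul{}(\setone))$. Well-formedness is maintained throughout, since $\updownarrow$-patterns persist across both $\LLlmi$ and $\LLlmc$ and Lemma~\ref{lemma:lllslarbitrary} keeps the substituted terms well-formed.

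The main obstacle is the bookkeeping forced by the asymmetry between the two relations: $\redLLinf$ allows a finite burst of reduction at the current coinductive depth whereas $\redLLnext$ is a pure congruence there, so the finite prefix of $\termtwo\redLLinf\termfour$ cannot be postponed and must be applied once and for all to the depth-$0$ copies of $\termtwo$ before descending---legitimate only because those copies are finite in number---whereas the deeper copies must be kept as $\termtwo$ so that the coinductive step can re-invoke the full relation $\termtwo\redLLinf\termfour$ under each coinductive box. Pinning down this partially-reduced witness $\termsix$ precisely is the delicate point; the remaining cases are the routine congruence and substitution-commutation verifications. It is exactly this controlled interaction of $\redLLwod$ and $\redLLnext$---the very interference that fails in the unrestricted $\LLwot$---that the design of $\LLLwot$ makes available.
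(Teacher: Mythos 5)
Your proof is correct and follows the same route as the paper, which for this lemma records only the one-line justification ``this is a coinduction on $\termone$'': you carry out that coinduction as a consistency check against the mixed formal system of Figure~\ref{fig:llwotid}, and your key bookkeeping device---advancing the finitely many depth-$0$ copies of $\termtwo$ to the $\redLLnext$-stage witness $\termseven$ while leaving the copies under coinductive boxes untouched so that the full hypothesis $\termtwo\redLLinf\termfour$ can be re-invoked there---is exactly what is needed to make the guardedness of the definition work. The finiteness appeal via $\psize{\termone}{0}$ and the discharge of derivable leaves through the inclusion $\cd{\LLwot}\subseteq\setone$ (as in Lemma~\ref{lem:substlinllwot}) are both sound, so this is a faithful elaboration of the paper's intended argument.
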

\begin{proof}
This is a coinduction on $\termone$.
\end{proof}
\begin{lemma}\label{lemma:commcm}
If
$\tjudg{\liconone,\dm{\lconone},\cm{\lcontwo},\cm{\varone},\am{\lconthree}}{\termone}$
and $\tjudg{\am{\lcontwo},\am{\lconthree}}{\termtwo}$, 
$\termone\redLLnext\termthree$, and $\termtwo\redLLinf\termfour$, then
$\sbst{\termone}{\varone}{\termtwo}\redLLnext\sbst{\termthree}{\varone}{\termfour}$.
\end{lemma}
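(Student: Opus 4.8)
The plan is to proceed by induction on the derivation of $\termone\redLLnext\termthree$. Since $\redLLnext$ is generated \emph{inductively} by the rules of Figure~\ref{fig:llwotid} — the only premise phrased in terms of $\redLLinf$ being the one for coinductive boxes, which from the point of view of $\redLLnext$ acts as a leaf — this derivation is a finite tree: it mirrors the depth-$0$ skeleton of $\termone$, which is finite because any term not containing a coinductive box is finite. The observation that drives the whole argument is that, by the well-formation rules of $\LLLwot$, a variable marked $\cm{\varone}$ can never be derived as a bare variable: among $\LLvarl$, $\LLvard$, $\LLvara$ there is no variable rule with a $\csym$-pattern in focus. Hence every free occurrence of $\varone$ in $\termone$ lies under a coinductive box, and the substitution for $\varone$ can only act underneath some $\csym$.

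First I would dispatch the structural cases. If $\termone$ is a variable, it cannot be $\varone$ by the observation above, so $\termone=\termthree$ is a variable distinct from $\varone$, the substitution is the identity on it, and $\varone\redLLnext\varone$ reproves the thesis. If the last rule is the one for application, abstraction (in any of the forms $\la{\varfour}{\cdot}$, $\ia{\varfour}{\cdot}$, $\ca{\varfour}{\cdot}$) or inductive box, then the $\csym$-zone containing $\cm{\varone}$ is threaded unchanged into the environments of the immediate subterms — this is immediate from $\LLapp$, $\LLlaml$, $\LLlami_1$, $\LLlami_2$, $\LLlamc$ and $\LLlmi$, each of which propagates any $\csym$-pattern to its premises. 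I can thus apply the induction hypothesis to each subterm (carrying $\termtwo\redLLinf\termfour$ along unchanged), and reassemble with the matching $\redLLnext$-rule, using that capture-avoiding substitution commutes with each constructor and that $\varfour\neq\varone$, $\varfour\notin\FV{\termtwo}$.

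The genuinely interesting case — where $\redLLnext$ and $\redLLinf$ meet — is the coinductive box, $\termone=\cm{\termfive}$ and $\termthree=\cm{\termsix}$ with premise $\termfive\redLLinf\termsix$. Here the rule $\LLlmc$ is decisive: to type $\cm{\termfive}$ with $\cm{\varone}$ in the environment, its whole environment must be a $\dsym/\csym/\asym$-zone (so the linear and $\isym$-part is empty), and in the premise every $\csym$-pattern, $\cm{\varone}$ included, is relabelled as an $\asym$-pattern. Thus $\termfive$ is typed in an all-$\asym$ environment with $\varone$ marked $\am{\varone}$, and $\termtwo$ is already typed in the all-$\asym$ environment $\am{\lcontwo},\am{\lconthree}$; this is precisely the shape required by Lemma~\ref{lemma:commam}. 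Invoking Lemma~\ref{lemma:commam} on $\termfive\redLLinf\termsix$ and $\termtwo\redLLinf\termfour$ gives $\sbst{\termfive}{\varone}{\termtwo}\redLLinf\sbst{\termsix}{\varone}{\termfour}$, and one last application of the $\redLLnext$-rule for coinductive boxes yields $\cm{\sbst{\termfive}{\varone}{\termtwo}}\redLLnext\cm{\sbst{\termsix}{\varone}{\termfour}}$, which is the thesis since substitution commutes with $\csym$.

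I expect the main obstacle to be purely the environment bookkeeping: verifying in each structural case that $\varone$ stays marked $\cm{\varone}$ (so the induction hypothesis applies with a judgment of the same shape) and that $\termtwo$ remains well-formed in an all-$\asym$ environment throughout, so that the hand-off to Lemma~\ref{lemma:commam} at the coinductive-box leaves is legitimate. Once the relabelling discipline of $\LLlmc$ — erasing linear and $\isym$-patterns and turning $\csym$ into $\asym$ — is pinned down, no weight or size computations are needed: the statement is a clean structural induction resting on the companion Lemma~\ref{lemma:commam}.
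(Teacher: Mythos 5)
Your proposal is correct and takes essentially the same route as the paper, whose proof is the one-liner ``again a coinduction on the structure of $\termone$, exploiting Lemma~\ref{lemma:commam}'': the substance in both cases is a structural traversal of the depth-$0$ skeleton of $\termone$, with Lemma~\ref{lemma:commam} invoked exactly at coinductive boxes, where $\LLlmc$ turns $\cm{\varone}$ into $\am{\varone}$ and hands off to the $\redLLinf$-premise. Your reframing of the coinduction as a plain induction on the finite inductive-rule segment of the $\redLLnext$-derivation (treating $\redLLinf$-premises as leaves) is a legitimate and slightly more explicit packaging of the same argument, and your observation that $\LLLwot$ has no variable rule for $\csym$-patterns correctly disposes of the base case.
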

\begin{proof}
This is again a coinduction on the structure of $\termone$, exploiting 
Lemma~\ref{lemma:commam}
\end{proof}
\begin{lemma}[Noninterference]\label{lemma:noninterference}
If $\termone\in\tms{\LLLwot}$, $\termone\redLL{\strnat{0}}\termtwo$
and $\termone\redLLnext\termthree$, then there is
$\termfour\in\LLLwot$ such that
$\termtwo\redLLnext\termfour$ and
$\termthree\redLL{\strnat{0}}\termfour$.
\end{lemma}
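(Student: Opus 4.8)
The plan is to exploit the fact that $\redLLnext$ never contracts a redex at depth $0$: by its rules it only traverses the structure of a term, firing actual redexes solely underneath coinductive boxes, i.e. at depth $\geq 1$ (where it hands control to $\redLLinf$). Thus $\redLL{\strnat{0}}$ and $\redLLnext$ operate at disjoint depths, and their only possible interaction is through the substitution created by the depth-$0$ step. Concretely, I would write $\termone=\actx{\pctxone{\strnat{0}}}{\termfive}$ and $\termtwo=\actx{\pctxone{\strnat{0}}}{\termsix}$ with $\termfive\redLLbas\termsix$, and argue by induction on the (necessarily finite) $\strnat{0}$-context $\pctxone{\strnat{0}}$.

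When $\pctxone{\strnat{0}}$ is of the form $\ap{\pctxtwo{\strnat{0}}}{\termnine}$, $\ap{\termnine}{\pctxtwo{\strnat{0}}}$, $\la{\varone}{\pctxtwo{\strnat{0}}}$, $\ia{\varone}{\pctxtwo{\strnat{0}}}$, $\ca{\varone}{\pctxtwo{\strnat{0}}}$ or $\im{\pctxtwo{\strnat{0}}}$, inverting the syntax-directed derivation of $\termone\redLLnext\termthree$ splits $\termthree$ into the same top constructor applied to $\redLLnext$-reducts of the immediate subterms; one of these subterms still contains the residual context, so the induction hypothesis applied there produces a joining term, and reassembling under the top constructor yields $\termfour$. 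Two points make this work: placing a $\strnat{0}$-step in any of these positions keeps its level in $\strnat{0}$ (in particular crossing $\im{\cdot}$ only prepends a $0$), and the case $\cm{\pctxtwo{\strnat{0}}}$ simply cannot occur, since a coinductive box would prepend a $1$ and take the level out of $\strnat{0}$.

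The crux is the base case $\pctxone{\strnat{0}}=\emctx$, in which $\termone=\termfive$ is itself a redex. Because $\redLLnext$ cannot contract it, inverting $\termone\redLLnext\termthree$ shows $\termthree$ is a redex of the \emph{same} kind, with immediate constituents obtained by reducing those of $\termone$. For a linear redex $\ap{(\la{\varone}{\termseven})}{\termeight}$ one gets $\termseven\redLLnext\termseven'$ and $\termeight\redLLnext\termeight'$; for an inductive redex $\ap{(\ia{\varone}{\termseven})}{\im{\termeight}}$ likewise $\termseven\redLLnext\termseven'$ and $\termeight\redLLnext\termeight'$; and for a coinductive redex $\ap{(\ca{\varone}{\termseven})}{\cm{\termeight}}$ one gets $\termseven\redLLnext\termseven'$ but $\termeight\redLLinf\termeight'$, since the box rule for $\redLLnext$ appeals to $\redLLinf$. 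In each case I take $\termfour$ to be the contractum of $\termthree$, so that $\termthree\redLL{\strnat{0}}\termfour$ holds by firing the surviving depth-$0$ redex, and it remains only to show $\termtwo\redLLnext\termfour$, i.e. $\sbst{\termseven}{\varone}{\termeight}\redLLnext\sbst{\termseven'}{\varone}{\termeight'}$.

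For the coinductive redex this is exactly Lemma~\ref{lemma:commcm}, whose environment hypotheses are handed to us by the well-formedness of the redex (rules $\LLlamc$ and $\LLlmc$ give $\cm{\varone}$ in $\termseven$ and an $\asym$-only environment for $\termeight$). For the linear and inductive redexes the bound variable occurs only at depth $0$ (conditions~\ref{enum:first}--\ref{enum:third} of the occurrence analysis, inductive boxes not increasing the depth), and I would dispatch these by a separate \emph{depth-$0$ substitution lemma}: if $\varone$ occurs in $\termseven$ only outside coinductive boxes, $\termseven\redLLnext\termseven'$ and $\termeight\redLLnext\termeight'$, then $\sbst{\termseven}{\varone}{\termeight}\redLLnext\sbst{\termseven'}{\varone}{\termeight'}$. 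This is proved by coinduction on $\termseven$ in the style of Lemmas~\ref{lemma:commam} and~\ref{lemma:commcm}: the application, abstraction and inductive-box cases descend by the coinductive hypothesis, the leaf case $\termseven=\varone$ collapses to the hypothesis $\termeight\redLLnext\termeight'$, and the decisive coinductive-box case $\termseven=\cm{\termnine}$ is immediate, because $\varone$ does not occur in $\termnine$, so both substitutions act trivially and the original derivation $\termseven\redLLnext\termseven'$ is reused verbatim. I expect this depth-$0$ substitution lemma to be the main obstacle, the delicate point being that $\redLLnext$ matches every depth-$0$ occurrence of $\varone$ in $\termseven$ to one in $\termseven'$ through the axiom $\varone\redLLnext\varone$, so that a single reduction $\termeight\redLLnext\termeight'$ may be replicated at all of them (in the $\dsym$-marked case there may be several); well-foundedness is ensured by the finiteness of the depth-$0$ skeleton of any term, coinductive boxes being the sole source of infinity.
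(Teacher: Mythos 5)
Your proposal is correct and follows essentially the same route as the paper: a structural induction/coinduction down the (finite) depth-$0$ skeleton to the contracted redex, with congruence cases reassembled from the inductive hypothesis and the coinductive-redex base case discharged by the substitution-commutation lemma (Lemma~\ref{lemma:commcm}; the paper's citation of Lemma~\ref{lemma:commam} there is evidently a slip). Your explicit depth-$0$ substitution lemma for the linear and inductive redexes fills in base cases the paper leaves implicit, and its proof sketch is sound since $\dsym$- and $\isym$-marked variables indeed never occur under coinductive boxes.
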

\begin{proof}
By coinduction on the structure of $\termone$. Some interesting cases:
\begin{varitemize}
\item
  If $\termone=\termfive\termsix$ and $\termfive\redLL{\strnat{0}}\termseven$,
  then $\termtwo=\termseven\termsix$. By definition, 
  $\termfive\redLLnext\termeight$ and $\termsix\redLLnext\termnine$,
  where $\termthree=\termeight\termnine$. By induction hypothesis,
  there is $\termten$ such that $\termseven\redLLnext\termten$
  and $\termeight\redLL{\strnat{0}}\termten$. The term we are looking for,
  then, is just $\termfour=\termten\termnine$. Indeed,
  $\termtwo=\termseven\termsix\redLLnext\termten\termnine$ and,
  other other hand, $\termthree=\termeight\termnine\redLL{\strnat{0}}\termten\termnine$.
\item
  If $\termone=\ap{(\ca{\varone}{\termfive})}{\cm{\termsix}}$
  and $\termtwo=\sbst{\termfive}{\varone}{\termsix}$, then 
  $\termthree$ is in the form $\ap{(\ca{\varone}{\termeight})}\cm{\termnine}$ where
  $\termfive\redLLnext\termeight$ and $\termsix\redLLinf\termnine$, and
  then we can apply Lemma~\ref{lemma:commam} obtaining that
  $\termtwo\redLLnext\termfour=\sbst{\termeight}{\varone}{\termnine}$.
  On the other hand, $\termthree\redLL{\strnat{0}}\termfour$.
\end{varitemize}
\end{proof}
But there is even more: $\redLL{\strnat{0}}$ and $\redLLnext$ 
commute.
\begin{lemma}[Postponement]\label{lemma:commut}
If $\termone\in\tms{\LLLwot}$, $\termone\redLLnext\termtwo\redLL{\strnat{0}}\termthree$, 
then there is $\termfour\in\LLLwot$ such that
$\termone\redLL{\strnat{0}}\termfour\redLLnext\termthree$.
\end{lemma}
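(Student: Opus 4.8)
The plan is to turn to advantage the fact that $\redLLnext$ never fires a redex of its own. Inspecting the rules of Figure~\ref{fig:llwotid}, one sees that $\redLLnext$ is a pure congruence on everything lying at depth $0$: it keeps every application, every abstraction and every inductive box in place, reducing only the \emph{contents} of coinductive boxes, and there through $\redLLinf$. Hence $\termone\redLLnext\termtwo$ forces $\termone$ and $\termtwo$ to possess literally the same skeleton outside coinductive boxes, differing only in that, wherever $\termtwo$ exhibits a box $\cm{\termeight}$, the term $\termone$ exhibits a box $\cm{\termeight'}$ with $\termeight'\redLLinf\termeight$. Since the step $\termtwo\redLL{\strnat{0}}\termthree$ fires a redex located at depth $0$, that redex already sits, at the very same position, in $\termone$; the whole game is to fire it first in $\termone$, obtaining the desired $\termfour$, and then to re-enact the postponed deep reductions to reach $\termthree$.

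Formally, I would write $\termtwo=\actx{\pctxone{\strnat{0}}}{\termfive}$ with $\pctxone{\strnat{0}}$ the $\strnat{0}$-context pinpointing the fired redex $\termfive$, and induct on $\pctxone{\strnat{0}}$. The constructors that a $\strnat{0}$-context may traverse are applications, the three abstractions, and inductive boxes---never coinductive ones, as the digit $1$ is excluded from $\strnat{0}$. In every such congruence case $\termone$ carries the same head constructor as $\termtwo$, the induction hypothesis applies to the immediate subterm hosting the redex, and, both $\redLL{\strnat{0}}$ and $\redLLnext$ being congruences for that constructor, the two reductions reassemble; the only bookkeeping is that firing under an inductive box keeps the level in $\strnat{0}$, since prefixing a level with $0$ creates no $1$.

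The base case $\pctxone{\strnat{0}}=\emctx$ is the crux and splits along the three basic-reduction shapes. For a linear redex $\termfive=\ap{(\la{\varone}{\termseven})}{\termeight}$ or an inductive one $\termfive=\ap{(\ia{\varone}{\termseven})}{\im{\termeight}}$, the observation above gives in $\termone$ the matching redex with body $\termseven'\redLLnext\termseven$ and argument $\termeight'\redLLnext\termeight$; firing it produces $\termfour=\sbst{\termseven'}{\varone}{\termeight'}$, and since the bound variable occurs at depth $0$ (outside every box, resp.\ inside a single inductive box), a substitution lemma for $\redLLnext$, established by the same coinduction as Lemmas~\ref{lemma:commam} and~\ref{lemma:commcm}, yields $\sbst{\termseven'}{\varone}{\termeight'}\redLLnext\sbst{\termseven}{\varone}{\termeight}=\termthree$. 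For a coinductive redex $\termfive=\ap{(\ca{\varone}{\termseven})}{\cm{\termeight}}$ the term $\termone$ offers $\ap{(\ca{\varone}{\termseven'})}{\cm{\termeight'}}$ with $\termseven'\redLLnext\termseven$ but now $\termeight'\redLLinf\termeight$, as the coinductive-box rule of $\redLLnext$ descends through $\redLLinf$; firing gives $\termfour=\sbst{\termseven'}{\varone}{\termeight'}$, whose copies of $\termeight'$ all land at depth $\geq 1$ inside coinductive boxes, and this is exactly the configuration handled by Lemma~\ref{lemma:commcm}, which delivers $\sbst{\termseven'}{\varone}{\termeight'}\redLLnext\sbst{\termseven}{\varone}{\termeight}=\termthree$.

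I expect the coinductive redex to be the only real obstacle: firing it both duplicates and pushes deeper an argument that is itself mid-reduction, so one must reconcile the $\redLLinf$ reduction inherited on $\termeight'$ with the $\redLLnext$ reductions occurring in the body $\termseven'$. This reconciliation, however, has been isolated beforehand as Lemma~\ref{lemma:commcm} (which itself rests on Lemma~\ref{lemma:commam}); with it in hand the postponement argument is reduced to matching head constructors, tracking that fired levels stay within $\strnat{0}$, and citing the right substitution lemma in each of the three base cases.
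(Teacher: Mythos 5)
Your proof is correct and follows essentially the same route as the paper: a structural induction over the depth-$0$ skeleton (which $\redLLnext$ leaves untouched), with the coinductive redex as the only delicate base case, resolved by the substitution-commutation lemma --- and you correctly invoke Lemma~\ref{lemma:commcm} there, whereas the paper's text cites Lemma~\ref{lemma:commam}, apparently a slip. The auxiliary substitution lemma for $\redLLnext$ that you posit for the linear and inductive redex cases is indeed needed and provable by the same coinduction; the paper's sketch leaves those cases implicit.
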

\begin{proof}
Again a coinduction on the structure of $\termone$. Some interesting
cases:
\begin{varitemize}
\item
  We can exclude the case in which $\termone=\cm{\termfive}$, because
  in that case also $\termtwo$ would be a coinductive boxes, and 
  coinductive boxes are $\redLL{\strnat{0}}$-normal forms.
\item
  If $\termone=\ap{(\ca{\varone}{\termfive})}{\cm{\termsix}}$,
  $\termtwo=\ap{(\ca{\varone}{\termseven})}{\cm{\termeight}}$
  (where $\termfive\redLLnext\termseven$ and $\termsix\redLLinf\termeight$)
  and $\termthree=\sbst{\termseven}{\varone}{\termeight}$, 
  then Lemma~\ref{lemma:commam} ensures that 
  $\termfour=\sbst{\termfive}{\varone}{\termsix}$ is such
  that $\termfour\redLLnext\termthree$.
\end{varitemize}
\end{proof}
One-step reduction is not in general confluent in infinitary $\lambda$-calculi.
However, $\redLL{\strnat{0}}$ indeed is:
\begin{proposition}\label{prop:confluenceone}
If $\termone\in\tms{\LLLwot}$, $\termone\redLL{\strnat{0}}^*\termtwo$, and
$\termone\redLL{\strnat{0}}^*\termthree$, then there is $\termfour$ such
that $\termtwo\redLL{\strnat{0}}^*\termfour$ and $\termthree\redLL{\strnat{0}}^*\termfour$.
\end{proposition}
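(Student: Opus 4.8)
The plan is to invoke Newman's Lemma. The Fundamental Lemma tells us that $\redLL{\strnat{0}}$ is strongly normalising, so it suffices to establish \emph{local} confluence (weak Church--Rosser): whenever $\termone\redLL{\strnat{0}}\termtwo$ and $\termone\redLL{\strnat{0}}\termthree$, the two one-step reducts can be joined by $\redLL{\strnat{0}}^*$. Given strong normalisation, Newman's Lemma upgrades local confluence to full confluence of $\redLL{\strnat{0}}^*$, which is exactly the statement. Since local confluence concerns only two fixed single steps and, as we shall see, finitely many residuals, the fact that $\termone$ may be an infinite term causes no difficulty.

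To prove local confluence, I would fix the two contracted redexes $R_1$ (yielding $\termtwo$) and $R_2$ (yielding $\termthree$) and reason by their relative positions in $\termone$. As in the ordinary $\lambda$-calculus, the three basic reductions are \emph{left-linear} and their left-hand sides never genuinely overlap: the root of a redex is always an application whose function is an abstraction, and the three kinds are distinguished by the shape of the argument ($\la{\varone}{\cdot}$ needs nothing, $\ia{\varone}{\cdot}$ needs an inductive box, $\ca{\varone}{\cdot}$ needs a coinductive box), so a single application carries at most one redex at its root. Hence $R_1$ and $R_2$ are either disjoint or nested, and there are no critical pairs in the term-rewriting sense. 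If $R_1$ and $R_2$ occupy disjoint positions, contracting each leaves a unique residual of the other, and $\termtwo$ and $\termthree$ are joined in a single $\redLL{\strnat{0}}$ step on each side. If they are nested, say $R_2$ lies inside $R_1$, then contracting $R_1$ leaves finitely many residuals of $R_2$ (one for each relevant occurrence of the bound variable), while contracting $R_2$ first merely rewrites the body or the argument of $R_1$ and leaves an $R_1$-redex of the same kind; the two outcomes are then joined by contracting the residuals. Finiteness of the residual set is guaranteed because $\nfo{\varone}{\cdot}$ is a genuine natural number for terms (this is already implicit in the well-definedness of $\df{0}{\cdot}$), so that only finitely many copies ever arise.

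The one point requiring care---and the place where the design of $\LLLwot$ is used---is that every residual produced this way must again be a \emph{depth-$0$} redex, so that the joining steps really belong to $\redLL{\strnat{0}}$. This follows from the well-formation rules' control over where a bound variable may occur. For a linear redex $\ap{(\la{\varone}{\termfive})}{\termsix}$ the variable $\varone$ sits in linear position, and for an inductive redex $\ap{(\ia{\varone}{\termfive})}{\im{\termsix}}$ it sits either in linear position or under a single inductive box; in both cases it remains at depth $0$, so the argument is substituted at depth $0$ and any redex residing in it keeps its residuals at depth $0$. For a coinductive redex $\ap{(\ca{\varone}{\termfive})}{\cm{\termsix}}$ the argument $\termsix$ lies beneath a coinductive box, hence at depth $\geq 1$; consequently a second depth-$0$ redex can never be nested inside that argument, and can only meet the coinductive redex inside the depth-$0$ skeleton of $\termfive$, away from the (depth-$\geq 1$) occurrences of $\varone$, so that exactly one residual survives, again at depth $0$. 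With these observations the nested cases close just as in the disjoint case, local confluence holds, and Newman's Lemma finishes the proof. The main obstacle, then, is not the abstract rewriting step but this bookkeeping: verifying that the well-formation discipline forces all residuals of two depth-$0$ redexes to stay at depth $0$ and to be finite in number.
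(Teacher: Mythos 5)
Your proposal is correct and is in essence a fleshed-out version of the paper's own (one-line) proof, which simply appeals to ``standard techniques'' while singling out exactly the point you identify as the crux: the bound variable of an inductive abstraction $\ia{\varone}{\termone}$ occurs only finitely many times in $\termone$, so a contracted redex leaves only finitely many residuals and the joining reductions stay inside $\redLL{\strnat{0}}^*$. Your choice of Newman's Lemma is one legitimate instantiation of those standard techniques, and it is available here since the Fundamental Lemma (strong normalisation of $\redLL{\strnat{0}}$) has already been established at this point in the paper.
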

\begin{proof}
This can be proved with standard techniques, keeping in mind that
in an inductive abstraction $\ia{\varone}{\termone}$, the variable
$\varone$ occurs finitely many times in $\termone$.
\end{proof}
The last two lemmas are of a techincal nature, but can be proved
by relatively simple arguments:
\begin{lemma}\label{lemma:reflextrans}
Both $\redLLnext$ and $\redLLinf$ are reflexive.
\end{lemma}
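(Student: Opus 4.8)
The plan is to prove both reflexivity statements \emph{simultaneously}, by the consistency principle recalled in Section~\ref{sect:basicproperties}. The two relations are defined together by the mixed formal system of Figure~\ref{fig:llwotid}, in which $\redLLinf$ is the single coinductively-interpreted judgment form and $\redLLnext$ the inductive one; write $\sysone$ for this system and $\cd{\sysone}$ for its set of derivable judgments. I would exhibit the diagonal set
$$
\setone=\{\sjudg{\termone\redLLnext\termone}\midd\termone\in\tms{\LLLwot}\}\cup\{\sjudg{\termone\redLLinf\termone}\midd\termone\in\tms{\LLLwot}\}
$$
and show that it is consistent with $\sysone$, i.e.\ that $\setone\subseteq\indrul{\sysone}(\coindrul{\sysone}(\setone))$. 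Since $\cd{\sysone}$ is the greatest fixpoint of $\indrul{\sysone}\circ\coindrul{\sysone}$, this yields $\setone\subseteq\cd{\sysone}$, which is exactly reflexivity of both relations.

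For a judgment of the form $\sjudg{\termone\redLLinf\termone}$ the argument is immediate: applying the coinductive rule of Figure~\ref{fig:llwotid} with middle term equal to $\termone$, the left premise $\termone\redLLwod^*\termone$ holds by the empty reduction and the right premise $\sjudg{\termone\redLLnext\termone}$ belongs to $\setone$; hence this judgment already lies in $\coindrul{\sysone}(\setone)$, and a fortiori in $\indrul{\sysone}(\coindrul{\sysone}(\setone))$. Symmetrically, this shows that $\sjudg{\termtwo\redLLinf\termtwo}\in\coindrul{\sysone}(\setone)$ for \emph{every} term $\termtwo$, a fact I will reuse below.

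For a judgment $\sjudg{\termone\redLLnext\termone}$ I would build a finite inductive derivation following the syntactic shape of $\termone$: at an application, a (linear, inductive, or coinductive) abstraction, or an inductive box, apply the corresponding $\redLLnext$ rule and recurse on the $\redLLnext$-subgoals for the immediate subterms; at a variable use the axiom $\sjudg{\varone\redLLnext\varone}$; and at a coinductive box $\cm{\termtwo}$ use the box rule, whose premise $\sjudg{\termtwo\redLLinf\termtwo}$ is a leaf supplied by $\coindrul{\sysone}(\setone)$. The delicate point is that this inductive part is genuinely \emph{finite}, and this is exactly where well-formedness of $\termone$ is used: because $\LLlmc$ is the only coinductive rule of the well-formation system, no infinite branch of the syntax tree of a term can avoid coinductive boxes forever; equivalently, the region of $\termone$ lying at depth $0$ (not under any coinductive box) is a finitely-branching tree with no infinite path, hence finite by König's lemma. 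The derivation just described stops precisely at the variables and coinductive boxes bounding this depth-$0$ region, so it uses only finitely many inductive rules and therefore witnesses $\sjudg{\termone\redLLnext\termone}\in\indrul{\sysone}(\coindrul{\sysone}(\setone))$.

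The main obstacle is precisely this finiteness argument. It is tempting to attempt a plain structural induction on $\termone$, but this fails since terms are infinite objects; the correct move is to observe that the coinductive reading of $\LLlmc$ confines infinity to the \emph{interior} of coinductive boxes, so that the surface of any term is finite. Reflexivity then arises from a single layer of coinduction per coinductive box interleaved with finitely many inductive steps in between, which is exactly the alternation structure captured by $\indrul{\sysone}\circ\coindrul{\sysone}$.
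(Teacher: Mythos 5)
Your proof is correct, and it is a genuine elaboration rather than a reproduction: the paper dismisses this lemma with the single word ``Easy'' and gives no argument. The consistency strategy $\setone\subseteq\indrul{\sysone}(\coindrul{\sysone}(\setone))$ is exactly the proof method the paper announces in Section~2.1, and you correctly isolate the one non-trivial point --- that the inductive portion of the derivation of $\sjudg{\termone\redLLnext\termone}$ is finite, which holds for well-formed terms (and would fail for arbitrary preterms such as the solution of $\termone=\ap{\termone}{\termone}$) because the coinductive reading of $\LLlmc$ forces every infinite path of the syntax tree to cross a coinductive box, so the depth-$0$ skeleton is finite by K\"onig's lemma. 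One small remark: since the lemma is also invoked earlier for $\LLwot$ (not just $\LLLwot$), you should state the diagonal set over $\tms{\LLwot}$, but your argument applies verbatim there as the well-formation system of $\LLwot$ likewise has $\LLlmc$ as its only coinductive rule.
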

\begin{proof}
Easy.
\end{proof}
\begin{lemma}\label{lemma:onenext}
If $\termone\in\tms{\LLLwot}$ and $\termone\redLL{\strnat{\natone}}\termtwo$
(where $\natone\geq 1$), then $\termone\redLLnext\termtwo$.
\end{lemma}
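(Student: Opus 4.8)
The plan is to unfold the single step $\termone \redLL{\strnat{\natone}} \termtwo$ into the context in which the basic redex is contracted, and then argue by induction on the structure of that context. The guiding intuition is that a step of depth $\natone \geq 1$ necessarily takes place underneath at least one coinductive box, and crossing a coinductive box is precisely the point at which the rules defining $\redLLnext$ (Figure~\ref{fig:llwotid}) hand control over to the more permissive, coinductively defined relation $\redLLinf$.

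First I would record an elementary fact that serves as the base of the descent, namely that $\redLLwod \subseteq \redLLinf$. This is immediate from the definition of $\redLLinf$: if $\termone \redLLwod \termtwo$ then in particular $\termone \redLLwod^* \termtwo$, and $\termtwo \redLLnext \termtwo$ holds by reflexivity (Lemma~\ref{lemma:reflextrans}), so the $\redLLinf$-rule yields $\termone \redLLinf \termtwo$. Next, by definition of $\redLL{\strnat{\natone}}$ there are a $\bsone$-context $\pctxone{\bsone}$ with $\bsone \in \strnat{\natone}$ (so $\bsone$ contains exactly $\natone \geq 1$ occurrences of $1$) and terms $\termthree \redLLbas \termfour$ with $\termone = \actx{\pctxone{\bsone}}{\termthree}$ and $\termtwo = \actx{\pctxone{\bsone}}{\termfour}$. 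I would then prove, by induction on the (finite spine of the) context, the statement: whenever $\bsone$ contains at least one $1$ and $\termthree \redLLbas \termfour$, one has $\actx{\pctxone{\bsone}}{\termthree} \redLLnext \actx{\pctxone{\bsone}}{\termfour}$. The empty context is vacuously excluded, since $\varepsilon \notin \strnat{\natone}$ for $\natone \geq 1$.

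For the congruence-shaped cases $\pctxone{\bsone} = \ap{\pctxtwo{\bsone}}{\termfive}$, $\ap{\termfive}{\pctxtwo{\bsone}}$, $\la{\varone}{\pctxtwo{\bsone}}$, $\ia{\varone}{\pctxtwo{\bsone}}$, and $\ca{\varone}{\pctxtwo{\bsone}}$, the number of $1$'s is unchanged, so the induction hypothesis applies to the subcontext and the matching $\redLLnext$-rule closes the case; in the two application cases one also uses $\termfive \redLLnext \termfive$ from Lemma~\ref{lemma:reflextrans}. The inductive-box case $\pctxone{\bsone} = \im{\pctxtwo{\bsone'}}$ (with $\bsone = 0\cdot\bsone'$) again leaves the count of $1$'s unchanged, so the $\redLLnext$-rule for $\isym$-boxes reduces it directly to the induction hypothesis on $\pctxtwo{\bsone'}$.

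The crucial case is the coinductive box $\pctxone{\bsone} = \cm{\pctxtwo{\bsone'}}$ with $\bsone = 1\cdot\bsone'$: here the $\redLLnext$-rule for $\csym$-boxes requires a premise of the form $\actx{\pctxtwo{\bsone'}}{\termthree} \redLLinf \actx{\pctxtwo{\bsone'}}{\termfour}$, and this is exactly where I stop recursing and invoke the base fact. Since $\pctxtwo{\bsone'}$ is a $\bsone'$-context and $\termthree \redLLbas \termfour$, we have $\actx{\pctxtwo{\bsone'}}{\termthree} \redLLwod \actx{\pctxtwo{\bsone'}}{\termfour}$, hence $\actx{\pctxtwo{\bsone'}}{\termthree} \redLLinf \actx{\pctxtwo{\bsone'}}{\termfour}$ by $\redLLwod \subseteq \redLLinf$, regardless of how many $1$'s remain in $\bsone'$ (in particular the subcase $\bsone' = \varepsilon$, i.e. $\natone = 1$, is covered uniformly). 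Applying the rule then gives the desired $\redLLnext$. I expect this last case to be the only substantive point of the argument: it is where the asymmetry between the two box modalities surfaces, namely that passing a coinductive box switches the proof obligation from $\redLLnext$ to the strictly weaker $\redLLinf$. Everything else is routine bookkeeping ensuring that the induction hypothesis is applied only to subcontexts that still carry at least one $1$, which holds because every constructor other than the coinductive box preserves the number of $1$'s.
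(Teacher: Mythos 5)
Your proof is correct and fleshes out exactly the argument the paper leaves implicit: the paper's entire proof is ``Easy, given Lemma~\ref{lemma:reflextrans}'', and your use of reflexivity --- both to establish $\redLLwod\subseteq\redLLinf$ at the coinductive-box crossing and to handle the untouched subterm in the application cases --- is precisely the intended role of that lemma. The induction on the finite spine of the $\bsone$-context, with the coinductive box as the one case where the obligation switches from $\redLLnext$ to $\redLLinf$, is the right and essentially unique way to carry this out.
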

\begin{proof}
Easy, given Lemma~\ref{lemma:reflextrans}
\end{proof}
We are finally able to prove the Confluence Theorem:
\begin{proof}[Proof of Theorem~\ref{theo:conf}]
\newcommand{\pairone}{\alpha} \newcommand{\pairtwo}{\beta} We will
show how to associate a term $\termfour=\funone(\pairone)$ to any pair
in the form
$\pairone=(\termone\redLLinf\termtwo,\termone\redLLinf\termthree)$ or
in the form
$\pairone=(\termone\redLLnext\termtwo,\termone\redLLnext\termthree)$.
The function $\funone$ is defined by coinduction on the structure of
the two proofs in $\pairone$. This will be done in such a way that in
the first case $\termtwo\redLLinf\funone(\pairone)$ and
$\termthree\redLLinf\funone(\pairone)$, while in the second case
$\termtwo\redLLnext\funone(\pairone)$ and
$\termthree\redLLnext\funone(\pairone)$. If $\pairone$ is
$(\termone\redLLinf\termtwo,\termone\redLLinf\termthree)$, then by
definition, $\termone\redLLwod^*\termfive\redLLnext\termtwo$ and
$\termone\redLLwod^*\termsix\redLLnext\termthree$.  Exploiting
Lemma~\ref{lemma:onenext}, Lemma~\ref{lemma:reflextrans}, and
Lemma~\ref{lemma:commut}, one obtains that there exist $\termseven$ and
$\termeight$ such that
$\termone\redLL{\strnat{0}}^*\termseven\redLLnext\termtwo$ and
$\termone\redLL{\strnat{0}}^*\termeight\redLLnext\termthree$.  By
Proposition~\ref{prop:confluenceone}, one obtains that there is
$\termnine$ with $\termseven\redLL{\strnat{0}}^*\termnine$ and
$\termeight\redLL{\strnat{0}}\termnine$. By repeated application of
Lemma~\ref{lemma:noninterference} and Lemma~\ref{lemma:reflextrans},
one can conclude there are $\termten$ and $\termeleven$ such that
$\termtwo\redLL{\strnat{0}}^*\termten$, $\termnine\redLLnext\termten$,
$\termnine\redLLnext\termeleven$ and
$\termthree\redLL{\strnat{0}}^*\termeleven$.  Now, let $\funone(\pairone)$ be
just
$\funone(\termnine\redLLnext\termten,\termnine\redLLnext\termeleven)$.
If, on the other hand, $\pairone$ is
$(\termone\redLLnext\termtwo,\termone\redLLnext\termthree)$, we can
define $\funone$ by induction on the proof of the two statements
where, however, we are only interested in the last thunk of
inductive rule instances. This is done in a natural way. As an
example, if $\termone$ is an application $\ap{\termfour}{\termfive}$,
then clearly $\termtwo$ is $\ap{\termsix}{\termseven}$
and $\termthree$ is $\ap{\termeight}{\termnine}$, where
$\termfour\redLLnext\termsix$, $\termfour\redLLnext\termeight$
$\termfive\redLLnext\termseven$, and 
$\termfive\redLLnext\termnine$; moreover, $\funone(\pairone)$
is the term 
$\ap{\funone(\termfour\redLLnext\termsix,\termfour\redLLnext\termeight)}
   {\funone(\termfive\redLLnext\termseven,\termfive\redLLnext\termnine)}$.
Notice how the function $\funone$ is well defined, being a guarded
recursive function on sets defined as greatest fixed points. 
\end{proof}
Confluence and Weak Normalisation together imply that normal forms are
unique:
\begin{corollary}[Uniqueness of Normal Forms]
  Every term $\termone\in\tms{\LLLwot}$ has a unique normal form.
\end{corollary}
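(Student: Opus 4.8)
The plan is to derive the corollary purely from the two headline results of this section: the Normalisation Theorem, which supplies \emph{existence} of a normal form, and the Strong Confluence Theorem (Theorem~\ref{theo:conf}), which will force \emph{uniqueness}. The only genuinely new ingredient needed is a small auxiliary observation about how normal forms behave under $\redLLinf$.

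For existence, given any $\termone\in\tms{\LLLwot}$, the Normalisation Theorem immediately yields a normal form $\termtwo$ with $\termone\redLLinf\termtwo$. The heart of the argument is then the auxiliary claim that \emph{a normal form reduces under $\redLLinf$ only to itself}: if $\termone$ is a normal form and $\termone\redLLinf\termtwo$, then $\termone\syneq\termtwo$. To see this, I would unfold $\termone\redLLinf\termtwo$ according to Figure~\ref{fig:llwotid} as $\termone\redLLwod^*\termthree\redLLnext\termtwo$. Since $\termone$ is a $\BB^*$-normal form it contains no redex at any level, so no basic reduction can fire and $\termthree\syneq\termone$; it then remains to show that $\termone\redLLnext\termtwo$ forces $\termtwo\syneq\termone$. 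This I would prove by coinduction on $\termone$, matching against the rules for $\redLLnext$.

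All the congruence cases (variable, application, the three kinds of abstraction, and the inductive box) are routine: the relevant immediate subterms of a normal form are again normal forms, so the coinduction hypothesis applies and the reduct is syntactically unchanged. The one delicate case is the coinductive box $\termone\syneq\cm{\termfive}$, where the rule demands $\termfive\redLLinf\termsix$ with $\termtwo\syneq\cm{\termsix}$; here the point is that $\termfive$, being the body of a coinductive box inside a $\BB^*$-normal form, is itself a normal form (any redex of $\termfive$ at level $\bsone$ would be a redex of $\cm{\termfive}$ at level $1\cdot\bsone$), so the coinduction hypothesis gives $\termfive\syneq\termsix$ and hence $\termtwo\syneq\termone$. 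Reflexivity of $\redLLinf$ (Lemma~\ref{lemma:reflextrans}) confirms that this trivial reduction is indeed possible, so the claim exactly identifies all $\redLLinf$-reducts of a normal form.

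With these in hand, uniqueness is immediate. Suppose $\termone\redLLinf\termtwo$ and $\termone\redLLinf\termthree$ with both $\termtwo$ and $\termthree$ normal forms. Strong Confluence (Theorem~\ref{theo:conf}) provides $\termfour$ with $\termtwo\redLLinf\termfour$ and $\termthree\redLLinf\termfour$; applying the auxiliary claim to the normal forms $\termtwo$ and $\termthree$ gives $\termtwo\syneq\termfour\syneq\termthree$. I expect the main obstacle to be precisely the coinductive-box case of the auxiliary claim: one must check that normality propagates through coinductive boxes and that the coinductive reading of $\redLLinf$ leaves a redex-free box body untouched, which is the one spot where the infinitary nature of the calculus could in principle allow a nontrivial reduct.
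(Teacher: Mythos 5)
Your proposal is correct and follows exactly the route the paper intends: the corollary is stated there as an immediate consequence of the Normalisation Theorem (existence) and Strong Confluence (uniqueness), and your auxiliary claim that a normal form $\redLLinf$-reduces only to itself — proved by coinduction, with the coinductive-box case handled via the observation that normality propagates to box bodies — is precisely the routine detail the paper leaves implicit.
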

Strangely enough, even if every term $\termone$ has a normal form
$\termtwo$, it is not guaranteed to reduce to it in every reduction
order, simply because one can certainly choose to ``skip'' certain
depths while normalising. In this sense, level-by-level sequences are
normalising: they are not allowed to go to depth $\natone>\nattwo$ if
there is a redex at depth $\nattwo$.
\subsection{Expressive Power}\label{sect:ll4sexppow}
At this point, one may wonder whether $\LLLwot$ is well-behaved simply
because its expressive power is simply too low. Although at present we
are not able to characterise the class of functions which can be
represented in it, we can already give some interesting observations
on its expressive power.

First of all, let us observe that the inductive fragment of $\LLLwot$
(i.e. the subsystem obtained by dropping coinductive boxes) is
complete for polynomial time computable functions on finite strings,
although inputs need to be represented as Church numerals for the
result to hold: this is a consequence of polytime completeness for
\SLL~\cite{Lafont04TCS}.

About the ``coinductive'' expressive power of $\LLLwot$, we note that
a form of guarded recursion can indeed be expressed, thanks to the
very liberal exponential discipline of $\FLL$. Consider the term
$\termone=\ca{\varone}{\la{\vartwo}{\ca{\varthree}{\vartwo\cm{(\varone\varone\varthree\cm{\varthree})}}}}$
and define $X$ to be $\ap{\termone}{\cm{\termone}}$. One can easily
verify that for any (closed) term $\termtwo$, the term
$X\termtwo\cm{\termtwo}$ reduces in three steps to
$\termtwo\cm{(X\termtwo\cm{\termtwo})}$.  In other words, then, $X$ is
indeed a fixed point combinator which however requires the argument
functional to be applied to it twice. 

The two observations above, taken together, mean that $\LLLwot$ is,
at least, capable of expressing all functions from $(\BB^*)^\infty$
to $(\BB^*)^\infty$ such that for each $\natone$, the
string at position $\natone$ in the output stream can be computed
in polynomial time from the string at position $\natone$ in the
input stream. Whether one could go (substantially) beyond this is an interesting
problem that we leave for further work. One cannot, however,
go too far, since the \FLL\ exponential discipline imposes that
all typable stream functions are causal, i.e., for each $\natone$,
the value of the output at position $\natone$ only depends
on the input positions \emph{up to} $\natone$, at least if one
encodes streams as in Section~\ref{sect:llwotspl}.

\section{Further Developments}
We see this work only as a first step towards understanding how linear
logic can be useful in taming the complexity of infinitary rewriting
in the context of the $\lambda$-calculus. There are at least three
different promising research directions that the results in this paper
implicitly suggest. All of them are left for future work, and are
outside the scope of this paper.
\paragraph*{Semantics} 
It would be interesting to generalise those semantic frameworks which
work well for ordinary linear logic and $\lambda$-calculi to $\LLwot$. One example
is the so-called relational model of linear logic, in which formulas
are interpreted as sets and morphisms are interpreted as binary
relations. Noticeably, the exponential modality is interpreted by
forming power multisets. Since the only kind of infinite regression we
have in $\LLwot$ is the one induced by coinductive boxes, it seems
that the relation model should be adaptable to the calculus described
here. Similarly, game semantics~\cite{AJM00IC} and the geometry of
interaction~\cite{Girard88} seem to be well-suited to model infinitary
rewriting. 
\paragraph*{Types} 
The calculus $\LLwot$ is untyped. Incepting types into it would first
of all be a way to ensure the absence of deadlocks (consider, as an
example, the term
$\ap{(\ia{\varone}{\varone})}{(\cm{\termone})}$). The natural
candidate for a framework in which to develop a theory of types for
$\LLwot$ is the one of recursive types, given their inherent relation
with infinite computations. 
Another challenge could be adapting linear
dependent types~\cite{DalLago11LICS} to an infinitary setting.
\paragraph*{Implicit Complexity} 
One of the most interesting applications of the linearisation of
$\Lwot$ as described here could come from implicit complexity, whose
aim is characterising complexity classes by logical systems and
programming languages without any reference to machine models nor to
combinatorial concepts (e.g. polynomials). We think, in particular,
that subsystems of $\LLwot$ would be ideal candidates for
characterising, e.g. type-2 polynomial time operators. This, however,
would require a finer exponential discipline, e.g. an inductive-coinductive
generalisation of the bounded exponential modality~\cite{GSS92}.
\section{Related Work}
Although this is arguably the first paper explicitly combining ideas
coming from infinitary rewriting with resource-consciousness in the
sense of linear logic, some works which are closely related to ours,
but having different goals, have recently appeared.

First of all, one should mention Terui's work on computational
ludics~\cite{Terui11TCS}: there, designs (i.e. the ludics' counterpart
to proofs) are meant to both capture syntax (proofs) and semantics
(functions), and are thus infinitary in nature. However, the overall
objective in~\cite{Terui11TCS} is different from ours: while we want
to stay as close as possible to the $\lambda$-calculus so as to
inspire the design of techniques guaranteeing termination of programs
dealing with infinite data structures, Terui's aim is to better
understand usual, finitary, computational complexity. Technically, the
main difference is that we focus on the exponentials and let them be
the core of our approach, while computational ludics is strongly based
on focalisation: time passes whenever polarity changes.

Another closely related work is a recent one by
Mazza~\cite{Mazza12LICS}, that shows how the ordinary, finitary,
$\lambda$-calculus can be seen as the metric completion of a much
weaker system, namely the affine $\lambda$-calculus. Again, the main
commonalities with this paper are on the one hand the presence of
infinite terms, and on the other a common technical background, namely
that of linear logic. Again, the emphasis is different: we,
following~\cite{Kennaway97TCS}, somehow aim at going \emph{beyond}
finitary $\lambda$-calculus, while Mazza's focus is on the
subrecursive, finite world: he is not even concerned with reduction of
infinite length.

If one forgets about infinitary rewriting, linear logic has already
been shown to be a formidable tool to support the process of isolating
classes of $\lambda$-terms having good, quantitative normalisation
properties. One can, for example, cite the work by Baillot and
Terui~\cite{Baillot09IC} or the one by Gaboardi and Ronchi
here~\cite{Gaboardi07CSL}. This paper can be seen as a natural step
towards transferring these techniques to the realm of infinitary
rewriting.

Finally, among the many works on type-theoretical approaches to
termination and productivity, the closest to ours is certainly the
recent contribution by Cave et al.~\cite{Cave14}: our treatment of the
coinductive modality is very reminiscent to their way of handling LTL
operators.
\section*{Acknowledgment}
The author would like to thank Patrick Baillot, Marco Gaboardi and
Olivier Laurent for useful discussions about the topics of this
paper. The author is partially supported by the ANR project 12IS02001
PACE and the ANR project 14CE250005 ELICA.
\bibliographystyle{abbrv}
\bibliography{biblio} 
\end{document}